\newtheorem{lemma}{Lemma}
\newcommand{\CHANGE}[1]{#1} %blue
\newcommand{\MODIF}[1]{#1} %blue
\newcommand{\ADDED}[1]{#1} %red
\begin{document}

% Title.
% ------
\title{Subgraph-based filterbanks for graph signals}
%\title{Empirical Mode Decomposition for Graph Signals}

%
% Single address.
% ---------------
\author{Nicolas~Tremblay,
        Pierre~Borgnat,~\IEEEmembership{Member,~IEEE}
        \thanks{
  Authors are with the 
  Univ Lyon, Ens de Lyon, Univ Claude Bernard, CNRS, Laboratoire de Physique, F-69342 Lyon, France. 
  Email: firstname.lastname@ens-lyon.fr. Work supported by the ANR-14-CE27-0001 GRAPHSIP grant. A preliminary approach of this work was presented at 
 Asilomar Conference on Signals, Systems, and Computers 2015~\cite{ASILOMAR2015_CoSub}. 
}}

% The paper headers
%\markboth{IEEE Transactions on signal processsing,~Vol.~XXX, No.xxx, Feb.~2016}%
%{Tremblay \MakeLowercase{\textit{et al.}}: Subgraph-based filterbanks for graph signals}
% The only time the second header will appear is for the odd numbered pages
% after the title page when using the twoside option.
% 
% *** Note that you probably will NOT want to include the author's ***
% *** name in the headers of peer review papers.                   ***
% You can use \ifCLASSOPTIONpeerreview for conditional compilation here if
% you desire.

%
% For example:
% ------------
%\address{School\\
%   Department\\
%   Address}
%
% Two addresses (uncomment and modify for two-address case).
% ----------------------------------------------------------
%\twoauthors
%  {A. Author-one, B. Author-two\sthanks{Thanks to XYZ agency for funding.}}
%   {School A-B\\
%   Department A-B\\
%   Address A-B}
%  {C. Author-three, D. Author-four\sthanks{The fourth author performed the work
%   while at ...}}
%   {School C-D\\
%   Department C-D\\
%   Address C-D}
%

%\ninept
%
\maketitle
\begin{abstract}
We design a critically-sampled \ADDED{compact-support} \ADDED{bi}orthogonal transform for graph signals, via graph filterbanks.  
% Instead of keeping one every two nodes in the filterbank's necessary sampling operations, 
% we base our design on a partition of the graph in connected subgraphs. Coarsening is then achieved 
% by defining one ``supernode'' for each subgraph: \textcolor{blue}{the coarsened graph becomes naturally the subgraph 
% connectivity graph.} 
Instead of partitioning the nodes in two sets so as to remove one every two nodes in the filterbank downsampling operations, 
the design is based on a partition of the graph in connected subgraphs. 
Coarsening is achieved by defining one ``supernode'' for each subgraph and
the edges for this coarsened graph derives from the connectivity between the subgraphs. 
%the coarsened graph becomes naturally the subgraph connectivity graph. 
Unlike the ``one every two nodes'' downsampling on bipartite graphs, this coarsening operation does not have an 
exact formulation in the graph Fourier domain. Instead, we rely on the local Fourier bases of each subgraph to 
define filtering operations. We apply successfully this method to decompose graph signals, and show promising 
performance on compression and denoising. 
\end{abstract}
\begin{IEEEkeywords}
Graph signal processing, filterbanks, Laplacian pyramid, community detection, multiresolution, wavelet.
\end{IEEEkeywords}

%\IEEEpeerreviewmaketitle

%
\section{Introduction}
\label{sec:intro}
Graphs are a modeling tool suitable to many applications involving networks, may they be 
social, neuronal, or driven from computer science, molecular biology~\cite{newman_book2010}... Data on these graphs 
may be defined as a scalar (or vector) on each of its nodes, forming a so-called \textit{graph signal}~\cite{shuman_SPMAG2013}. 
In a sense, a graph signal is the extension of the 1-D discrete classical signal (where the signal is defined 
on the circular graph, each node having exactly two neighbors) to any arbitrary discrete topology where each node 
may have an arbitrary number of neighbors. Temperature measured by a sensor network, age of the individuals in a 
social network, Internet traffic in a router network, etc. are all examples of such graph signals. 

Adapting classical signal processing tools to signals defined on graphs \MODIF{has} raised significant interests in the 
last few years~\cite{shuman_SPMAG2013,sandryhaila_SPMAG2014}. For instance, the graph Fourier transform, 
the fundamental building block of signal processing, has several possible definitions, either based on the 
diagonalisation of one of the Laplacian matrices~\cite{hammond_ACHA2011}, 
or based on Jordan's decomposition of the adjacency matrix~\cite{sandryhaila_TSP2013,sandryhaila_ICASSP2013}. 
Building upon this graph Fourier transform, authors have defined different sampling and interpolation 
procedures~\cite{anis_arxiv2015_long,chen_TSP2015,wang_TSP2015,Puy_ARXIV2015,Chen_arxiv2016}, windowed Fourier 
transform~\cite{shuman_SSP2012,Shuman_ACHA2016}, graph empirical mode decomposition~\cite{tremblay_EUSIPCO2013}, 
different wavelet transforms, including 
spectral graph wavelets~\cite{hammond_ACHA2011,shuman_TSP2015,leonardi_TSP2013}, diffusion wavelets~\cite{coifman_ACHA2006}, 
and wavelets defined via filterbanks~\cite{narang_TSP2012,sakiyama_TSP2014,nguyen_TSP2015,ekambaram_GLOBALSIP2013}. 
Among the applications of graph signal processing, one may cite works on fMRI data~\cite{behjat_EMBC2014}, 
on multiscale community detection~\cite{tremblay_TSP2014}, image compression~\cite{narang_SSP2012}, etc. 
In fact, graph signal 
processing tools are general enough to deal with many types of irregular data~\cite{sandryhaila_SPMAG2014}.

Graph filterbanks using downsampling %, be they critically sampled or oversampled, 
have been initially defined for bipartite graphs~\cite{narang_TSP2012} 
because: i)~Bipartite graphs, by 
definition, contain two sets of nodes that are natural candidates for the sampling operations; 
ii)~Downsampling followed by upsampling (which forces to zero the signal on one of the two sets of nodes) 
can be exactly written as a filter in the graph Fourier space. %\MODIF{This enables writing} exact 
%anti-aliasing equations 
%for the low-pass and high-pass filters~\cite{narang_TSP2012}.
This enables to write exact anti-aliasing equations 
for the low-pass and high-pass filters to \CHANGE{cancel} the spectral folding 
phenomenon due to sampling~\cite{narang_TSP2012}. 
However,
for arbitrary graphs, one needs to decompose the graph in a (non-unique) sum of bipartite 
graphs~\cite{narang_TSP2013,sakiyama_TSP2014,nguyen_TSP2015}, and analyze each of them separately. 
Another solution for arbitrary graph is based on downsampling according to
the polarity of the graph Fourier mode of highest frequency~\cite{shuman_ARXIV2013}.

We propose a significantly different way of defining filterbanks. Instead of trying to find an 
exact equivalent of both the decimation operator, hereafter $\bm{(\downarrow)}$, and a 
filtering operator $\bm{C}$, we directly 
define a decimated filtering operator $\bm{L}=\bm{(\downarrow)}\bm{C}$; 
following here the notations of~\cite{strang_book1996}, 
 where $\bm{L}$ is not to be confused with the Laplacian 
operator of the graph, noted $\bm{\mathcal{L}}$. Consider the 1-D straight-line 
graph where each node has two neighbors, and a partition of this graph in subgraphs of pairs of adjacent nodes. 
The classical Haar low-pass (resp. high-pass) channel samples one node per subgraph and defines on it 
the local average (resp. difference) of the signal. By analogy, we consider a partition of the graph 
in connected subgraphs, not necessarily 
of same size. \MODIF{Creating one ``supernode'' per subgraph, the low-pass channel (resp. high-pass channels)
defines on it the local, i.e., over the subgraph, average (resp. differences) of the signal.}  
The coarsened graphs on which those downsampled signals are defined, are then derived from the connectivity
between the subgraphs: two supernodes are linked if there are edges between the associated subgraphs. 
%\textcolor{blue}{To define the 
%coarsened graph structures on which those downsampled signals are defined, the supernodes are then wired 
%together correspondingly to the subgraph connectivity they represent. }

With this approach, we design a critically-sampled, \ADDED{compact-support} \ADDED{bi}orthogonal filterbank 
that is valid for any partition in connected subgraphs. Depending on the application at hand, 
one has the choice on how to detect such partitions. For compression and denoising, an adequate way 
is to use a partition in communities, i.e. groups of nodes more connected with themselves than with the rest 
of the network~\cite{fortunato_PhyRep2010}. This community structure is indeed linked to the low frequencies 
of graph signals~\cite{von2007tutorial,tremblay_TSP2014}. 
\CHANGE{
For hierarchical clustering trees, multiresolution bases on graphs have been explored 
in~\cite{gavish2010multiscale, murtagh_2007,Lee_2008,irion2015applied}. 
As a difference here, we not only take into account a hierarchical clustering in groups, but also the 
local intra-cluster topology in each group when defining the analysis atoms.
}
%
%Our proposition extends graph multiresolution ideas 
%from~\cite{gavish2010multiscale, murtagh_2007,Lee_2008,irion2015applied} based on hierarchical 
%clustering trees: we not only consider the natural notion of scale given by a hierarchical clustering tree, 
%we also define a transform that takes into account the local intra-cluster topology at each scale. }  

%and this can be seen by plotting the first few graph Fourier modes. 
%In fact, spectral graph clustering~\cite{von2007tutorial} %\cite{vonluxburg_StatComp2011} 
%has been a historic way of detecting communities in a graph. 

%The article is organized as follows. 
Section~\ref{sec:GFT} recalls the definition of the graph Fourier 
transform we use. In Section~\ref{sec:state_art}, after detailing the difficulties to extend classical 
filterbanks to graph signals,
we discuss the state-of-the-art of graph filterbanks.
The main contribution is in Section~\ref{sec:design},
first discussed as an analogy to the Haar filterbank, before presenting fully the proposed graph filterbank design.  
Section~\ref{sec:com_detect} proposes how to obtain a relevant partition in connected subgraph. 
Section~\ref{sec:applications} shows applications, in compression and denoising. 
We conclude in Section~\ref{sec:conclusion}.

\section{The Graph Fourier Transform}
\label{sec:GFT}
Let $\mathcal{G}=(\mathcal{V},\mathcal{E},\mathbf{A})$ be a undirected weighted 
graph with $\mathcal{V}$ the set 
of nodes, $\mathcal{E}$ the set of edges, and $\mathbf{A}$ the weighted adjacency 
matrix such that $\mathbf{A}_{ij}=\mathbf{A}_{ji}\geq0$ is 
the weight of the edge between nodes $i$ and $j$. \MODIF{Let $N$ be the total number of nodes.} 
Let us define the graph's Laplacian matrix 
$\bm{\mathcal{L}}=\mathbf{D}-\mathbf{A}$ where $\mathbf{D}$ is a diagonal matrix 
with $\mathbf{D}_{ii}= \mathbf{d}_i = \sum_{j\neq i} \mathbf{A}_{ij}$ 
the strength of node $i$. %The normalized Laplacian matrix reads 
%$\bm{\mathcal{L}}=\mathbf{S}^{-\frac{1}{2}}\mathbf{L}\mathbf{S}^{-\frac{1}{2}}
%=\mathbf{I}_N-\mathbf{S}^{-\frac{1}{2}}\mathbf{W}\mathbf{S}^{-\frac{1}{2}}$, 
%where $\mathbf{I_N}$ is the identity matrix of size N. 
$\bm{\mathcal{L}}$ is real symmetric, therefore diagonalizable: its spectrum is composed 
of $\left(\lambda_l\right)_{l=1\dots N}$ its set of eigenvalues that we sort: 
$0=\lambda_1\leq\lambda_2\leq\lambda_3\leq\dots\leq\lambda_{N}$;  
and of $\mathbf{\bm{Q}}$ the matrix of its normalized eigenvectors:
$\bm{Q}=\left(\bm{q}_1|\bm{q}_2|\dots|\bm{q}_N\right)$. 
Considering only connected graphs, the multiplicity of eigenvalue $\lambda_1=0$ is 1~\cite{chung_book1997}. 
By analogy to the continuous Laplacian operator whose eigenfunctions
are the continuous Fourier modes and eigenvalues their 
squared frequencies, $\bm{Q}$ is considered as the matrix of the graph's 
Fourier modes, and 
$\left(\sqrt{\lambda_l}\right)_{l=1\dots N}$ its set of associated ``frequencies''~\cite{shuman_SPMAG2013}. 
%A more comprehensive discussion is in Ref.~\cite{leonardi_TSP2013}. 
For instance, the graph Fourier transform $\bm{\hat{x}}$ of a signal $\bm{x}$ defined 
on the nodes of the graph 
%(we choose here to focus on signals defined on the nodes of the graphs) 
reads: $\bm{\hat{x}}=\bm{Q}^\top \bm{x}$. 

%Once the graph Fourier transform is defined, one may define a notion of filter for a graph 
%signal $\bm{x}$. Consider a continuous function $f:[0,\lambda_N]\rightarrow\mathds{R}$. 
%This function defines a diagonal filter matrix $\hat{\bm{F}}$ where $\forall i \quad \hat{F}_{ii}=f(\lambda_i)$. 
%$\hat{\bm{F}}$ is the filter matrix in the Fourier space. In the node-space, this filter matrix, written simply $\bm{F}$, 
%reads~:
%\begin{equation}
%\bm{F}=\bm{Q}\hat{\bm{F}}\bm{Q}^\top.
%\end{equation}
%Therefore, a signal $\bm{x}$ filtered by $f$ reads:
%\begin{equation}
% \bm{F}\bm{x}=\bm{Q}\hat{\bm{F}}\bm{Q}^\top \bm{x}.
%\end{equation}

\section{State of the art}
\label{sec:state_art}

%\begin{figure}
%\centering
% \includegraphics[width=0.48\textwidth]{Analysis_and_Synthesis_classic.pdf}
% \caption{A classical two-channel filterbank representation.}
%\label{fig:cl_filterbank}
%\end{figure}
%
% We recall in Section~\ref{subsec:Haar} Haar's classical filterbank in order to expose in 
% Section~\ref{subsec:adapting_filterbanks} the main  issues encountered  
% when attempting to extend filterbanks to graph signals. We also summarize the state-of-the-art proposed solutions.

\subsection{Classical 1-D filterbank and the Haar filterbanks}
\label{subsec:Haar}
\MODIF{In classical setting, the decimation $\bm{(\downarrow2)}$ operator by 2
is paramount. It keeps one every two nodes and follows what we call
the ``one every two nodes paradigm'', as seen on  Fig.~\ref{fig:different_graph_downsamplings}a).
The classical design of a filterbank is to find a set of operators, e.g. a low-pass filter $\bm{C}$ 
and a high-pass filter $\bm{D}$, that combine well with decimation such that perfect recovery is possible
from the decimated low- and high-pass filtered signals~\cite{strang_book1996}.}

\MODIF{The usual Haar filterbank will be used as a leading example to expose the main  issues encountered  
when attempting to extend filterbanks to graph signals.
Consider the 1-D discrete signal $\bm{x}$ of size $N$. 
Let us recall that, at the first level of the classical Haar filterbank, 
$\bm{x}$ is decomposed into~\cite{strang_book1996}:} 
\\
\MODIF{- its approximation $\bm{x}_1$ of size $N/2$:
 $ \bm{x}_1=\bm{(\downarrow2)}\bm{C}\bm{x}, $
where $\bm{C}$ is the sliding average operator, here in matrix form:
\begin{equation}
\bm{C}=\frac{1}{\sqrt{2}}\left[\begin{array}{ccccc}
1 & 1 & 0 & 0 & \dots\\
0 & 1 & 1 & 0 & \dots\\
0 & 0 & 1 & 1 & \dots\\
\vdots &\vdots &\vdots &\vdots &\\
\end{array}\right].
\end{equation}
%obtained by convolution with the vector $(1, 1)/\sqrt{2}$; \\
}
% here in matrix form:
%\begin{equation}
%\bm{C}=\frac{1}{\sqrt{2}}\left[\begin{array}{ccccc}
%1 & 1 & 0 & 0 & \dots\\
%0 & 1 & 1 & 0 & \dots\\
%0 & 0 & 1 & 1 & \dots\\
%\vdots &\vdots &\vdots &\vdots &\\
%\end{array}\right].
%\end{equation}
% 
%\begin{equation}
%\bm{x}_1=\bm{L}\bm{x}=\frac{1}{\sqrt{2}}\left[\begin{array}{ccccccc}
%1 & 1 & 0 & 0 & 0 & 0 & \dots\\
%0 & 0 & 1 & 1 & 0 & 0 & \dots\\
%0 & 0 & 0 & 0 & 1 & 1 & \dots\\
%\vdots &\vdots &\vdots &\vdots &\vdots &\vdots &\\
%\end{array}\right]\bm{x},
%\end{equation}
%where $\bm{L}$ is the decimated sliding average operator.
%\item 
\MODIF{- its detail $\bm{x}_2$  of size $N/2$:
$  \bm{x}_2=\bm{(\downarrow2)}\bm{D}\bm{x}, $
where $\bm{D}$ is the sliding difference operator:
\begin{equation}
\bm{D}=\frac{1}{\sqrt{2}}\left[\begin{array}{ccccc}
-1 & 1 & 0 & 0 & \dots\\
0 & -1 & 1 & 0 & \dots\\
0 & 0 & -1 & 1 & \dots\\
\vdots &\vdots &\vdots &\vdots &\\
\end{array}\right].
\end{equation}
%, obtained by convolution with the vector $(-1, 1)/\sqrt{2}$.
}
%\begin{equation}
%\bm{D}=\frac{1}{\sqrt{2}}\left[\begin{array}{ccccc}
%-1 & 1 & 0 & 0 & \dots\\
%0 & -1 & 1 & 0 & \dots\\
%0 & 0 & -1 & 1 & \dots\\
%\vdots &\vdots &\vdots &\vdots &\\
%\end{array}\right].
%\end{equation}
%\begin{equation}
%\bm{x}_2=\bm{B}\bm{x}=\frac{1}{\sqrt{2}}\left[\begin{array}{ccccccc}
%-1 & 1 & 0 & 0 & 0 & 0 & \dots\\
%0 & 0 & -1 & 1 & 0 & 0 & \dots\\
%0 & 0 & 0 & 0 & -1 & 1 & \dots\\
%\vdots &\vdots &\vdots &\vdots &\vdots &\vdots &\\
%\end{array}\right]\bm{x},
%\end{equation}
%where $\bm{B}$ is the decimated sliding difference operator.
%\end{itemize}

\MODIF{This Haar filterbank is orthogonal and critically sampled. Our objective
is to generalize it to signals on arbitrary graphs.}

\subsection{Adapting filterbanks to graph signals}
\label{subsec:adapting_filterbanks}

%\subsubsection{The difficulty}

For graph signals, a key difficulty is the design of a suitable decimation operator, and 
it comes in two separate problems:
\begin{itemize}
 \item[i)] how to choose the nodes to keep?
 \item[ii)] how to wire together the nodes that are kept, so as to create the downsampled graph?
\end{itemize}

On a straight line or a regular grid, issue i) is solved by the one every two nodes paradigm and issue ii) does not exist:  
the  structure after downsampling is exactly the same as the original (straight line or 2D grid); see Figs.~\ref{fig:different_graph_downsamplings} a) and b). 

\MODIF{To tackle these issues, the following works propose a way to 
adapt the one every two nodes paradigm to arbitrary graphs.} 
%and we recall their propositions.

%As for issue ii): unlike regular grids, downsampling an arbitrary graph will almost 
%always create graphs that do not share the same structural properties than the original, requiring specific node-wiring schemes. 
%
%We specifically recall in Section~\ref{subsec:Narang} Narang et al.'s~\cite{narang_TSP2012, narang_TSP2013} 
%proposition to these two issues, and in Section~\ref{subsec:Schuman} Shuman et al.'s~\cite{shuman_ARXIV2013} 
%proposition. This will enable us to clearly propose and situate our contribution in Section~\ref{sec:design}. 

\begin{figure}
\centering
\includegraphics[width=0.2\textwidth]{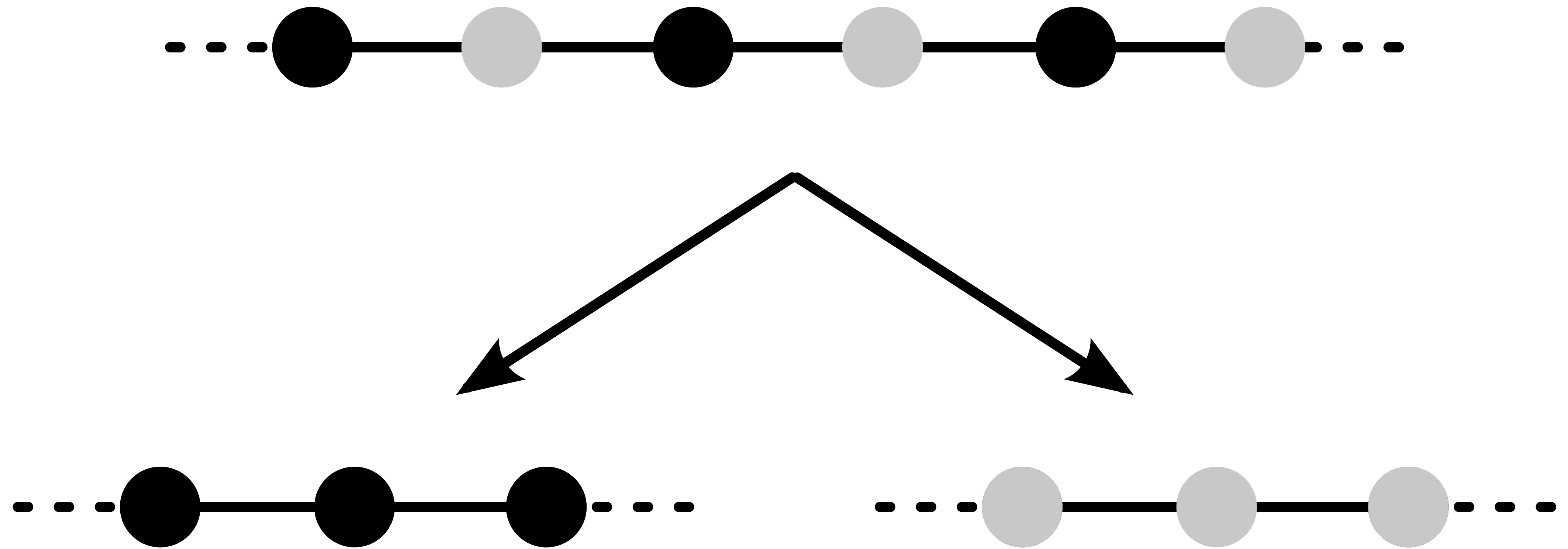}\\
a)\\
%\vspace{0.4cm}
\includegraphics[width=0.35\textwidth]{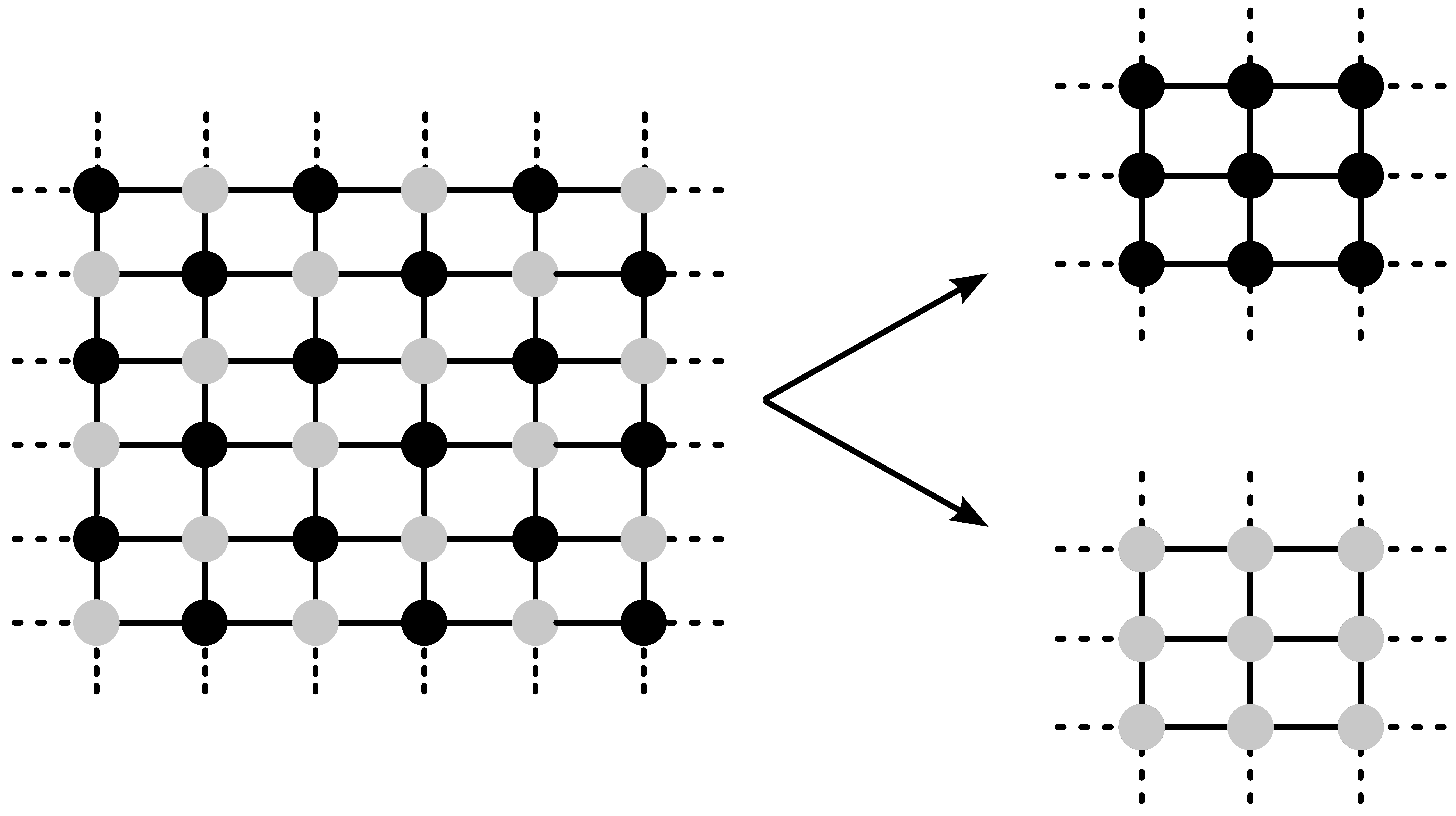}\\
b)\\%\vspace{0.4cm}
\includegraphics[width=0.40\textwidth]{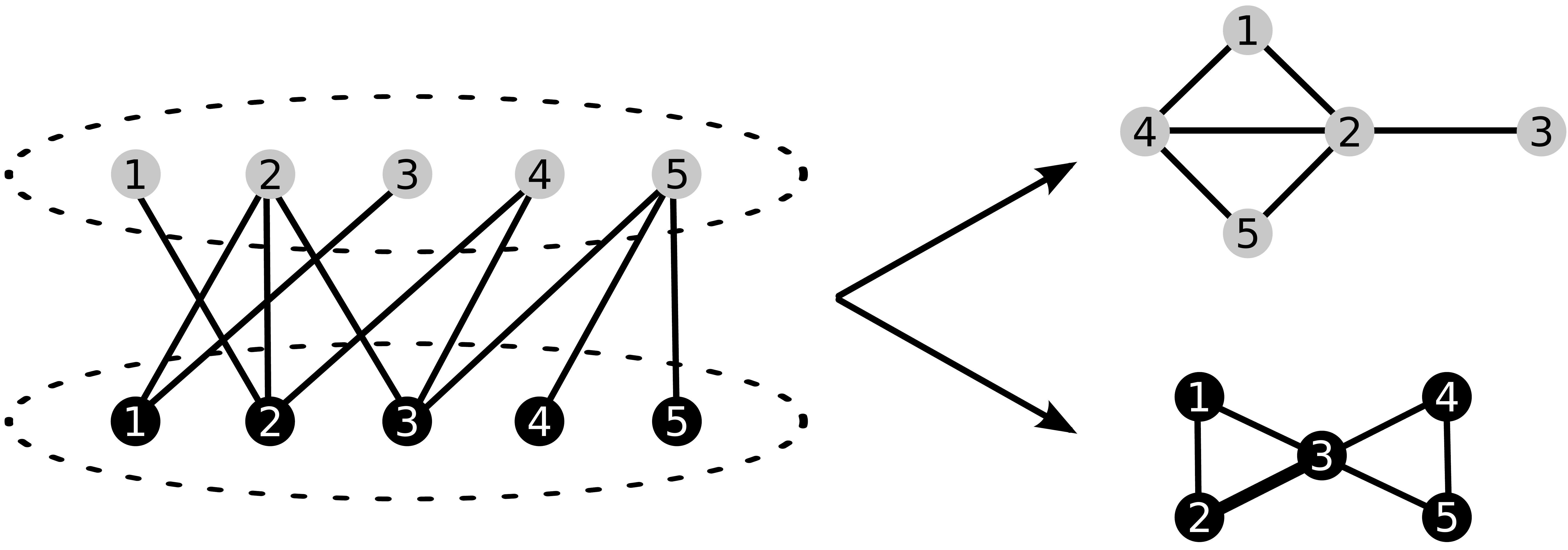}\\
c)\\%\vspace{0.4cm}
\includegraphics[width=0.40\textwidth]{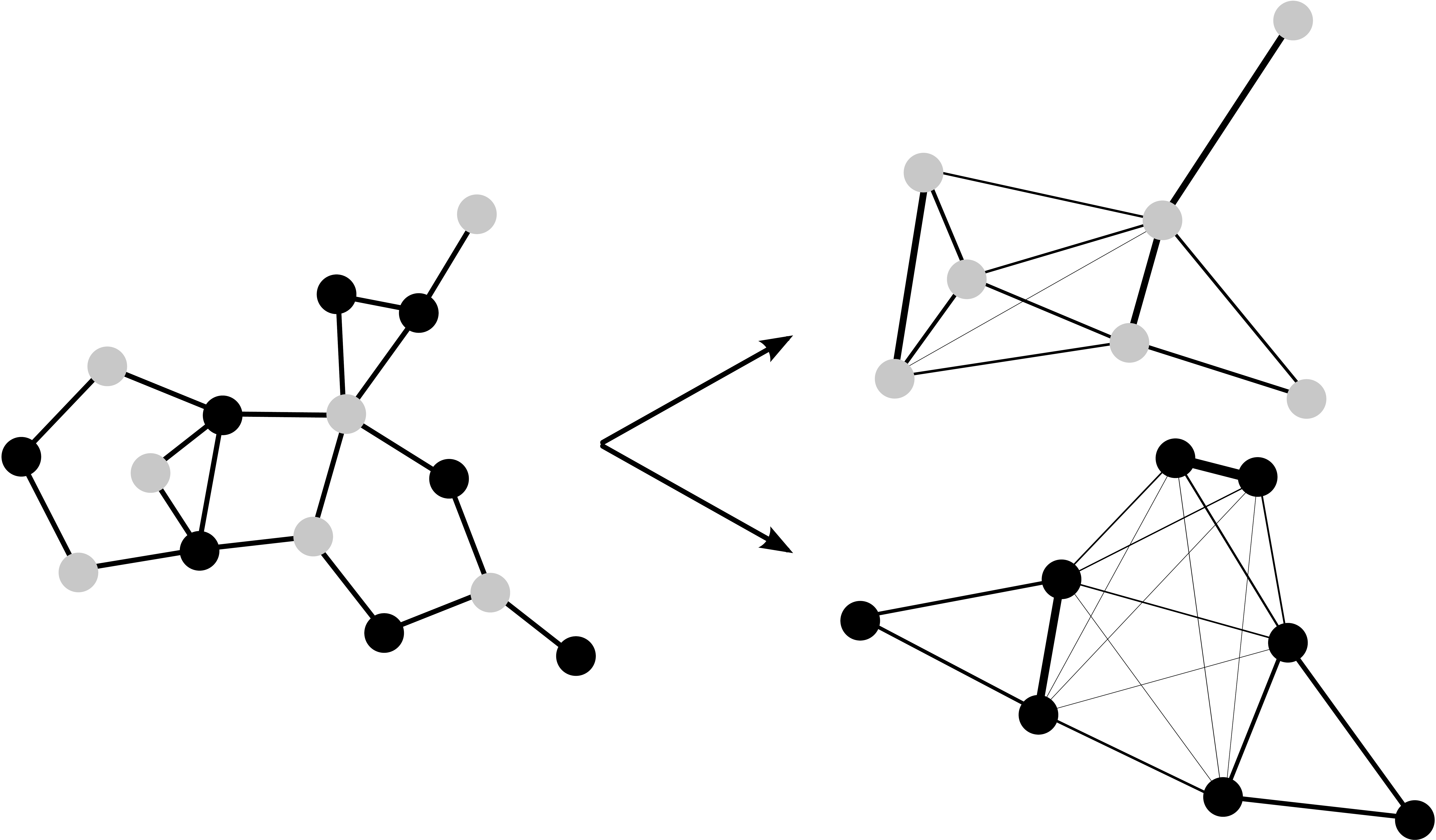}\\
d)\\%\vspace{0.4cm}
%\includegraphics[width=0.40\textwidth]{Downsample_with_communities3.pdf}\\
%e)\\%\vspace{0.4cm}
 \caption{
State-of-the-art graph downsampling procedure following the ``one every two nodes'' paradigm
(one set in black, the other in gray): %(see Section~\ref{sec:state_art}). 
a) (resp. b) the classical 1D (resp. 2D) downsampling; 
c) bipartite graph downsampling~\cite{narang_TSP2012}; %(see \ref{subsec:Narang}).
d) polarity downsampling from the 
 highest frequency graph Fourier mode~\cite{shuman_ARXIV2013}.
%Some state-of-the-art graph downsampling procedure, where the nodes are separated in two sets: the black and 
% the gray nodes. They all follow the ``one every two nodes'' paradigm, %(see Section~\ref{sec:state_art}). 
% with specific rules to define the two sets and to wire together the nodes after downsampling: a) (resp. b) follows the classical 1D (resp. 2D) downsampling example. 
% For these regular graphs, the wiring simply follows the original graph's regularity; 
% In c), a bipartite graph is downsampled and wired following~\cite{narang_TSP2012}; %(see \ref{subsec:Narang}).
% In d), two sets of nodes of an arbitary graph are chosen according to the polarity of the 
% highest frequency graph Fourier mode and each set is wired based on the Kron reduction~\cite{shuman_ARXIV2013}. %(recalled in Section~\ref{subsec:Schuman}),
}
\label{fig:different_graph_downsamplings}
\end{figure}

\subsubsection{A design for signals defined on bipartite graphs}
\label{subsec:Narang}
Narang and Ortega~\cite{narang_TSP2012, narang_TSP2013} consider first signals 
defined on bipartite graphs (i.e. two-colourable graphs). In this particular case, 
one may still downsample the graph by naturally keeping one every two nodes, as shown in 
Fig.~\ref{fig:different_graph_downsamplings}c). For non-bipartite graphs, they
develop a \CHANGE{preprocessing} step of the structure where the graph is decomposed in a sum of 
bipartite subgraphs, on which the filterbank is successively applied. 
Other methods to create bipartite graphs have been proposed, by oversampling~\cite{sakiyama_TSP2014}, 
or with maximum spanning tree~\cite{nguyen_TSP2015}. 
For issue ii), the downsampled structure has edges between two nodes if they have at least a common neighbor in the initial graph.
As seen in Fig.~\ref{fig:different_graph_downsamplings}c), a downsampled bipartite graph is not necessarily bipartite
and the \CHANGE{preprocessing} of the structure is mandatory to iterate the filterbank design.
Still, an interesting property of this design is the specific behavior of bipartite graphs 
Laplacian's eigenvalues, that enables the authors to write specific anti-aliasing equations for the design 
of the filters $\bm{C}$ and $\bm{D}$ (see Eq. (13) of~\cite{narang_TSP2012}).

\subsubsection{A downsampling based on the Laplacian's last eigenvector}
\label{subsec:Schuman}
Shuman et al.~\cite{shuman_ARXIV2013} focus on the eigenvector associated to the Laplacian's largest eigenvalue. 
They create two sets of nodes depending on this eigenvector's sign, as illustrated in Fig.~\ref{fig:different_graph_downsamplings}d). According to the Fourier interpretation 
of the Laplacian's eigenbasis, the last eigenvector corresponds to the ``highest frequency'' of a graph signal. 
This idea, inspired by graph coloring studies~\cite{aspvall_SIAM1984}, generalizes the fact that, for structured grids and  
bipartite graphs, the sign of this eigenvector does alternate every two nodes. 
To tackle issue ii), the authors in~\cite{shuman_ARXIV2013} rely on the Kron 
reduction~\cite{dorfler_TSPCS2013} of the initial graph to obtain new graphs,
and post-process them to remove links from otherwise very dense downsampled graphs,
implying some degree of arbitrary choices. 
\medskip 

In summary, \CHANGE{there are many choices (and some stochasticity)} in the pre- or post-processing steps
to obtain suitable decimated graphs on which the filterbank can be cascaded.

\section{Filterbanks on connected subgraphs}
\label{sec:design}

The idea we explore lets go of the ``one every two nodes''
paradigm, and concentrates on graph coarsening:
given a partition in connected subgraphs of the initial graph, the approximation and detail(s)
will be obtained on each ``supernode'' that represent each connected subgraph.
Hence we will not attempt to define separately analogies of downsampling $\bm{(\downarrow2)}$ and 
filtering $\bm{C}$ and $\bm{D}$. Instead, we directly define analogies to graph signals 
of the decimated sliding average operator $\bm{L}=\bm{(\downarrow2)}\bm{C}$
and
of the decimated sliding difference operator $\bm{B}=\bm{(\downarrow2)}\bm{D}$.
They read for the Haar filterbanks:
\begin{equation}
\bm{L}=\frac{1}{\sqrt{2}}\left[\begin{array}{ccccccc}
1 & 1 & 0 & 0 & 0 & 0 & \dots\\
0 & 0 & 1 & 1 & 0 & 0 & \dots\\
0 & 0 & 0 & 0 & 1 & 1 & \dots\\
\vdots &\vdots &\vdots &\vdots &\vdots &\vdots &\\
\end{array}\right] \in \mathbb{R}^{N/2\times N}
\end{equation}
and:
\begin{equation}
\bm{B}=\frac{1}{\sqrt{2}}\left[\begin{array}{ccccccc}
-1 & 1 & 0 & 0 & 0 & 0 & \dots\\
0 & 0 & -1 & 1 & 0 & 0 & \dots\\
0 & 0 & 0 & 0 & -1 & 1 & \dots\\
\vdots &\vdots &\vdots &\vdots &\vdots &\vdots &\\
\end{array}\right] \in \mathbb{R}^{N/2\times N}.
\end{equation}
In Section~\ref{subsec:analogy_Haar}, we first take a close look at the effect of these two Haar operators 
$\bm{L}$ and $\bm{B}$ on the input signal $\bm{x}$, to give insight in the fundamental analogy that is further 
formalized in Sections~\ref{subsec:def_operators} to \ref{subsec:cascade}.  
% Our actual filterbank scheme is defined in  Sections~\ref{subsec:fb_design} and~\ref{subsec:cascade}. 
\MODIF{In Section~\ref{subsec:Haar_particular}, we detail the analysis atoms created by the proposed filterbanks.}

%In this paper, to define the structure's sampling operators, we no longer rely on the ``keep one every two nodes'' paradigm, but 
%prefer to start by detecting a partition of the graph in connected subgraphs, in order to define the coarse-grained version of the graph. 

\subsection{Introducing the design by analogy to the Haar filterbank}
\label{subsec:analogy_Haar}
\MODIF{Let us rephrase the Haar filterbank from the proposed new point of view
of operators on connected subgraphs.}

\subsubsection{Replace decimation by partition}
Consider the 1-D classical signal $\bm{x}$ of even size $N$ defined 
on the straight line graph $\mathcal{G}$, of size $N$, where each node has two neighbors. We consider 
the partition $\bm{c}$ of this graph in $K=N/2$ connected subgraphs $\left\{\MODIF{\mathcal{G }^{(k)}}\right\}_{k\in\{1,K\}}$ 
connecting neighbors two-by-two: we call it the Haar partition, and it reads (when coded as a vector):
\begin{equation}
\label{eq:Haar_partition}
\MODIF{\bm{c}=\left(1,1,2,2,3,3,\cdots,K,K\right)^\top},
\end{equation}
where $\bm{c}(i)$ is the label of node $i$'s subgraph. 

\subsubsection{Interpret operators $\bm{L}$ and $\bm{B}$ in terms of local Fourier modes}
Consider subgraph $\MODIF{\mathcal{G }^{(k)}}$ and $\MODIF{\bm{x}^{(k)}}$ the restriction of $\bm{x}$ to this subgraph. 
Define $\MODIF{\mathcal{G }^{(k)}}$'s local adjacency matrix $\bm{\MODIF{A^{(k)}}}$~:
\begin{equation}
 \CHANGE{ \forall k\in\{1,...,N/2\} } \quad \bm{\MODIF{A^{(k)}}} =\left[\begin{array}{cc}
0&1\\
1&0
\end{array}\right].
\end{equation}
Its Laplacian matrix is diagonalisable with two local Fourier modes:
$ %\begin{equation}
 \MODIF{\bm{q}_1^{(k)\top}}=\frac{1}{\sqrt{2}}\left(1,\quad 1\right)
$ %\end{equation}
of associated eigenvalue $\MODIF{\lambda_1^{(k)}}=0$ and
$ %\begin{equation}
\MODIF{\bm{q}_2^{(k)\top}}=\frac{1}{\sqrt{2}}\left(-1, ~~1\right)
$ %\end{equation}
of associated eigenvalue $\MODIF{\lambda_2^{(k)}}=2$.

The actual effect of the operation $\bm{x}_1=\bm{L}\bm{x}$ in Haar filterbank is 
to assign to each subgraph $\MODIF{\mathcal{G }^{(k)}}$ the first local Fourier component of $\MODIF{\bm{x}^{(k)}}$~:
\begin{equation}
 \forall k\in\{1,..., N/2\}\qquad\bm{x}_1(k)= \MODIF{\bm{q}_1^{(k)\top}}\MODIF{\bm{x}^{(k)}}.
\end{equation}

Similarly, the actual effect of the operation $\bm{x}_2=\bm{B}\bm{x}$ 
may be rewritten as~:
\begin{equation}
 \forall k\in\{1,..., N/2\}\qquad\bm{x}_2(k)= \MODIF{\bm{q}_2^{(k)\top}}\MODIF{\bm{x}^{(k)}}.\\
\end{equation}

In other words $[\bm{x}_1(k) \quad\bm{x}_2(k)]^\top$ is the local (reduced to $\MODIF{\mathcal{G }^{(k)}}$)
Fourier transform of $\MODIF{\bm{x}^{(k)}}$.

\subsubsection{Analogy for graph signals} 
Consider a graph $\mathcal{G}$ and a partition $\bm{c}$ of this graph 
in $K$ connected subgraphs $\left\{\MODIF{\mathcal{G }^{(k)}}\right\}_{k\in\{1,K\}}$. Consider one of 
these subgraphs $\MODIF{\mathcal{G }^{(k)}}$ 
of size $N_k$. 
\CHANGE{Consider $\MODIF{\hat{\bm{x}}^{(k)}}$ the local graph Fourier transform of $\MODIF{\bm{x}^{(k)}}$, 
the graph signal reduced to $\MODIF{\mathcal{G }^{(k)}}$. 
To this end, we diagonalize 
$\MODIF{\mathcal{G }^{(k)}}$'s local Laplacian matrix to find its $N_k$ eigenvectors (a.k.a. local Fourier modes) 
sorted w.r.t. their eigenvalues and compute the successive inner products. 
 We propose the following fundamental analogy: the first coefficient of $\MODIF{\hat{\bm{x}}^{(k)}}$ will 
 contribute to the approximation $\bm{x}_1$ of the signal, and the following coefficients to its successive 
 details $\bm{x}_2, \ldots, \bm{x}_{N_k}$.} 

%\MODIF{The fundamental analogy we propose is to 
%put together the $k$-th component of the approximation and all detail signals 
%$[\bm{x}_1(k) \quad\bm{x}_2(k)\quad \dots\quad\bm{x}_{N_k}(k)]^\top$: together, they are the local 
%graph Fourier transform of} $\MODIF{\bm{x}^{(k)}}$, the graph signal reduced to $\MODIF{\mathcal{G }^{(k)}}$. To this end, we diagonalize 
%$\MODIF{\mathcal{G }^{(k)}}$'s local Laplacian matrix to find its $N_k$ eigenvectors (a.k.a. local Fourier modes) 
%sorted w.r.t. their eigenvalues %(a.k.a local Fourier square frequencies), 
%and compute the successive inner products. 

\subsubsection{\ADDED{Graph support of the decimated components}}
\MODIF{
For 1-D Haar filterbanks, the two components are defined on straight-line graphs of size $N/2$.
For arbitrary graphs, all subgraphs have not necessarily the same size.
This implies that only subgraphs of size at least $l$ contribute to $\bm{x}_l$.
\CHANGE{For instance, a three-node subgraph ${\mathcal{G }^{(k)}}$ will have three eigenvectors for
its local Laplacian and its first (resp. second, third) eigenvector will contribute to $\bm{x}_1$ 
(resp. $\bm{x}_2$, $\bm{x}_3$).}}

\MODIF{The proposed analogy naturally defines the graphs on which are
defined the downsampled signals $\bm{x}_l$. 
Let us introduce a supernode $k$ standing for each subgraph $\MODIF{\mathcal{G }^{(k)}}$.
For $l=1$, the approximation signal $\bm{x}_1$ lies naturally on a graph of adjacency matrix $\bm{A}_1$ where 
 $\bm{A}_1(k,k')$ is the sum of the weights of the  
edges connecting subgraph $k$ to subgraph $k'$ in the original graph. 
Then, for the subsequent graph of adjacency matrix $\bm{A}_l$ on which is defined the detail signal $\bm{x}_l$, only supernodes standing for
subgraphs of size at least $l$ are needed and $\bm{A}_l$ is defined
the same way by summing the edges between involved subgraphs. 
We formalize this analogy in the following. %Sections~\ref{subsec:def_operators} to \ref{subsec:cascade} and 
%illustrate it on a toy example in Section~\ref{subsec:applications_examples}. 
}

\subsection{\MODIF{Formalization of the operators necessary to the design}}
\label{subsec:def_operators}
\CHANGE{To help the assimilation of the definitions introduced here, the reader may in parallel look at 
Appendix~\ref{subsubsec:example}, where a trivial concrete example is fully detailed.}

\subsubsection{Subgraph \MODIF{sampling} operators}

Consider an arbitrary graph $\mathcal{G}$ and an arbitrary partition $\bm{c}$ of this graph 
in $K$ connected subgraphs $\left\{\MODIF{\mathcal{G }^{(k)}}\right\}_{k\in\{1,..., K\}}$. Write 
 $N_k$ the number of nodes in subgraph $\MODIF{\mathcal{G }^{(k)}}$ 
 \ADDED{and $\Gamma^{(k)}\subset\mathcal{V}$ the list of nodes in $\mathcal{G }^{(k)}$.} 
%
 \begin{comment}
Let us first define the matrix $\bm{S}\in\mathbb{R}^{N\times K}$, a more practical way (for linear algebra calculus) 
than vector $\bm{c}$ to encode the connected subgraph structure :
\MODIF{
\begin{equation}
 \begin{aligned}
  S(i,j) &=1 \mbox{  if  } i\in\Gamma^{(j)}\\
  &=0 \mbox{  if not.}
 \end{aligned}
\end{equation}}

\ADDED{We will also use the normalized version of $\bm{S}$, defined as :
\begin{equation}
 \begin{aligned}
  S^n(i,j) &=\frac{1}{\sqrt{N_j}} \mbox{  if  } i\in\Gamma^{(j)}\\
  &=0 \mbox{  if not.}
 \end{aligned}
\end{equation}}
 \end{comment}
%
For each subgraph $\MODIF{\mathcal{G }^{(k)}}$, % composed of $N_k$ nodes, % $\{v_{\sigma^k(1)},v_{\sigma^k(2)},\dots,v_{\sigma^k(N_k)}\}$, 
let us define the sampling operator $\MODIF{\bm{C}^{(k)}\in\mathbb{R}^{N\times N_k}}$ such that: % of size $N\times N_k$:
\MODIF{
  \begin{equation}
  \label{eq:sampling_op_Ck}
    \begin{aligned}
      C^{(k)} (i,j) &= 1 \mbox{  if   } \Gamma^{(k)}(j)=i,\\
      &= 0 \mbox{  if not.}
    \end{aligned}
  \end{equation}}

\ADDED{Applying $\bm{C}^{(k)\top}$ to a graph signal $\bm{x}$, one obtains the graph signal 
reduced to subgraph $\mathcal{G }^{(k)}$, i.e. $\bm{x}^{(k)}$.} 
%\begin{equation}
%\label{eq:sampling_op_Ck}
% \bm{C}_k=\left(\bm{\delta}_{\sigma^k(1)}|\bm{\delta}_{\sigma^k(2)}|\cdots|\bm{\delta}_{\sigma^k(N_k)}\right),
%\end{equation}
%
%where $\bm{\delta}_{\sigma^k(i)}(j)=1$ if $\sigma^k(i)=j$, and zero otherwise.
Conversely, applying $\bm{C}^{(k)}$ expands a signal defined on $\mathcal{G }^{(k)}$ to a
graph signal on $\mathcal{G }$ with zero-padding.

\CHANGE{
The adjacency matrix of $\mathcal{G }^{(k)}$, 
noted $\MODIF{\bm{A}_{int}^{(k)}\in\mathbb{R}^{N_k^2}}$, satisfies:}
\begin{equation}
\label{eq:L_ik}
 \forall k\in\{1,..., K\}  \qquad \MODIF{\bm{A}_{int}^{(k)}}=\MODIF{\bm{C}^{(k)\top}\bm{\CHANGE{A}}\bm{C}^{(k)}}.
\end{equation}
\CHANGE{
As a consequence, the intra-subgraph adjacency matrix $\bm{A}_{int}$, that is, 
the adjacency matrix of the graph that contains 
only the intra-subgraph edges and
none of the inter-subgraph edges, may be written as:}
 \begin{equation}
  \MODIF{\bm{A}_{int}}=\displaystyle\sum_{k=1}^K  \MODIF{\bm{C}^{(k)}\bm{C^{(k)\top}}}\bm{A}\MODIF{\bm{C}^{(k)}\bm{C}^{(k)\top}}.
 \end{equation}
%grouping the $\bm{A}_{int}^{(k)}$ in a block \CHANGE{diagonal} matrix. 
%keeping the edges within subgraphs. 
The complement to obtain the full adjacency matrix $\bm{A}$ is called 
the inter-subgraph  adjacency matrix and is defined as $\MODIF{\bm{A}_{ext}}=\bm{A}-\MODIF{\bm{A}_{int}}$; 
it keeps only the links connecting subgraphs together.

\subsubsection{Subgraph Laplacian operators}
\label{subsec:local_Li}
On each $\MODIF{\mathcal{G }^{(k)}}$, let us define $\MODIF{\bm{\mathcal{L}_{int}}^{(k)}}$ the local Laplacian matrix, 
computed from $\MODIF{\bm{A}_{int}^{(k)}}$. 
%In other words,  $\MODIF{\bm{\mathcal{L}_{int}}^{(k)}}$ is the Laplacian matrix of subgraph $\MODIF{\mathcal{G }^{(k)}}$. % of size $N_k$. 
It is diagonalisable:
\begin{equation}
 \forall k\in\{1,...,K\}\qquad \MODIF{\bm{\mathcal{L}_{int}}^{(k)}}=\MODIF{\bm{Q}^{(k)}\bm{\Lambda}^{(k)}\bm{P}^{{(k)\top}}},
\end{equation}
with $\MODIF{\bm{\Lambda}^{(k)}}$ the diagonal matrix of sorted eigenvalues ($\MODIF{\lambda_1^{(k)}}$ 
is the smallest):
\begin{equation}
 \MODIF{\bm{\Lambda}^{(k)}}=\mbox{diag}\left(\MODIF{\lambda_1^{(k)}, \lambda_2^{(k)}, \dots, \lambda_{N_k}^{(k)}}\right), 
\end{equation} 
and $\MODIF{\bm{Q}^{(k)}}$ the basis of local Fourier modes:
\begin{equation}
 \MODIF{\bm{Q}^{(k)}}=\left(\MODIF{\bm{q}_1^{(k)}|\bm{q}_2^{(k)}|\dots|\bm{q}_{N_k}^{(k)}}\right).
\end{equation} 
\ADDED{and $\bm{P}^{{(k)\top}}=\left(\bm{Q}^{(k)}\right)^{-1}$ %the inverse of $\bm{Q}^{{(k)}}$:
with:
\begin{equation}
 \bm{P}^{{(k)}}=\left(\bm{p}_1^{(k)}|\bm{p}_2^{(k)}|\dots|\bm{p}_{N_k}^{(k)}\right).
\end{equation}}

\ADDED{We {normalize the $\bm{q}_i^{(k)}$ with the $L_p$ norm}:}
\begin{equation}
\label{eq:norm}
\forall i\in[1,N_k]\qquad ||\bm{q}_i^{(k)}||_p=\left(\sum_j |\bm{q}_i^{(k)}(j)|^p\right)^{1/p}=1.
\end{equation}
\CHANGE{Note that in the specific case where  $p=2$ for this normalization, 
we  end up with  $\bm{P}^{(k)}=\bm{Q}^{(k)}$. We will see that in this case, 
the filterbank simply codes for an \textit{orthogonal} transform.} 
\ADDED{We discuss the choice of $p$ (usually 1 or 2) in Section~\ref{subsec:Haar_particular}.} 

For each $\MODIF{\bm{q}_i^{(k)}}$ of size $N_k$ defined on the local 
subgraph $\MODIF{\mathcal{G }^{(k)}}$, 
let $\MODIF{\bar{\bm{q}}_i^{(k)}}$ be its zero-padded extension to the whole global graph:
\begin{equation}
 \forall k\in\{1,K\} \quad \forall i\in\{1,N_k\}\qquad \MODIF{\bar{\bm{q}}_i^{(k)}=\bm{C}^{(k)}\bm{q}_i^{(k)}}.
\end{equation}
\ADDED{Similarly $\bar{\bm{p}}_i^{(k)}$ stands for the zero-padded extension of $\bm{p}_i^{(k)}$.}
%on the  whole graph.} 

%\textit{Nota bene:} we could have of course obtained the $\MODIF{\bar{\bm{q}}_i^{(k)}}$ \ADDED{and 
%$\bar{\bm{p}}_i^{(k)}$} 
%by direct diagonalization of the Laplacian 
%matrix of $\MODIF{\bm{A}_{int}}$, 
%but the introduced notations are used for the next definitions.

\subsubsection{\MODIF{Analysis, synthesis and group operators}}
\label{subsec:op_fam}
Considering the connected subgraph partition $\bm{c}$, define  
$\tilde{N}_1$ the maximum number of nodes in any subgraph:
\begin{equation}
\label{eq:tildeN}
 \tilde{N}_1=\max_k N_k.
\end{equation}
For any $l\in\{1,..., \tilde{N}_1\}$, we note $\MODIF{\mathcal{I}_l}$ the list of subgraph labels containing at least $l$ nodes:
\begin{equation}
 \forall l\in\{1,..., \tilde{N}_1\}\qquad \MODIF{\mathcal{I}_l}=\left\{k\in\{1,K\} \mbox{ s.t } N_k\geq l\right\}.
\end{equation}
For instance, as all subgraphs contain at least 1 node, $\mathcal{I}_1$ contains all the $K$ subgraph labels. 
Also, necessarily:
 $|\mathcal{I}_1|\geq|\mathcal{I}_2|\geq\cdots\geq|\mathcal{I}_{\tilde{N}_1}|$ (where $|.|$ denotes the cardinality). 
 \ADDED{By construction, we have  
 $\sum_l |\MODIF{\mathcal{I}_l}|=N$. }
 
The family of \MODIF{analysis} operators $\{\bm{\Theta}_l\in\mathbb{R}^{N\times |\MODIF{\mathcal{I}_l}|}\}$ contains 
$\tilde{N}_1$ operators \ADDED{that generalize $\bm{L}$ and $\bm{B}$ of the Haar filterbank:}
\begin{equation}
 \forall l\in\{1,..., \tilde{N}_1\} \qquad \bm{\Theta}_l=\left(\bar{\bm{q}}_l^{\MODIF{\mathcal{I}_l}(1)}|\bar{\bm{q}}_l^{\MODIF{\mathcal{I}_l}(2)}|\cdots|\bar{\bm{q}}_l^{\MODIF{\mathcal{I}_l}(|\MODIF{\mathcal{I}_l}|)}\right).
\end{equation}
This means that operator $\bm{\Theta}_l$ groups together all local Fourier modes associated to 
the $l$-th eigenvalue of all subgraphs containing at least $l$ nodes. 

\ADDED{The family of $\tilde{N}_1$  synthesis operators $\{\bm{\Pi}_l\in\mathbb{R}^{N\times |\mathcal{I}_l|}\}$ 
reads:}
\begin{equation}
 \ADDED{\forall l\in\{1,..., \tilde{N}_1\} \qquad \bm{\Pi}_l=\left(\bar{\bm{p}}_l^{\mathcal{I}_l(1)}|\bar{\bm{p}}_l^{\mathcal{I}_l(2)}|\cdots|\bar{\bm{p}}_l^{\mathcal{I}_l(|\mathcal{I}_l|)}\right).}
\end{equation}
%This means that operator $\bm{\Pi}_l$ groups together all local Fourier modes associated to 
%the $l$-th eigenvalue of all subgraphs containing at least $l$ nodes. }

%\ADDED{From this operator family, we derive two similar families that we will use in the following:  
%$\forall l\in\{1,\tilde{N}_1\}$ 
%\begin{equation}
% \bm{Q}^f_l=\left(\frac{\bar{\bm{q}}_l^{I_l(1)}}{\sqrt{N_{I_l(1)}}}|\frac{\bar{\bm{q}}_l^{I_l(2)}}{\sqrt{N_{I_l(2)}}}|\cdots|\frac{\bar{\bm{q}}_l^{I_l(|I_l|)}}{\sqrt{N_{I_l(|I_l|)}}}\right),
%\end{equation}
%and 
%\begin{equation}
%\bm{Q}^b_l=\left(\sqrt{N_{I_l(1)}}\bar{\bm{q}}_l^{I_l(1)}|\cdots|\sqrt{N_{I_l(|I_l|)}}\bar{\bm{q}}_l^{I_l(|I_l|)}\right).\\
%\end{equation}}

The family of group operators $\{\bm{\Omega}_l\in\mathbb{R}^{N\times |\MODIF{\mathcal{I}_l}|}\}$ also contains $\tilde{N}_1$ operators defined as: 
\MODIF{
\begin{equation}
  \begin{aligned}
 \forall l\in\{1,..., \tilde{N}_1\}\qquad   \Omega_l(i,j) &= 1 \mbox{  if  } i\in\Gamma^{(\MODIF{\mathcal{I}_l}(j))},\\ 
    &= 0 \mbox{  if not.}
  \end{aligned}
\end{equation}}
This means that $\bm{\Omega}_l$ groups together indicator functions of subgraphs containing at least $l$ nodes. 
%
%
%\begin{equation}
%\bm{\Omega}_l=\left(\mathds{1}_{\mbox{Csub\;}I_l(1)}|\mathds{1}_{\mbox{Csub\;}I_l(2)}|\dots|\mathds{1}_{\mbox{Csub\;}I_l(|I_l|)}\right).
%\end{equation}

%We also define its normalized counter-part:
%\begin{equation}
%\bm{S}^n_l=\left(\frac{\mathds{1}_{\mbox{Csub\;}I_l(1)}}{\sqrt{N_{I_l(1)}}}|\frac{\mathds{1}_{\mbox{Csub\;}I_l(2)}}{\sqrt{N_{I_l(2)}}}|\dots|\frac{\mathds{1}_{\mbox{Csub\;}I_l(|I_l|)}}{\sqrt{N_{I_l(|I_l|)}}}\right).
%\end{equation}

\subsubsection{\ADDED{On the operators' uniqueness}} 
\ADDED{
Operators as we have defined them are not unique if no further rules are enforced. In fact, 
for each eigenvector $\bm{q}_i^{(k)}$, its opposite $-\bm{q}_i^{(k)}$ is also an eigenvector. 
Moreover, in the 
case of eigenvalue multiplicity, associated eigenvectors are not unique. 
To enforce uniqueness, any set of deterministic rules to extract eigenvectors will work. 
To solve the orientation issue, one may for instance decide to set the first non-zero coefficient 
of all vectors to be positive. 
For eigenvalues with multiplicity, we discuss a possible set of rules in Appendix~\ref{sec:uniqueness}
that \CHANGE{guarantees} uniqueness. 
}

\subsection{The filterbank design}
\label{subsec:fb_design}
\subsubsection{Analysis block}
Given a signal $\bm{x}$ defined on the graph whose adjacency matrix is $\bm{A}$, one first 
needs to find a partition $\bm{c}$ in connected subgraphs. The maximum number of nodes 
in any subgraph of $\bm{c}$ is $\tilde{N}_1$ (see Eq.~(\ref{eq:tildeN})) and it determines the number of channels through which $\bm{x}$ will be 
analyzed. Then, from $\bm{c}$, one constructs all operators $\bm{\Theta}_l$  
and $\bm{\Omega}_l$ as described in Section~\ref{subsec:def_operators}. Finally, the signal $\bm{x}$ defined on the graph is decomposed, 
through the $\tilde{N}_1$ channels, in $\tilde{N}_1$ signals:
\begin{equation}
\label{Eq:analysis_eq}
 \forall l\in\{1,..., \tilde{N}_1\}\qquad \bm{x_l}=\bm{\Theta}_l^{\top}\bm{x}, 
\end{equation}
each of them defined on a graph whose adjacency matrix reads:
\begin{equation}
\bm{A}_l=\bm{\Omega}_l^\top\MODIF{\bm{A}_{ext}}\bm{\Omega}_l.
\end{equation}
\ADDED{By this formula, the convention is that there is no self-loop, i.e.,  for $k=k'$, $\bm{A}_l(k,k)=0$.
Also, adjacency matrices $\bm{A}_l$ for $l\geq2$ are only needed if one decides to cascade the 
filterbank on detail channels;
here, the cascade will only be done on $\bm{A}_1$ (see Fig.\ref{fig:cascade} and Section~\ref{subsec:cascade}).}

%\noindent\textit{Proposition 1:} For all $l\in\{1,\tilde{N}_1\}$, $\bm{\mathcal{L}}_l=\bm{\Omega}_l^\top\bm{\mathcal{L}}\bm{\Omega}_l$ defines a proper Laplacian.
% \begin{proof}
%  This is the proof.
% \end{proof}

% \noindent\textit{Proposition 2:} For all $l\in\{1,\tilde{N}_1\}$, $\bm{\Omega}_l^\top\bm{\mathcal{L}}\bm{\Omega}_l=\bm{\Omega}_l^\top\bm{L_e}\bm{\Omega}_l$.
% \begin{proof}
%  This is the proof.
% \end{proof}

\begin{figure}
\centering
 \includegraphics[width=0.48\textwidth]{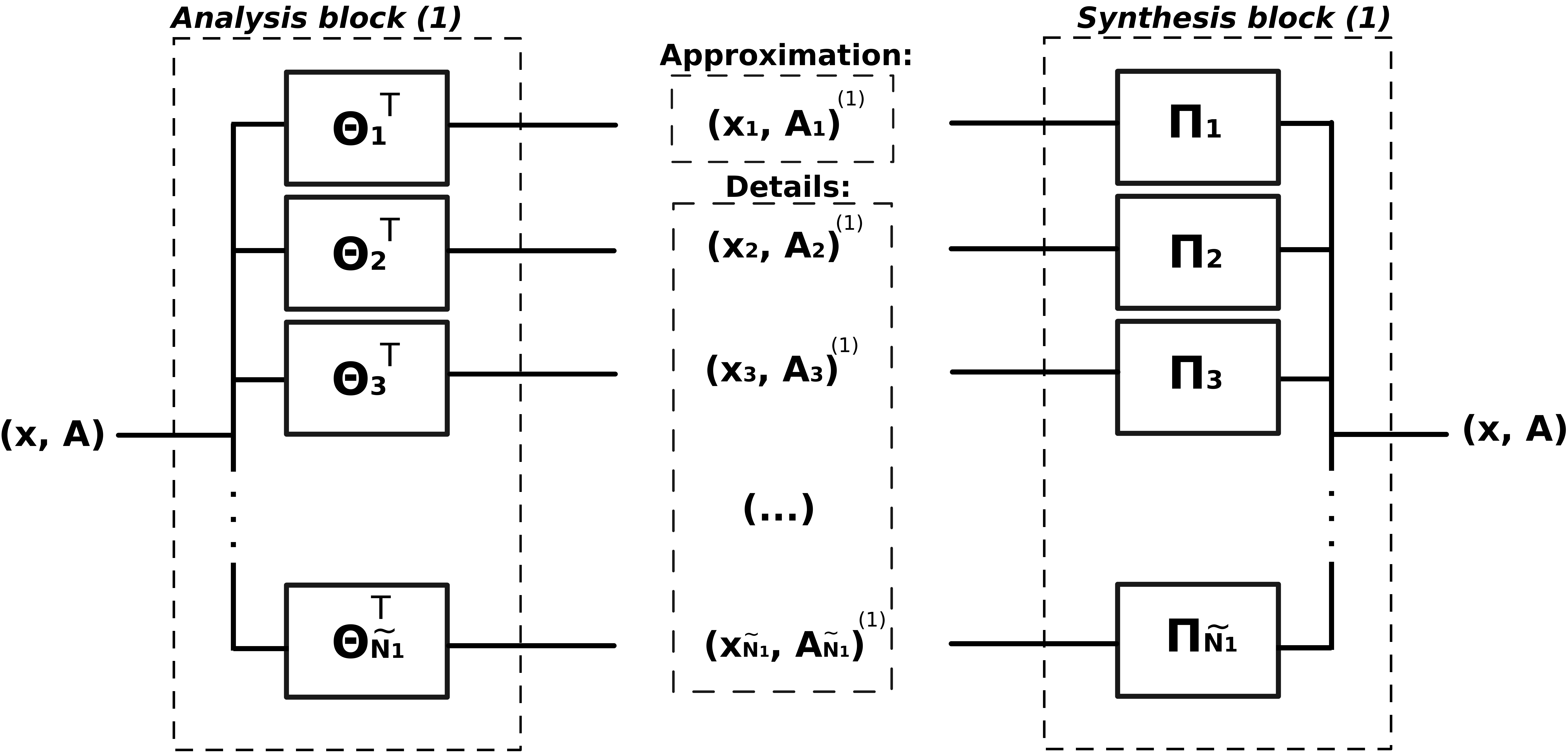}
 \caption{Schematic representation of the analysis and synthesis blocks. %of the proposed filterbanks.
 }
\label{fig:fb_scheme}
\end{figure}

\subsubsection{A remark on the storage of structural information}
\label{subsubsec:technical_dis}
An important question is whether the total amount of stored information 
pre- and post-analysis equal or not? In terms of signal 
information only (i.e., discarding the structural information), 
the total amount of stored information is equal on both sides of the analysis 
block as each of the downsampled signals 
$\bm{x_l}$ is of size $|\MODIF{\mathcal{I}}_l|$ and $\sum_l |\MODIF{\mathcal{I}}_l|=N$. 

On the other hand, in terms of structural information (i.e. the information 
of the adjacency matrices), one needs to keep both the structural information pre- and 
post-analysis. Indeed, the post-analysis structural information is not enough to 
reconstruct the original graph (unlike the signal $\bm{x}$ who can be perfectly 
reconstructed from its approximation and details, as we will see in  
Section~\ref{subsubsec:synth}). The amount of stored structural
information therefore increases after analysis and this is --at least for now-- 
an irreducible storage price to pay. This is a common downfall of all graph filterbanks 
yet proposed, e.g. \cite{narang_TSP2012,narang_TSP2013,shuman_ARXIV2013,sakiyama_TSP2014,nguyen_TSP2015}. 
Finding ways to critically sample both the structure and  the signal 
defined on it is part of our ongoing research and is not in the scope of this article. 

In the following, all graph structures are stored in the form $\bm{A}=\MODIF{\bm{A}_{int}}+\MODIF{\bm{A}_{ext}}$, as 
this does not increase the amount of information (the number of links) but still subtly 
encodes the connected subgraph structure: indeed the partition $\bm{c}$ can be exactly recovered 
by detecting the connected components of $\MODIF{\bm{A}_{int}}$. In Narang et al.~\cite{narang_TSP2012,narang_TSP2013}, authors 
also need to keep an information equivalent to $\bm{c}$: the bipartite graph decomposition of $\bm{A}$ and, for each 
bipartite graph, the information of the two sets of nodes. It is also the case in Shuman et 
al.~\cite{shuman_ARXIV2013}'s work, where authors need to keep the downsampling vector $\bm{m}$.  
%In all cases, this information does not increase the amount of stored information as it may   
%be subtly encoded by writing the adjacency matrix as a sum of 

 \subsubsection{Synthesis block}
 \label{subsubsec:synth}
\MODIF{One starts the synthesis with the list of signals $\left\{\bm{x_l}\right\}_{l\in\{1,\tilde{N}_1\}}$ 
 and the structure information $\bm{A}=\MODIF{\bm{A}_{int}}+\MODIF{\bm{A}_{ext}}$.  We show here that $\bm{x}$ 
 may be exactly recovered from that information. 
First of all, one extracts $\bm{c}$ from $\MODIF{\bm{A}_{int}}$, which in turn enables to compute 
all synthesis operators  $\{\MODIF{\bm{\Pi}_l}\}$ %_{l\in\{1,..., \tilde{N}_1\}}$ 
following Sections~\ref{subsec:local_Li} and~\ref{subsec:op_fam}. Moreover:}
\begin{lemma}\ADDED{[Perfect reconstruction]
$\bm{x}$ is perfectly reconstructed from $\{\bm{x_l}\}_{l\in[1,...,\tilde{N_1}]}$ 
by applying successively each synthesis operator to its corresponding channel:
\begin{equation}
\label{Eq:perf_rec}
\begin{aligned}
\displaystyle\sum_{l=1}^{\tilde{N}_1} \bm{\Pi}_l \bm{x_l}=\bm{x}.%=\displaystyle\sum_{l=1}^{\tilde{N}_1} \bm{\Theta}_l \bm{\Theta}_l^{\top}\bm{x}=\bm{x}.
\end{aligned}
\end{equation}}
\end{lemma}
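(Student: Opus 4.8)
The plan is to reduce the statement to an operator identity on $\mathbb{R}^N$. Since the analysis equation~(\ref{Eq:analysis_eq}) gives $\bm{x_l}=\bm{\Theta}_l^\top\bm{x}$, equation~(\ref{Eq:perf_rec}) is equivalent to $\sum_{l=1}^{\tilde{N}_1}\bm{\Pi}_l\bm{\Theta}_l^\top=\bm{I}_N$, and it suffices to prove this. First I would expand each product into rank-one terms: the columns of $\bm{\Pi}_l$ are the $\bar{\bm{p}}_l^{(k)}$ and those of $\bm{\Theta}_l$ are the $\bar{\bm{q}}_l^{(k)}$, both indexed by $k\in\mathcal{I}_l$, so $\bm{\Pi}_l\bm{\Theta}_l^\top=\sum_{k\in\mathcal{I}_l}\bar{\bm{p}}_l^{(k)}\bar{\bm{q}}_l^{(k)\top}$.

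The pivotal step is to exchange the order of summation over the channel index $l$ and the subgraph index $k$. Because $\mathcal{I}_l=\{k:N_k\geq l\}$ and $\tilde{N}_1=\max_k N_k$, a pair $(l,k)$ contributes exactly when $1\leq l\leq N_k$, whence $\sum_{l=1}^{\tilde{N}_1}\sum_{k\in\mathcal{I}_l}=\sum_{k=1}^{K}\sum_{l=1}^{N_k}$. After this regrouping the total reads $\sum_{k=1}^{K}\sum_{l=1}^{N_k}\bar{\bm{p}}_l^{(k)}\bar{\bm{q}}_l^{(k)\top}$, so the problem localizes entirely within each subgraph.

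Within a fixed subgraph I would use the zero-padding identities $\bar{\bm{q}}_l^{(k)}=\bm{C}^{(k)}\bm{q}_l^{(k)}$ and $\bar{\bm{p}}_l^{(k)}=\bm{C}^{(k)}\bm{p}_l^{(k)}$ to factor the sampling operators out: $\sum_{l=1}^{N_k}\bar{\bm{p}}_l^{(k)}\bar{\bm{q}}_l^{(k)\top}=\bm{C}^{(k)}\left(\sum_{l=1}^{N_k}\bm{p}_l^{(k)}\bm{q}_l^{(k)\top}\right)\bm{C}^{(k)\top}$. The inner sum is a local completeness relation: since $\bm{P}^{(k)\top}=(\bm{Q}^{(k)})^{-1}$, the families $\{\bm{q}_l^{(k)}\}$ and $\{\bm{p}_l^{(k)}\}$ are biorthogonal and $\sum_{l=1}^{N_k}\bm{p}_l^{(k)}\bm{q}_l^{(k)\top}=\bm{P}^{(k)}\bm{Q}^{(k)\top}=(\bm{Q}^{(k)\top})^{-1}\bm{Q}^{(k)\top}=\bm{I}_{N_k}$. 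Hence each subgraph term collapses to $\bm{C}^{(k)}\bm{C}^{(k)\top}$, which by~(\ref{eq:sampling_op_Ck}) is the diagonal projector onto the coordinates of $\Gamma^{(k)}$.

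Finally I would sum over the subgraphs. Because the partition $\bm{c}$ assigns every node to exactly one subgraph, the projectors $\bm{C}^{(k)}\bm{C}^{(k)\top}$ have pairwise disjoint supports whose union is all of $\mathcal{V}$, so $\sum_{k=1}^{K}\bm{C}^{(k)}\bm{C}^{(k)\top}=\bm{I}_N$, completing the proof. I expect the summation exchange of the second paragraph to be the only genuinely delicate point: one must check that the per-channel collections $\bm{\Theta}_l$ (one local mode of fixed rank $l$ drawn from every sufficiently large subgraph) reassemble, subgraph by subgraph, into the \emph{complete} local eigenbasis $\{\bm{q}_l^{(k)}\}_{l=1}^{N_k}$, which is exactly what licenses the local completeness relation. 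The biorthogonality of $\bm{P}^{(k)}$ and $\bm{Q}^{(k)}$ and the partition property are then routine.
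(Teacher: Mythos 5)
Your proposal is correct and follows essentially the same route as the paper's proof: reduce to the operator identity $\sum_l \bm{\Pi}_l\bm{\Theta}_l^\top=\bm{I}_N$, observe that the rank-one terms $\bar{\bm{p}}_l^{(k)}\bar{\bm{q}}_l^{(k)\top}$ are supported on subgraph blocks, regroup by subgraph to obtain $\bm{P}^{(k)}\bm{Q}^{(k)\top}$ on each block, and invoke $\bm{P}^{(k)\top}=(\bm{Q}^{(k)})^{-1}$. Your explicit summation exchange and the factorization through $\bm{C}^{(k)}$ merely make precise what the paper states directly as a block-diagonal identity.
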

\begin{proof}\ADDED{
Combining Eqs.~\eqref{Eq:analysis_eq} and~\eqref{Eq:perf_rec}, one has:
\begin{equation}
\displaystyle\sum_{l=1}^{\tilde{N}_1} \bm{\Pi}_l \bm{x_l} = 
  (\sum_{l=1}^{\tilde{N}_1} \bm{\Pi}_l \bm{\Theta}_l^{\top})\bm{x}= (\sum_{l=1}^{\tilde{N}_1} \sum_{j=1}^{|\mathcal{I}_l|} \bar{\bm{p}}_l^{\mathcal{I}_l(j)} \bar{\bm{q}}_l^{\mathcal{I}_l(j)\top}) \bm{x}.
\end{equation}
$\bar{\bm{p}}_l^{\mathcal{I}_l(j)} \bar{\bm{q}}_l^{\mathcal{I}_l(j)\top}$ is a matrix of size $N\times N$, with 
non zero coefficients only for indices in subgraph $\mathcal{I}_l(j)$. In this non-zero block, it equals 
$\bm{p}_l^{\mathcal{I}_l(j)} \bm{q}_l^{\mathcal{I}_l(j)\top}$. One finally obtains:
\begin{equation}
\begin{aligned}
 \sum_{l=1}^{\tilde{N}_1} \bm{\Pi}_l \bm{\Theta}_l^{\top}&=
 \left[\footnotesize{\begin{array}{ccccc}
\bm{P}^{(1)} \bm{Q}^{(1)\top} & 0 & \cdots & 0\\
0 & \bm{P}^{(2)} \bm{Q}^{(2)\top} & \cdots & 0\\
\vdots & \vdots & \vdots  & \vdots\\
0 & 0  & \cdots & \bm{P}^{(K)} \bm{Q}^{(K)\top}\\
\end{array}}\right]\\
&=\bm{I}_N
\end{aligned}
\end{equation}
the Identity, as $\bm{P}^{(k)\top}=\left(\bm{Q}^{(k)}\right)^{-1}$, which ends the proof}.
\end{proof}
We show in Fig.~\ref{fig:fb_scheme} a schematic representation of the analysis and synthesis blocks of the proposed 
graph filterbanks.

%For the signal $\bm{x}$, one needs to compute all operators $\bm{\Theta}_l$ of 
%Section~\ref{subsec:op_fam}. 
%All those operators derive from the knowledge of the subgraph structure $\bm{c}$, that
%one can efficiently recover 
%by computing the connected components of $\bm{A_i}$. Indeed, by construction, 
%the graph coded by $\bm{A_i}$ is indeed 
%a sum of connected components corresponding to the connected subgraphs $\MODIF{\mathcal{G }^{(k)}}$ defined by the partition 
%vector $\bm{c}$. 
%Once $\bm{c}$ is recovered, one only needs to follow calculations of Sections~\ref{subsec:local_Li} 
%and~\ref{subsec:op_fam} to obtain $\bm{\Theta}_l$. 
 
\subsubsection{Critical sampling and \MODIF{biorthogonality}}
Starting with data $\bm{x}$ of size $N$, the analysis block provides vectors $\bm{x_l}$ of size $|I_l|$
$\forall l\in\{1,..., \tilde{N}_1\}$.
Storing all the $\bm{x_l}$ accounts for $\sum_l |I_l|=N$ values: 
the filterbank is critically sampled.

% and $\bm{A}$ of size $|\mathcal{E}|$ the number of edges of the graph : 
%he or she starts with a total of $N+|\mathcal{E}|$ amount of information. 
%At the next level, after the analysis block, 
%one needs to store:
%\begin{itemize}
% \item $\bm{A_i}$ (needed to compute the $\bm{\Theta}_l$ and $\bm{\Omega}_l$) and $\bm{A_e}$ (needed to compute the $\bm{A}_l$). 
% The sum of these two adjacency matrices account for $|\mathcal{E}|$ information to store.
% \item  $\forall l\in\{1,\tilde{N}_1\}$ $\bm{x_l}$ is of size $|I_l|$ : storing all the $\bm{x_l}$ accounts 
% for $\sum_l |I_l|=N$ information to store. 
%\end{itemize}
%The total amount of stored information stays constant: the filterbank is critically sampled. \\
\MODIF{
Furthermore, this filterbank is biorthogonal. Indeed, the analysis filter 
combining the $\tilde{N}_1$ channels 
may be written  as 
$\left[\bm{\Theta}_1\quad\bm{\Theta}_2\quad\dots\quad\bm{\Theta}_{\tilde{N}_1}\right]^\top$. 
Moreover, the overall synthesis filter reads 
$\left[\bm{\Pi}_1\quad\bm{\Pi}_2\quad\dots\quad\bm{\Pi}_{\tilde{N}_1}\right]$. 
This filterbank is biorthogonal as we have:
\begin{equation}
 \left[\bm{\Pi}_1\quad\bm{\Pi}_2\quad\dots\quad\bm{\Pi}_{\tilde{N}_1}\right] \left[\bm{\Theta}_1\quad\bm{\Theta}_2\quad\dots\quad\bm{\Theta}_{\tilde{N}_1}\right]^\top=\bm{I}_N,
\nonumber\end{equation}
as shown by Lemma 1.}
\CHANGE{When we choose $p=2$ for the normalization of Eq.~\eqref{eq:norm}, this filterbank is \textit{orthogonal}.}
%In the specific case where we choose $p=2$ for the normalization of Eq.~\eqref{eq:norm}, 
% we  have  $\bm{P}^{(k)}=\bm{Q}^{(k)}$, therefore $\bm{\Pi}_l=\bm{\Theta}_l$: in this case, 
%the filterbank simply codes for an \textit{orthogonal} transform.

\subsection{The analysis cascade}
\label{subsec:cascade}
\begin{figure}
\centering
 \includegraphics[width=0.48\textwidth]{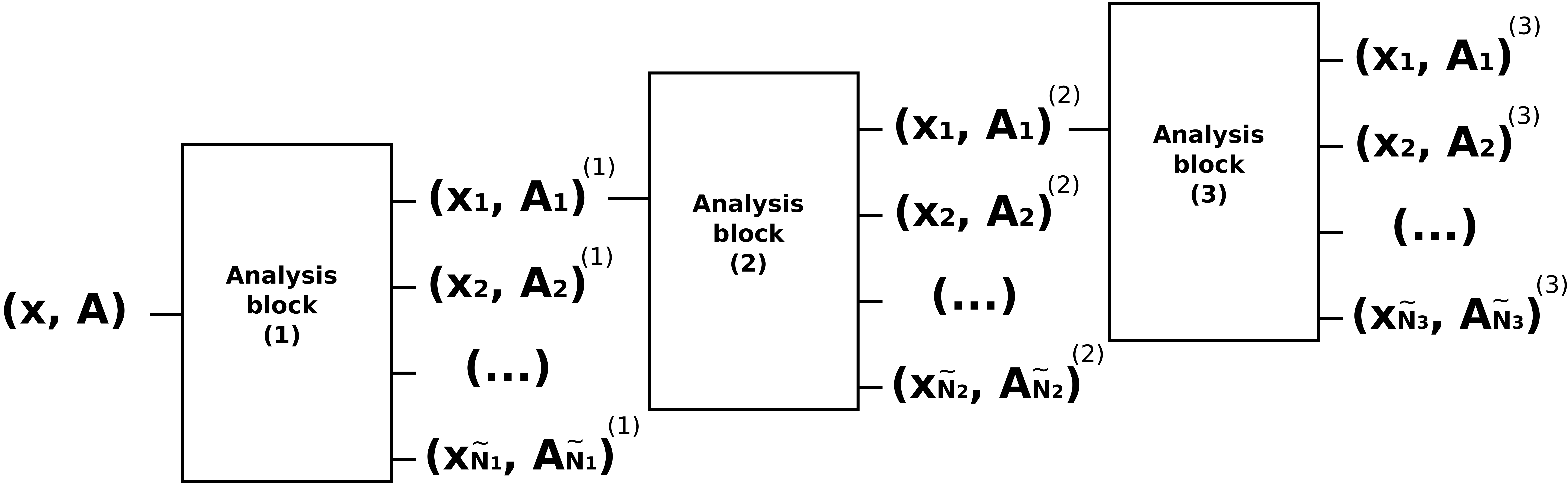}
 \caption{The first three levels of the analysis cascade.}
\label{fig:cascade}
\end{figure}
Filterbanks are easily organised in cascade, each level of which contains the analysis and synthesis operators defined 
in Section~\ref{subsec:fb_design}. Given an original graph of adjacency matrix $\bm{A}$, a signal 
$\bm{x}$ defined on it, \ADDED{and a partition $\bm{c}$ of the graph, one may 
obtain the $\tilde{N}_1$ analysis operators $\{\bm{\Theta}_l^{(1)}\}$ where the index in parenthesis 
stands for the level of the analysis cascade. The cascade's first level comprises of:}
\begin{equation}
  \MODIF{\forall l\in[1,\tilde{N}_1]\quad \bm{x_l}^{(1)}=\bm{\Theta}_l^{(1)\top}\bm{x} \mbox{  and  } \bm{A}_l^{(1)}=\bm{\Omega}_l^{(1)\top}\bm{A}_{ext}\bm{\Omega}^{(1)}_l.}
\nonumber\end{equation}
For each of these channels, one may iterate the same analysis scheme, thereby obtaining successive approximations and 
details of the original signal at different scales of analysis. \ADDED{Classically, and we will follow 
this approach in the following, one iterates the analysis scheme only on the 
approximation signal at each level of the cascade, as shown on Fig.~\ref{fig:cascade}. 
Considering the approximation signal $\bm{x}_1^{(j)}$ at level $(j)$ defined on the approximation graph 
coded by $\bm{A}_1^{(j)}$, and a partition $\bm{c}^{(j)}$ of this graph, one may obtain the 
$\tilde{N}_{j+1}$ analysis operators $\{\bm{\Theta}_l^{(j+1)}\}$, that define the next scale of analysis. 
Similarly, we note $\{\bm{\Pi}_l^{(j+1)}\}$ the associated synthesis operators.}

Note that the number of channels 
is not necessarily constant down the cascade. It is in fact adaptative: 
the number of channels $\tilde{N}_{j}$ of the analysis block $(j)$ 
depends on the maximal number of nodes contained in the subgraphs 
given by $\bm{c}^{(j)}$. 

\begin{figure}
\centering
 \includegraphics[width=0.46\textwidth]{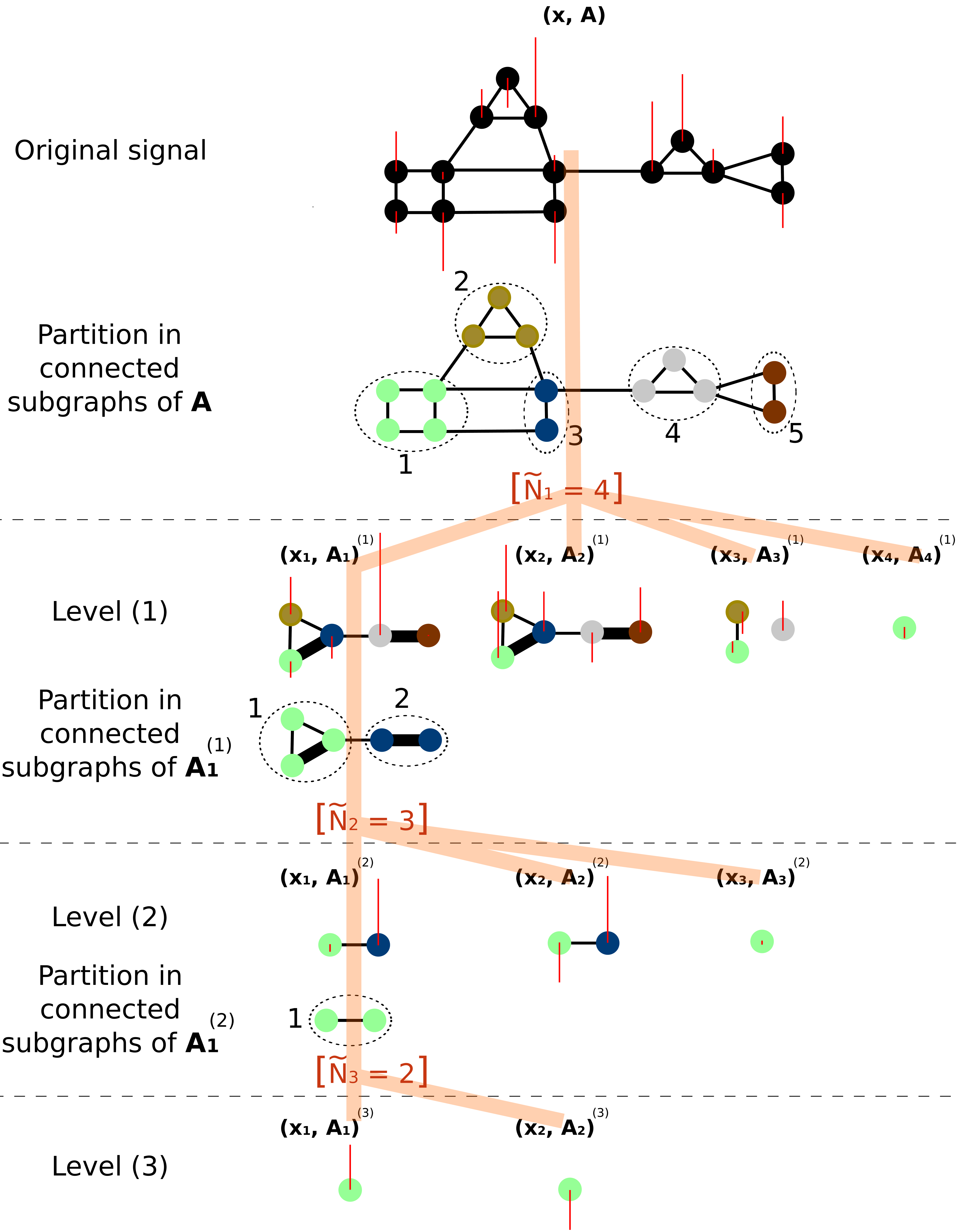}
 \caption{An analysis cascade of a signal $\bm{x}$ defined on a toy graph $\bm{A}$ (top figure). On each node, 
 the red vertical bar is proportional to the value of this node's signal. Each edge's width is proportional to 
 its weight. Under the top figure is represented 
 a partition in connected subgraph (see Section~\ref{sec:com_detect}): for clarity's sake, 
 one color is assigned to each subgraph. 
 The largest subgraph contains $\tilde{N}_1=4$ nodes: the first level of the cascade therefore contains 
 $\tilde{N}_1=4$ channels. At level $(1)$, we represent the approximation $(\bm{x_1},\bm{A_1})^{(1)}$ and the three 
 details $\left\{(\bm{x_l},\bm{A_l})^{(1)}\right\}_{l=2,3,4}$. For each of the coarsened graphs, the color of each supernode 
 corresponds to the color of $\bm{A}$'s subgraph it represents. We then iterate the analysis on the successive approximation 
 signals, until level $(3)$, where the approximation signal $(\bm{x_1},\bm{A_1})^{(3)}$ is reduced to a single node. The 
 underlying orange tree is a guide to the eye down the analysis cascade. From the 6 detail and 1 approximation signals 
 of each of its leaves, one may perfectly recover the original signal $\bm{x}$.} 
\label{fig:toy_graph}
\end{figure}

\begin{figure*}
\begin{tabular}{cccccccc}
\multicolumn{5}{c}{$\bm{\Psi}_2^{(1)}$}&\multicolumn{3}{|c|}{$\bm{\Psi}_3^{(1)}$}  \\\hline\\[-2.3ex]
\includegraphics[width=0.09\textwidth]{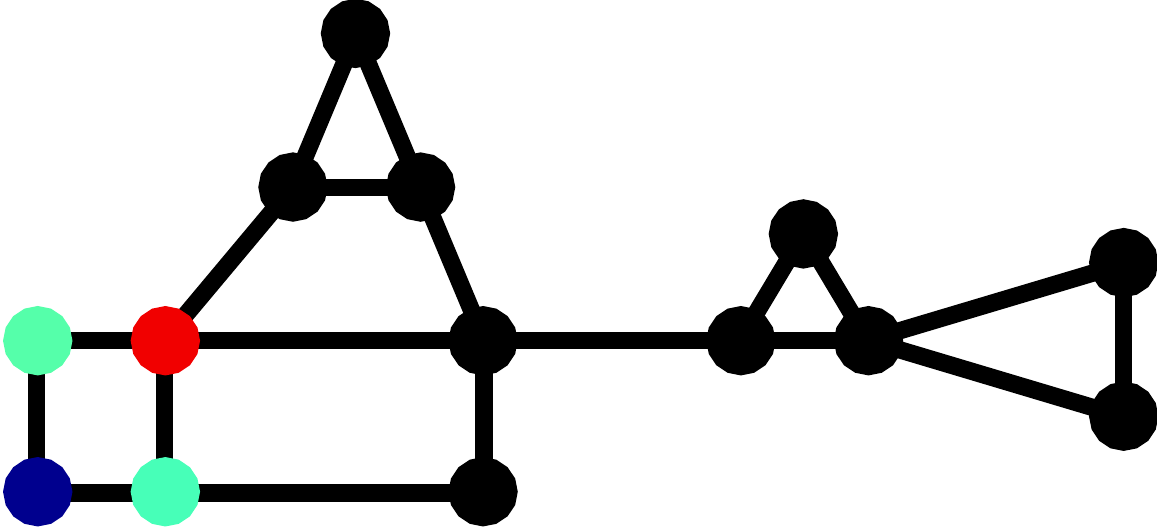} 
& \includegraphics[width=0.09\textwidth]{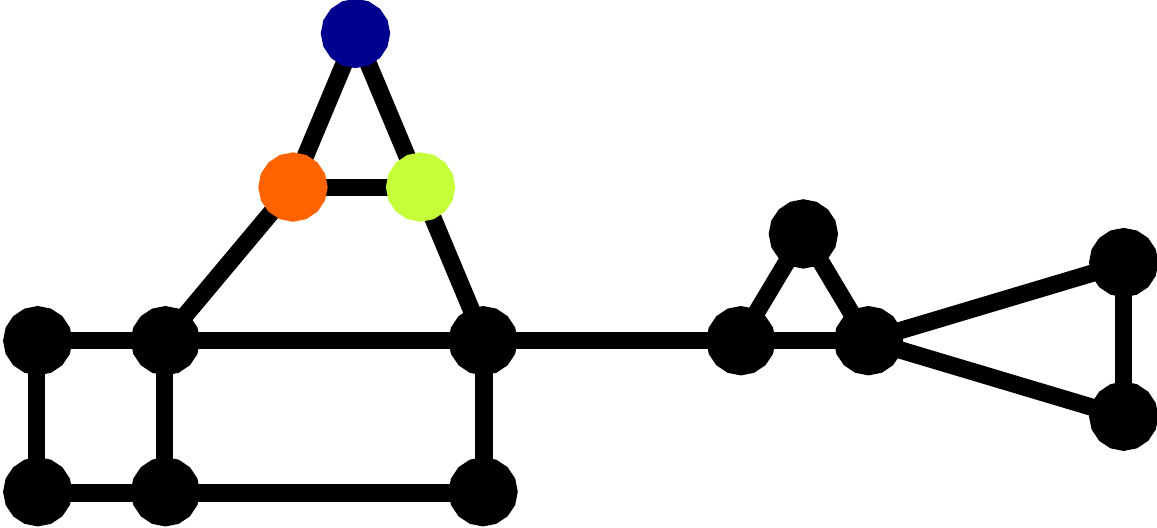} 
& \includegraphics[width=0.09\textwidth]{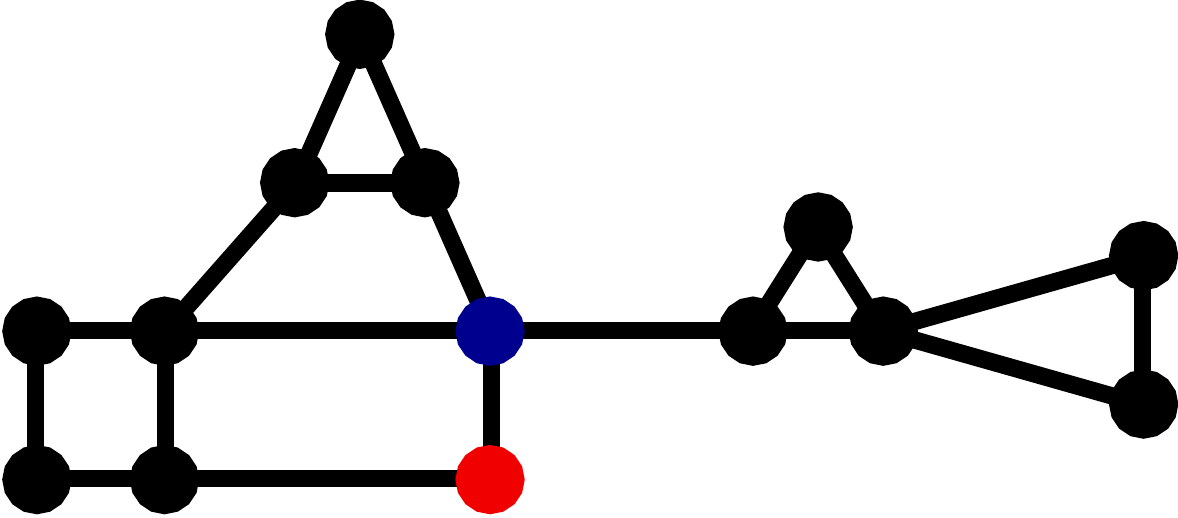}
&\includegraphics[width=0.09\textwidth]{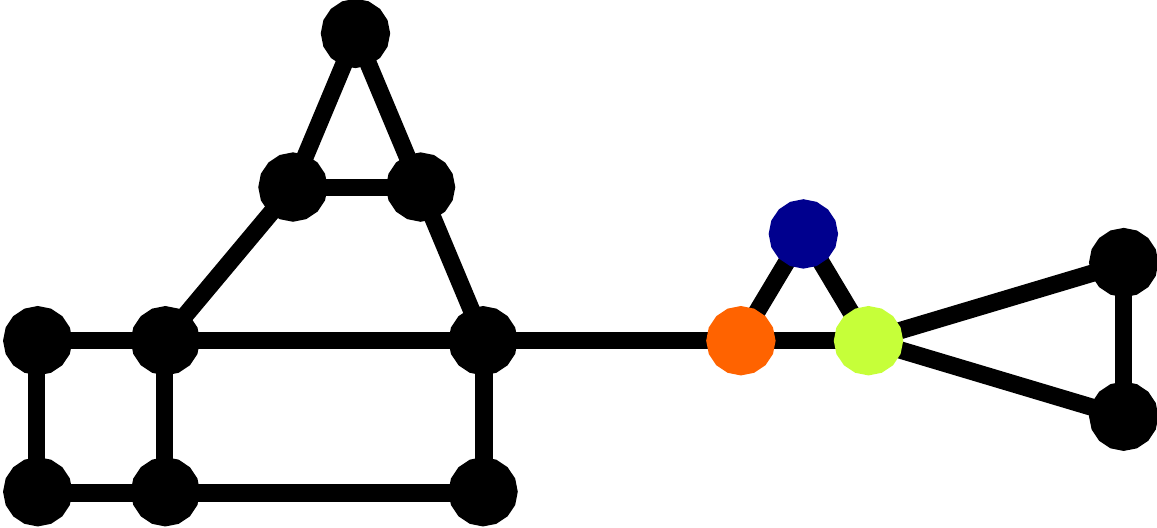} 
&\includegraphics[width=0.09\textwidth]{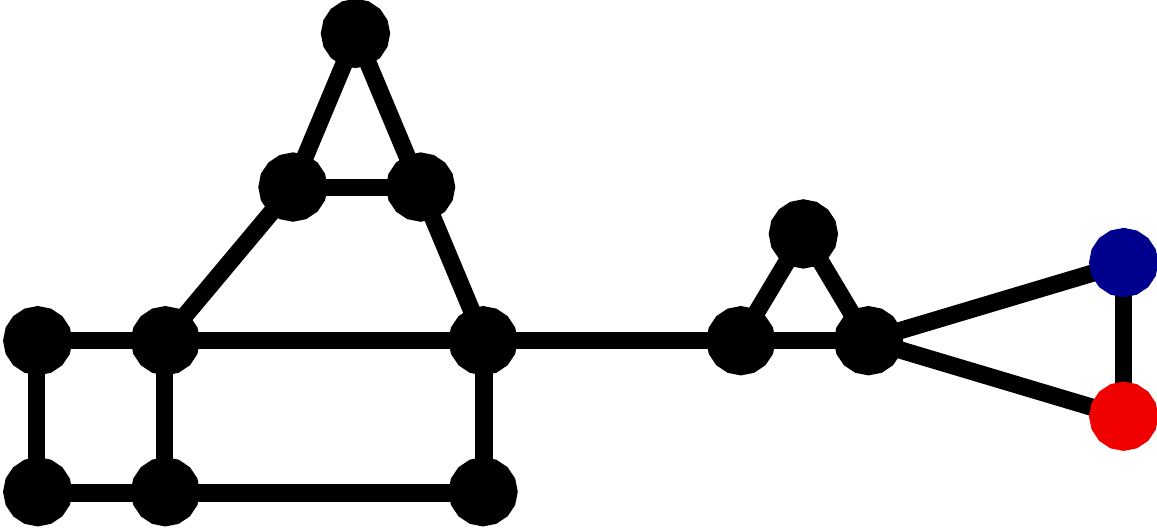}
&\multicolumn{1}{|c}{\includegraphics[width=0.09\textwidth]{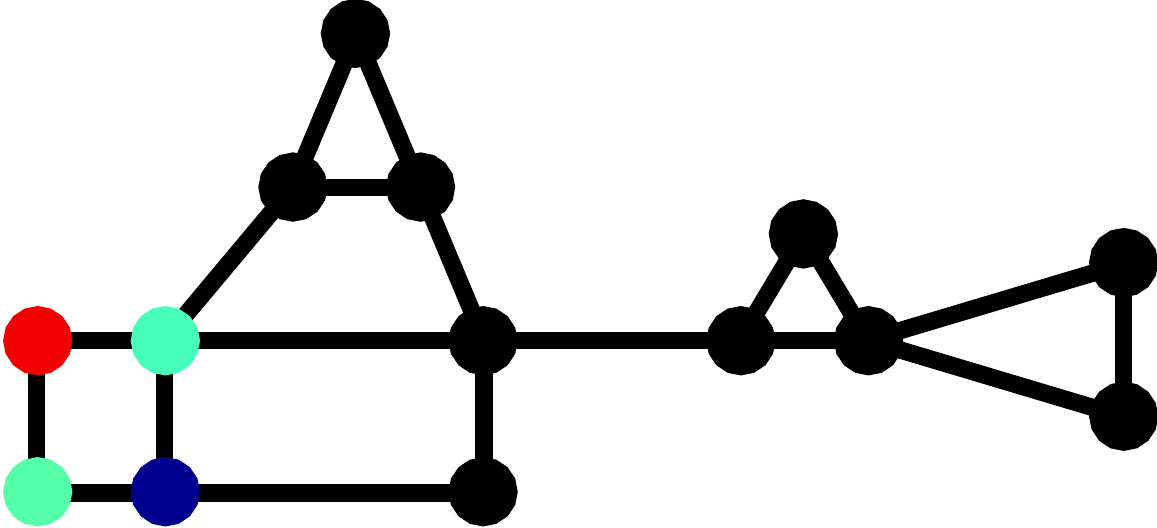}} 
& \includegraphics[width=0.09\textwidth]{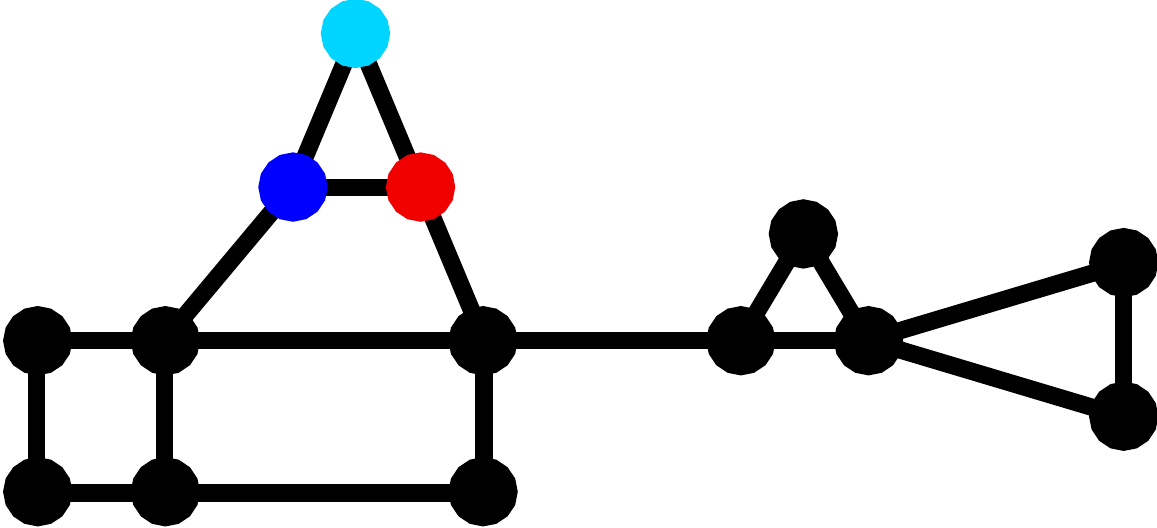}
&\multicolumn{1}{c|}{\includegraphics[width=0.09\textwidth]{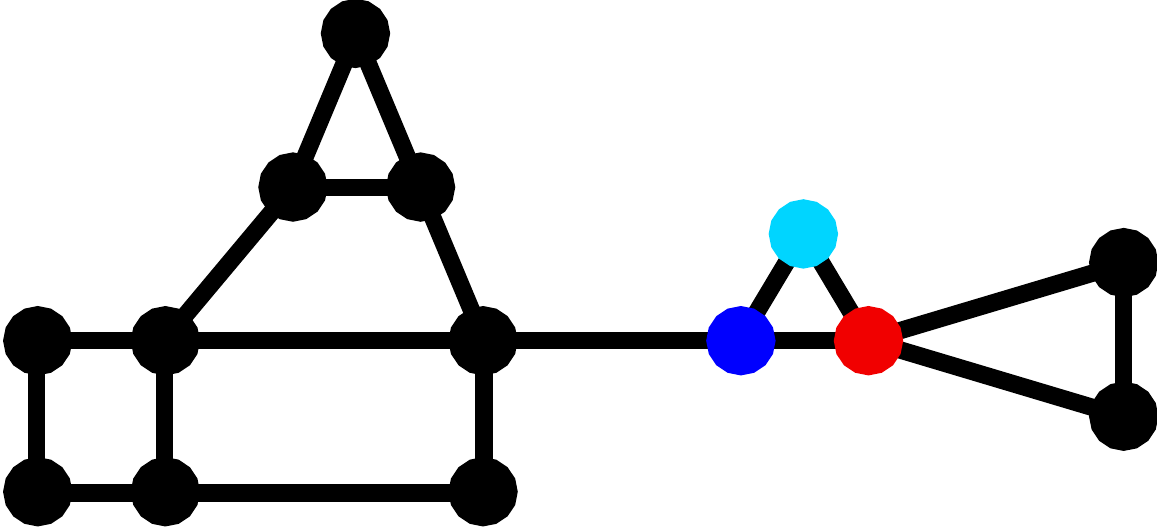}} \\\toprule
\multicolumn{1}{c|}{$\bm{\Psi}_4^{(1)}$} & \multicolumn{2}{c|}{$\bm{\Psi}_2^{(2)}$} 
& \multicolumn{1}{c|}{$\bm{\Psi}_3^{(2)}$} &\multicolumn{1}{c|}{$\bm{\Psi}_2^{(3)}$} 
&\multicolumn{1}{c|}{$\bm{\Phi}^{(3)}$} & & \\\cline{1-6}\\[-2.3ex]
\multicolumn{1}{c|}{\includegraphics[width=0.09\textwidth]{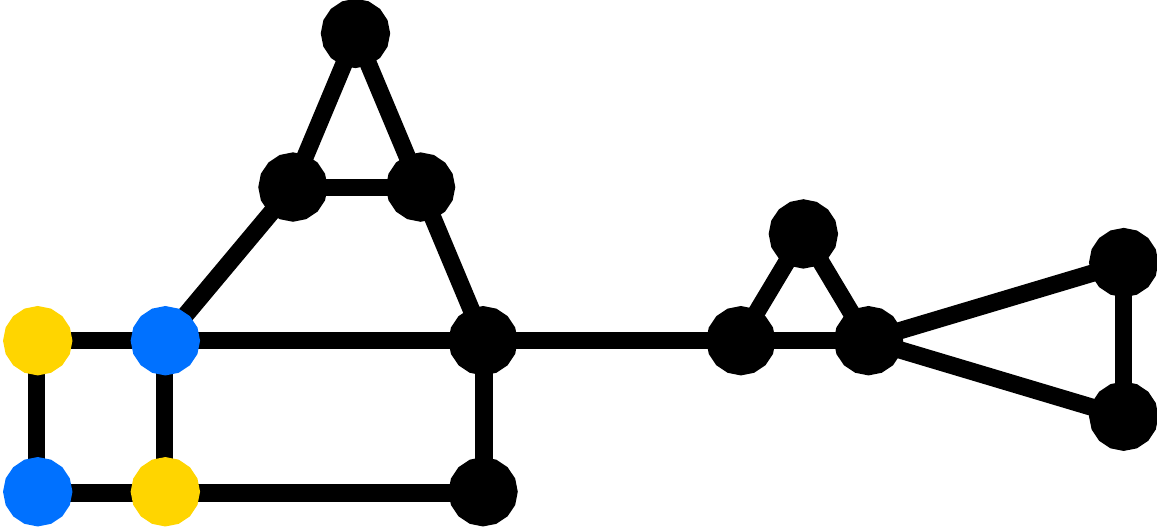}} 
& \includegraphics[width=0.09\textwidth]{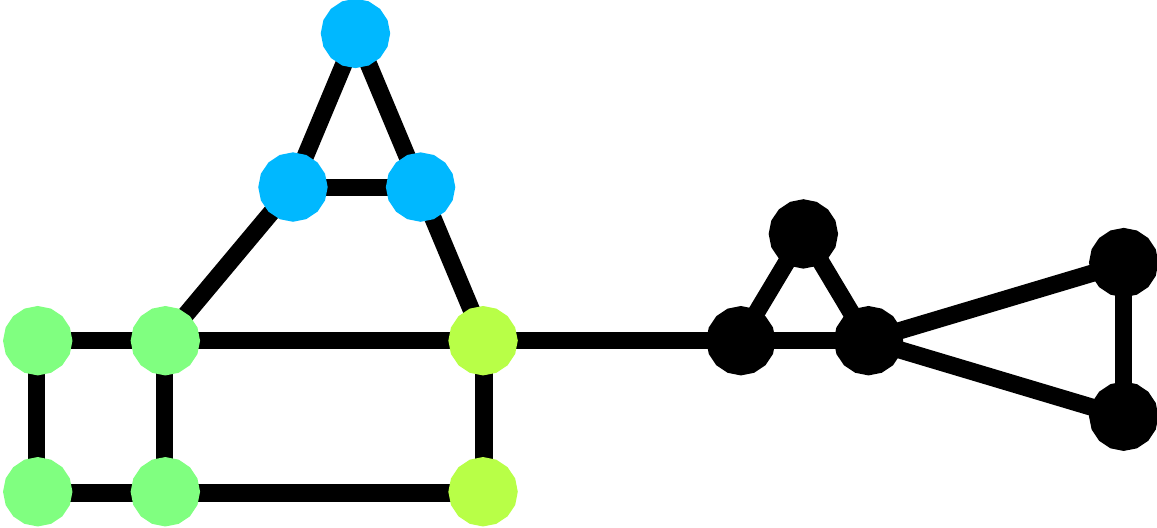}  
& \includegraphics[width=0.09\textwidth]{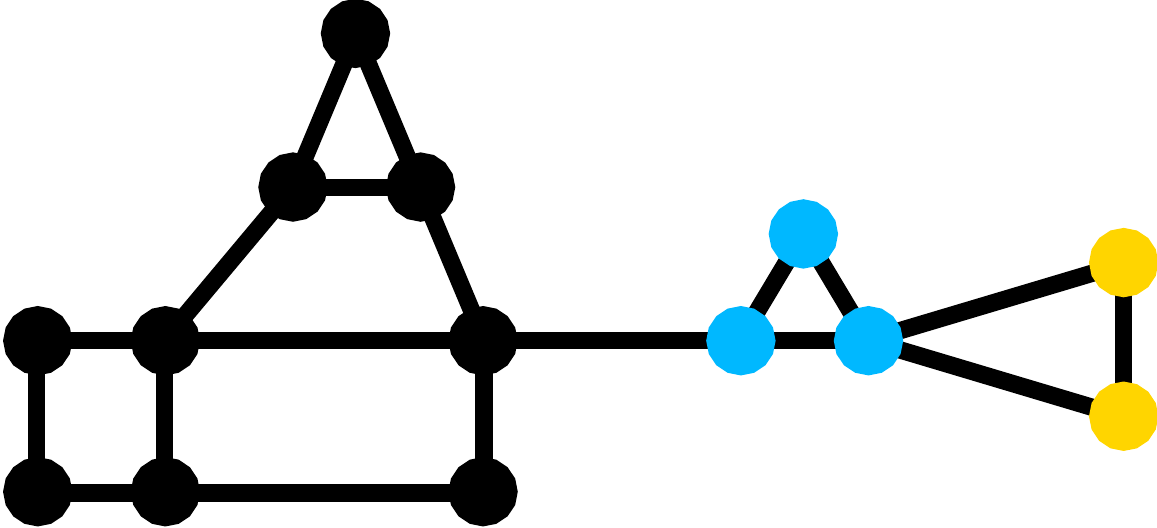}
&\multicolumn{1}{|c}{\includegraphics[width=0.09\textwidth]{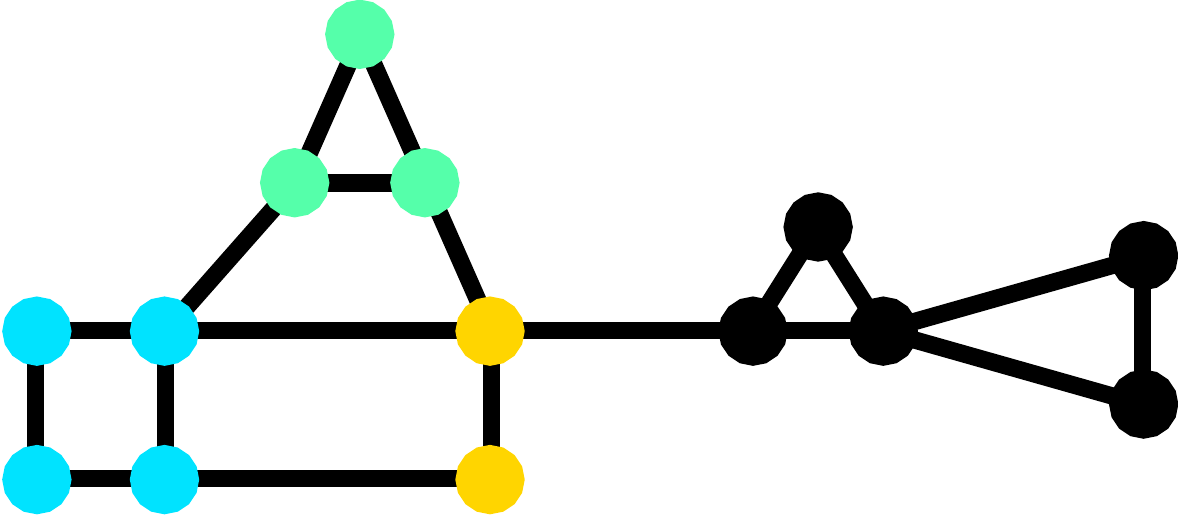}} 
& \multicolumn{1}{|c|}{\includegraphics[width=0.09\textwidth]{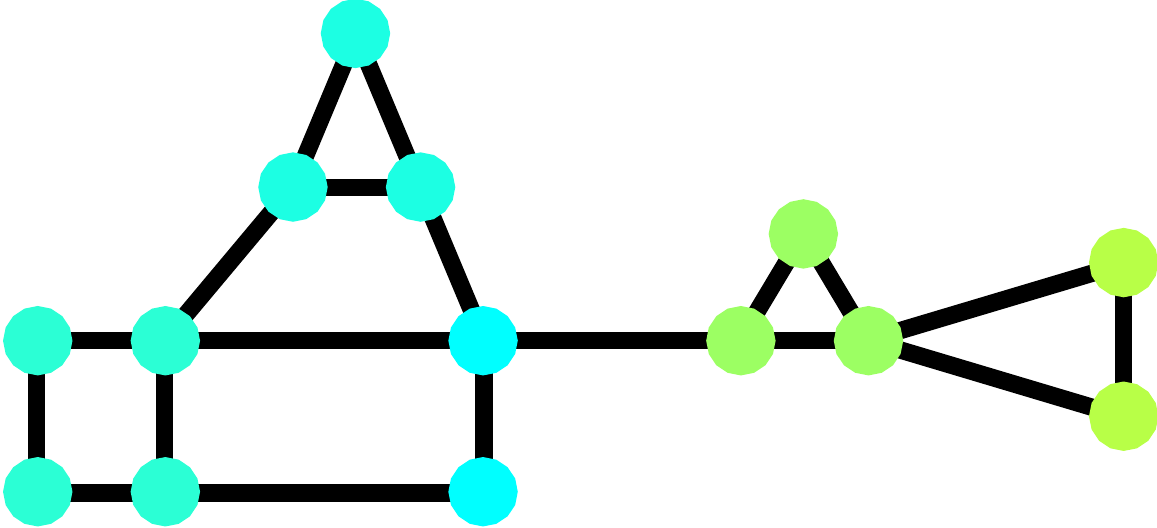}} 
& \multicolumn{1}{c|}{\includegraphics[width=0.09\textwidth]{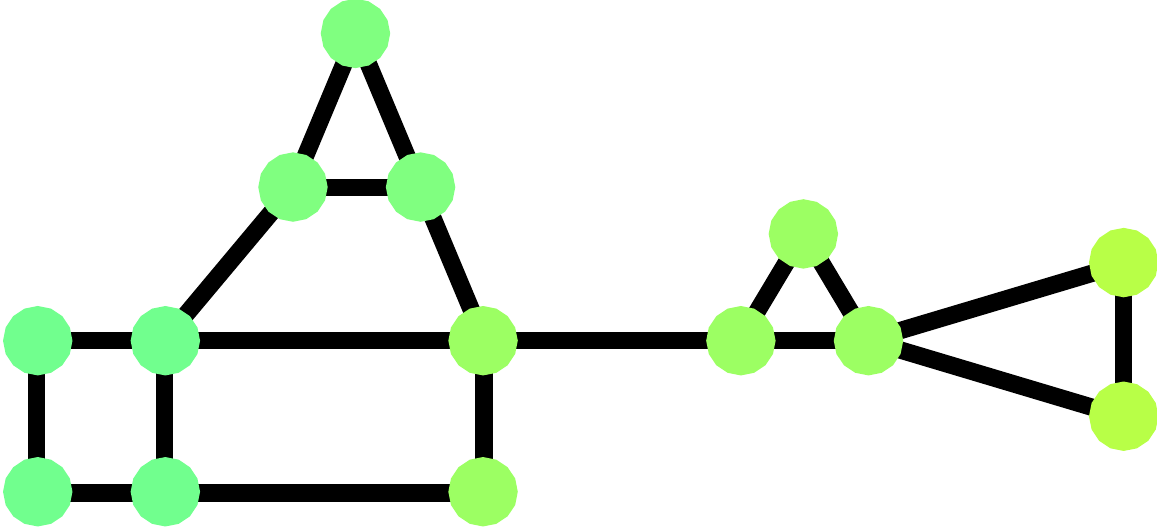}} 
& \multicolumn{2}{c}{\includegraphics[width=0.22\textwidth]{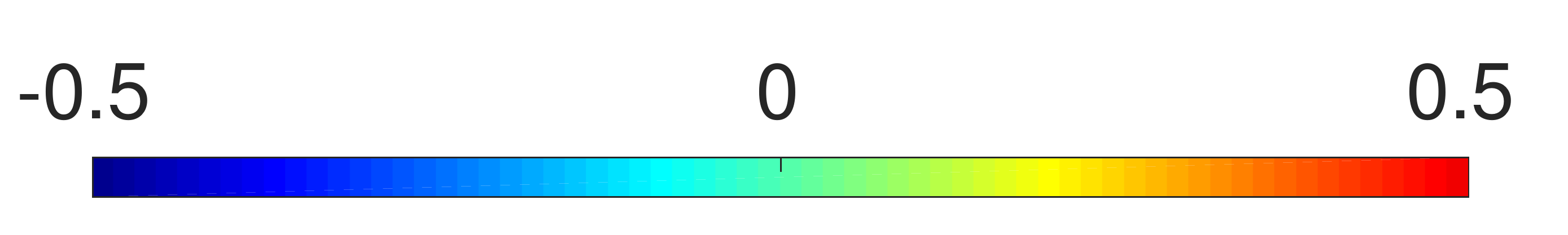}}\\
\end{tabular}
    \caption{\ADDED{The 14 analysis atoms of the proposed filterbank for the graph represented at the top of 
    Fig.~\ref{fig:toy_graph}. 
    From left to right, and top to bottom, 
    are listed the 13  detail atoms, from small to large scales. 
    The bottom right atom is the approximation atom (first channel at level (3) of 
    the analysis cascade). A node's color 
    represents the atom's value on that node, as indicated by the colorbar. 
    A black node corresponds to a \textit{strictly} null value:  
    indeed, atoms are compact-support.}}
    \label{fig:atoms_toy_graph}
\end{figure*}

\subsubsection{\CHANGE{A signal defined on a simple toy graph}}

\CHANGE{Fig.~\ref{fig:toy_graph} shows the analysis cascade on a toy signal defined on a simple graph of size 14. 
%. We consider a 14 node graph 
%and a signal defined on it as shown on the top of . The rest of Fig.~\ref{fig:toy_graph} illustrates 
%the analysis cascade of this graph signal. 
%A partition of this graph in 5 connected subgraphs 
%is shown under the original graph. The largest subgraph contains $\tilde{N}_1=4$ nodes: 
%level $(1)$ of the cascade thus contains  
%4 channels. We show the four filtered signals $\bm{x}_l^{(1)}$ defined on their respective coarsened  
%graphs $\bm{A}_l^{(1)}$: $(\bm{x}_1,\bm{A}_1)^{(1)}$ is the approximation signal of the original signal $(\bm{x},\bm{A})$, and 
%$(\bm{x}_2,\bm{A}_2)^{(1)}$ (resp. $(\bm{x}_3,\bm{A}_3)^{(1)}$, $(\bm{x}_4,\bm{A}_4)^{(1)}$) its first 
%(resp. second, third) detail signal. 
Let us take a close look at 
 subgraph number 5 of the original graph: it contains 2 nodes and the signal on each of its node is of same absolute value but of opposite signs. 
As expected, $\bm{x}_1^{(1)}(5)$, the approximation signal on the corresponding supernode at level $(1)$ is null  
(average of the 2 original values); and 
$\bm{x}_2^{(1)}(5)$, the first detail signal is large: it is the difference between the 2 original values. 
Moreover, as this subgraph contains only 2 nodes, its local Laplacian does not have a third eigenvector: this subgraph 
does not participate to the second and third detail signals and its associated supernode does not appear in 
$(\bm{x}_3,\bm{A}_3)^{(1)}$ nor $(\bm{x}_4,\bm{A}_4)^{(1)}$.
}

%Let us now look at the structures of the four coarsened graphs. As all subgraphs contain at least two nodes, 
%graphs associated to $\bm{A}_1$ and $\bm{A}_2$ are identical and the weigth connecting two supernodes correspond to the 
%sum of the weights of the links connecting the two associated subgraphs. For instance the weight of the link connecting 
%supernodes 1 and 3 is twice as large as the weight of the links connecting 1 and 2 or 2 and 3. 
%Graph with adjacent matrix $\bm{A}_3$ only contains supernodes for subgraphs containing at least 3 nodes: this leaves subgraphs 
%1, 2 and 4. Supernode with label 4 not cànnected, as its associated subgraph is not linked to subgraphs 1 and 2. Finally,  
%$\bm{A}_4$ is reduced to supernode 1, as only subgraph 1 contains four nodes (and therefore a fourth 
%local Fourier mode).

%For clarity's sake, we do not iterate the cascade on the three detail signals, and rather concentrate on the approximation signal. 
%Moving down the Figure, we draw a partition of $\bm{A}_1$ in 2 connected subgraphs. The largest subgraph contains 
%$\tilde{N}_2=3$ nodes, hence level $(2)$ of the cascade has 3 channels:
%$(\bm{x}_1,\bm{A}_1)^{(2)}$, the approximation signal at level (2) of the original signal $(\bm{x},\bm{A})$, and 
%$(\bm{x}_2,\bm{A}_2)^{(2)}$ (resp. $(\bm{x}_3,\bm{A}_3)^{(2)}$) its first (resp. second) second-level detail signal. 
%Iterating the analysis one more time on $(\bm{x}_1,\bm{A}_1)^{(2)}$ gives the third level of the cascade. 

\CHANGE{The analysis cascade decomposes the original signal in 3 detail signals (of sizes 5, 3 and 1) at level $(1)$, 2 detail signals 
(of sizes 2 and 1) at level $(2)$, 1 detail signal and one approximation signal (both of size 1) at level $(3)$. From 
these 7 downsampled signals of total size 14, one may perfectly reconstruct the original signal, using the synthesis 
operators defined in Section~\ref{subsubsec:synth}.}

%\ADDED{In Fig.~\ref{fig:atoms_toy_graph}, we show the 14 analysis atoms corresponding to this analysis 
%cascade: one approximation atom (from the approximation channel  at the last level of the cascade); 
%and 13 detail atoms that represent the other channels.}

\subsection{\MODIF{Atoms of analysis and the choice of normalization}}
\label{subsec:Haar_particular}
\ADDED{To study the effect of the filterbank, one may look at the \CHANGE{dictionary} of analysis atoms, and of 
recovery atoms. With additional assumptions, it could be wavelets.
However, we will avoid using the term wavelet and prefer the more general term of ``atom'', 
as they do not necessarily have wavelet's properties: they are for instance 
not related to translation on the graph.} %,
%nor are they band-pass in the global graph Fourier domain (see Fig~\ref{fig:image_atoms}).}
%and there is in fact no general consensus over a proper definition of translation for graph signals~\cite{girault_SPL2015,Shuman_ACHA2016,sandryhaila_TSP2013}. 
%Here, we will not  even try to define such a translation for the atoms. 

\ADDED{%Given an original graph signal of size $N$, let us stop the analysis cascade at level $(j)$. 
To each output of the analysis cascade (approximations and details at all levels) is 
associated an analysis atom. 
%Say the approximation graph signal is of size $N^{(j)}$ at this level. 
Approximation analysis atoms (``scaling function-like'') are associated to the first channel of each level:
at level $(j)$, they are the columns of 
$\bm{\Theta}_1^{(j)}$, \CHANGE{upsampled} back to the original graph's size:}
\begin{equation}
 \ADDED{\bm{\Phi}^{(j)}= \bm{\Theta}_1^{(1)}\times\cdots\times\bm{\Theta}_1^{(j-1)}\times\bm{\Theta}_1^{(j)}.}
\end{equation}
\ADDED{
Detail analysis atoms (``wavelet-like'') are obtained from all but the first channel at each level. 
More precisely at level $(j)$, the detail analysis atoms associated to channel $l\neq1$ are the columns of 
$\bm{\Theta}_l^{(j)}$,  \CHANGE{upsampled} back to the original graph's size:}
\begin{equation} \ADDED{
\bm{\Psi}_{l}^{(j)}=\bm{\Theta}_1^{(1)}\times\cdots\times\bm{\Theta}_1^{(j-1)}\times\bm{\Theta}_l^{(j)}.}
%\begin{aligned}
% &= \left[\bm{\Theta}_2^{(1)}| \bm{\Theta}_3^{(1)}| \cdots| \bm{\Theta}_{\tilde{N}_1}^{(1)}\right]\\
%\bm{\Psi}_2  &=\bm{\Theta}_1^{(1)}\times\left[\bm{\Theta}_2^{(2)}| \bm{\Theta}_3^{(2)}| \cdots| \bm{\Theta}_{\tilde{N}_2}^{(2)}\right],\\
% &\cdots\\
%\bm{\Psi}_j &=\bm{\Theta}_1^{(1)}\times\cdots\times\bm{\Theta}_1^{(j-1)}\times\left[\bm{\Theta}_2^{(j)}| \bm{\Theta}_3^{(j)}| \cdots| \bm{\Theta}_{\tilde{N}_j}^{(j)}\right].
%\end{aligned}}
\end{equation}

\ADDED{The choice of the $L_p$ norm in Eq.~\eqref{eq:norm} is now dictated by the \CHANGE{desired} properties of the atoms.
A first possible choice is $p=1$, i.e. normalization in $L_1$, as it is
the only normalization that ensures that the detail analysis atoms $\bm{\Psi}_{l}^{(j)}$ have zero mean -- a desirable feature 
to have atoms as close as possible to a wavelet interpretation. Another possibility would be to normalize
in $L_2$ (as for the Haar filterbank).
In this case, detail atoms do not have zero mean in general, however the energy of the modes is constant. 
In Section~\ref{sec:applications}, the normalization will be application-dependent. 
The default normalization is with $L_1$. }

\CHANGE{In Fig.~\ref{fig:atoms_toy_graph}, we show the 14 analysis atoms corresponding to the analysis 
cascade of Fig.~\ref{fig:toy_graph}: one approximation atom (from the approximation channel  at the last level of the cascade); 
and 13 detail atoms that represent the other channels.}

\ADDED{A property of the atoms is that their support is always compact: each 
is defined and non-zero only on one subgraph. On the other hand, in the global Fourier 
domain, the atoms are exactly localized only if the decomposition in subgraphs  
corresponds exactly to different connected components of the whole graph (in this case,
the global Fourier matrix is the concatenation of all local Fourier matrices). If not,
the further away is the graph from this  disconnected model, the less localized are
the atoms in the global Fourier domain. 
}

\section{Detecting a partition of connected subgraphs}
\label{sec:com_detect}

\begin{figure}
\centering
\includegraphics[width=0.27\textwidth]{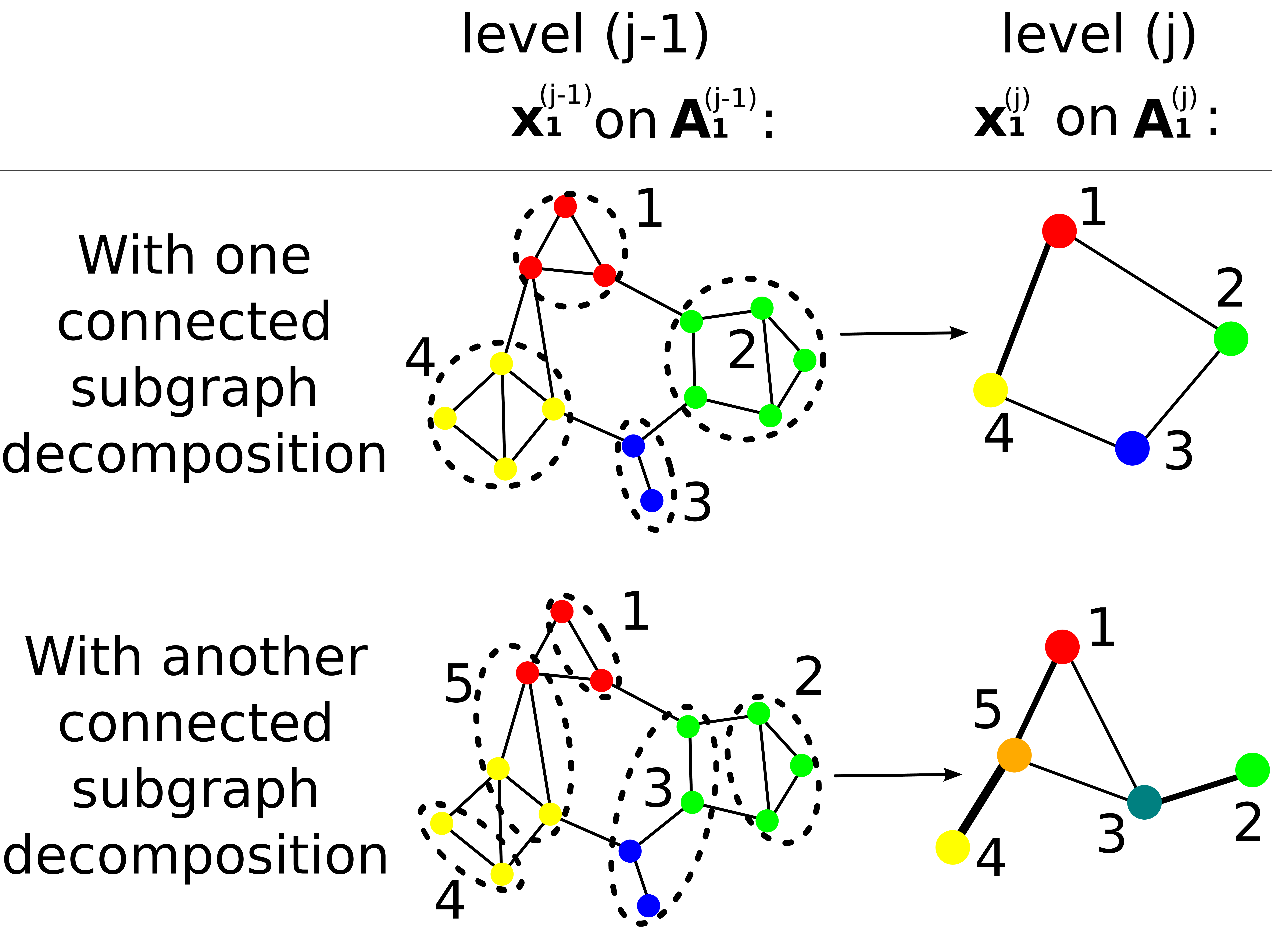}
 \caption{
 Approximation  $(\bm{x_1}^{(j)},\bm{A_1}^{(j)})$ at level $(j)$ of an analysis cascade 
 of the signal-structure couple  $(\bm{x_1}^{(j-1)},\bm{A_1}^{(j-1)})$, given two different 
 partitions in connected subgraphs (represented by the dotted lines). 
 The signal is represented by colors on the nodes. Subgraph $k$ in the original graph is represented by supernode $k$ in the coarsened graph. 
 As seen in Section~\ref{sec:design}, the signal on supernode $k$ is the signal's average on the original subgraph $k$. 
 Note the strong impact of the partition on the approximation signal.
}
\label{fig:intuition}
\end{figure}

The proposed filterbank explicitly integrates the graph structure in connected subgraphs. 
A central question arises: 
how does one choose a particular partition $\bm{c}$ of the graph in connected subgraphs? 
The partition choice has a strong influence on what the filterbank achieves, as shown 
in Fig.~\ref{fig:intuition} where we compare the effect of downsampling for two different 
partitions on a toy graph. 
The practitioner has the choice among a wide variety 
of options to find such a partition: \CHANGE{he or she} could follow graph partitioning techniques of~\cite{karypis_SIAM1998} 
or~\cite{teng_bookchapter1999}, or use graph nodal domains~\cite{band2008} -- either very high frequency ones as 
in~\cite{shuman_ARXIV2013} or others --- or any other solution... 
While the proposed filterbank is well-defined for any of these partitions, the final decision regarding the partitioning algorithm
will depend on what the user wants the filterbanks to achieve. 

In the following, we show applications for compression and denoising.
We seek to typically transform the original signal \CHANGE{into a sparser} one after analysis. 
For that, we look for partitions that separate the graph into groups of nodes more connected to themselves 
than with the rest of the graph: they are known as communities. Indeed, as in image or video compression, 
we suppose that low-frequencies contain 
 the useful information of the signal. Approximating a community of nodes, each one 
with its signal value,  by a supernode on which is the average over the community is a  way to 
keep such low-frequencies.

\begin{figure*}
    \begin{center}
    \begin{minipage}{.15\linewidth}
\includegraphics[width=\textwidth]{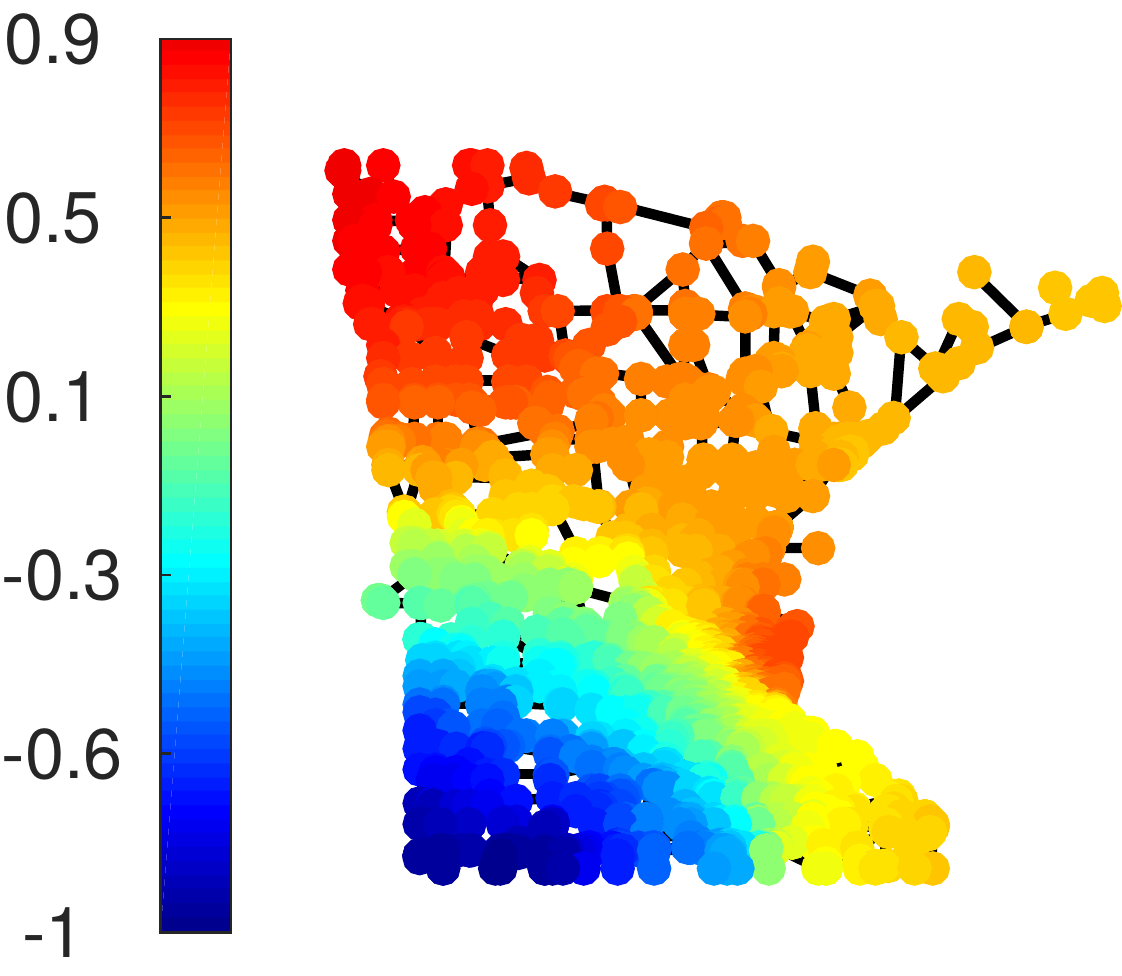}
\end{minipage}
\begin{minipage}{.15\linewidth}
\centering
   \includegraphics[width=\textwidth]{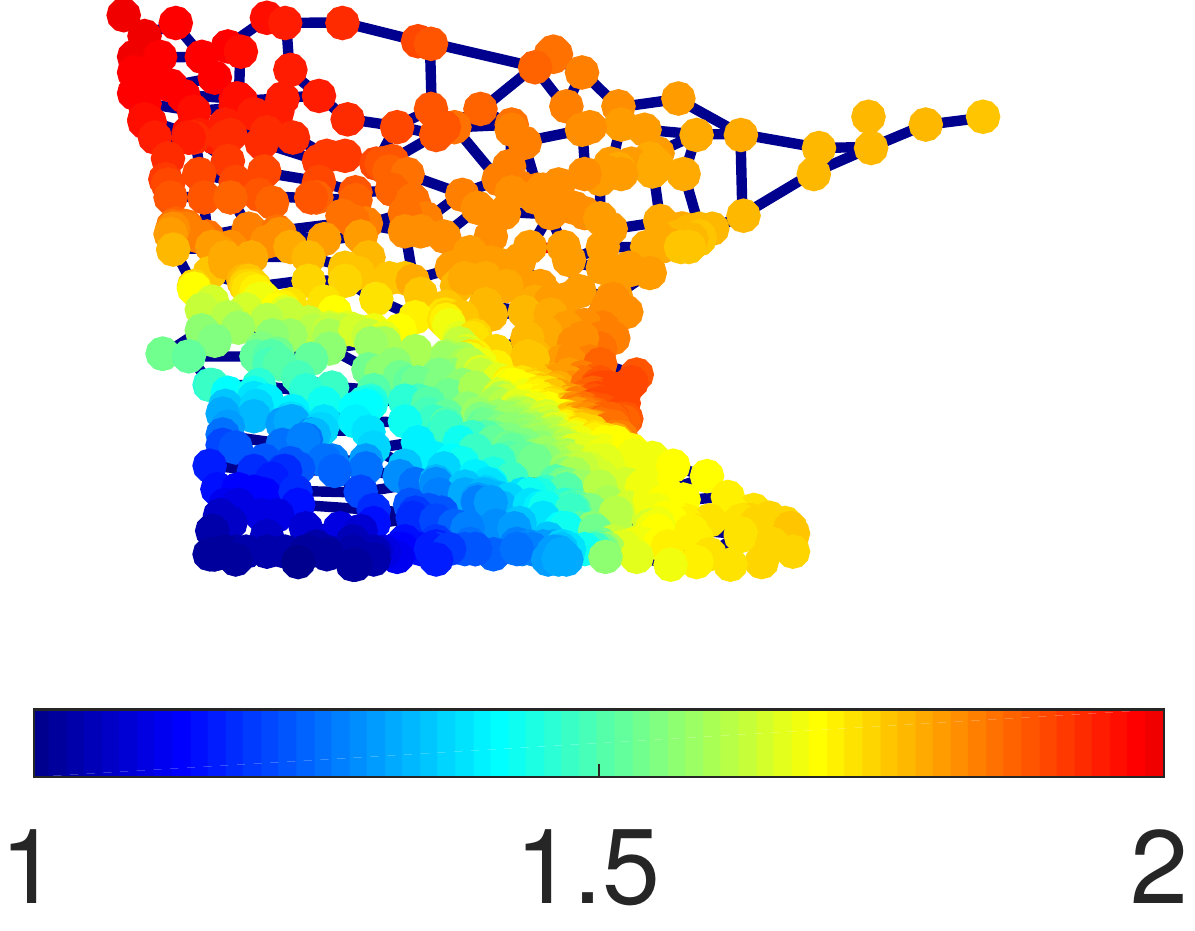}\\\vspace{0.2cm}
   \includegraphics[width=0.80\textwidth]{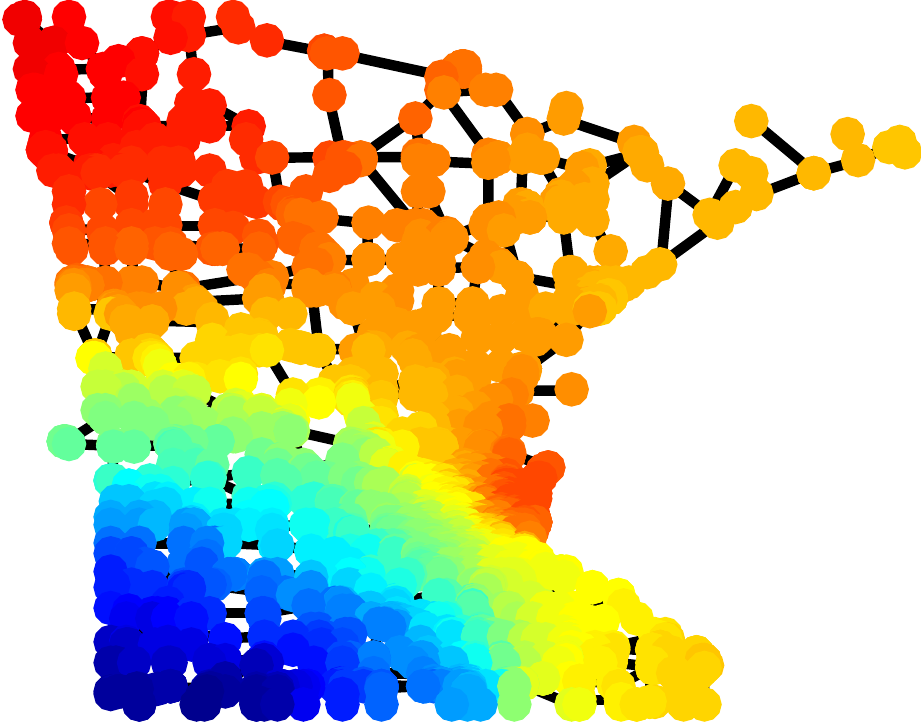}
\end{minipage}
\begin{minipage}{.15\linewidth}
\centering
   \includegraphics[width=\textwidth]{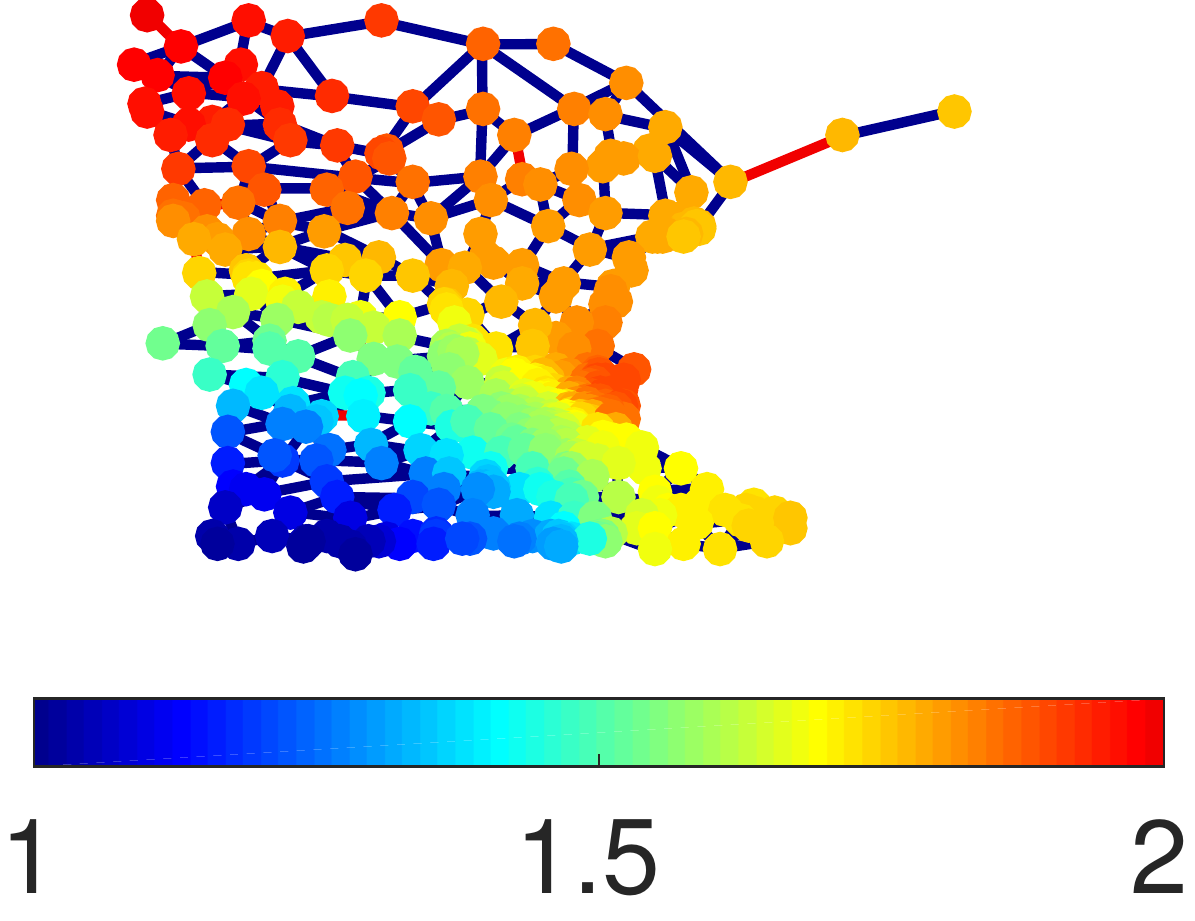}\\\vspace{0.2cm}
   \includegraphics[width=0.80\textwidth]{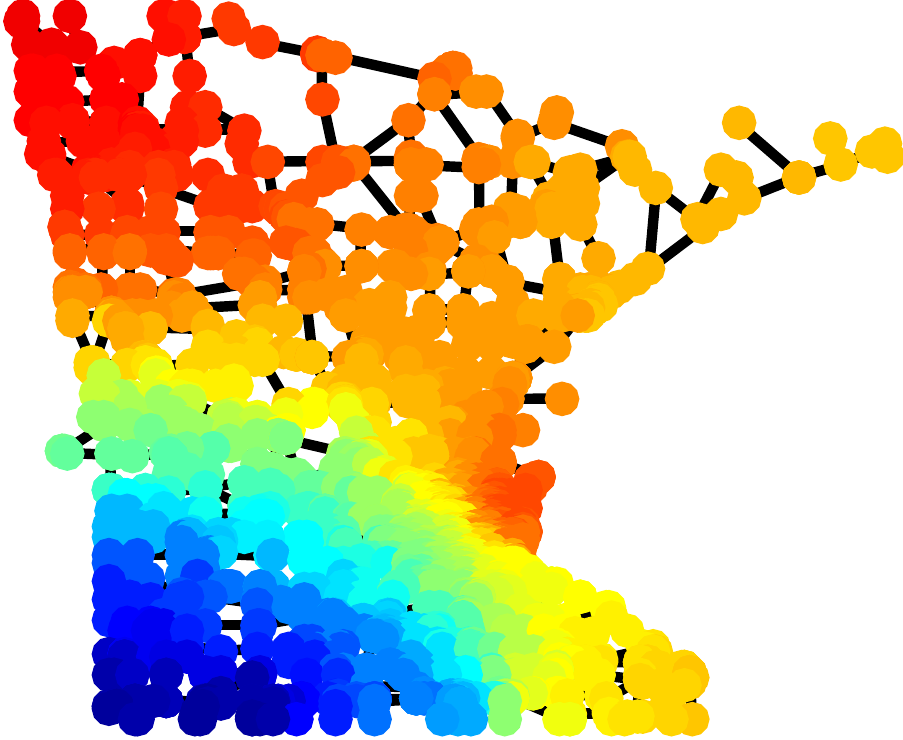}
\end{minipage}
\begin{minipage}{.15\linewidth}
\centering
   \includegraphics[width=\textwidth]{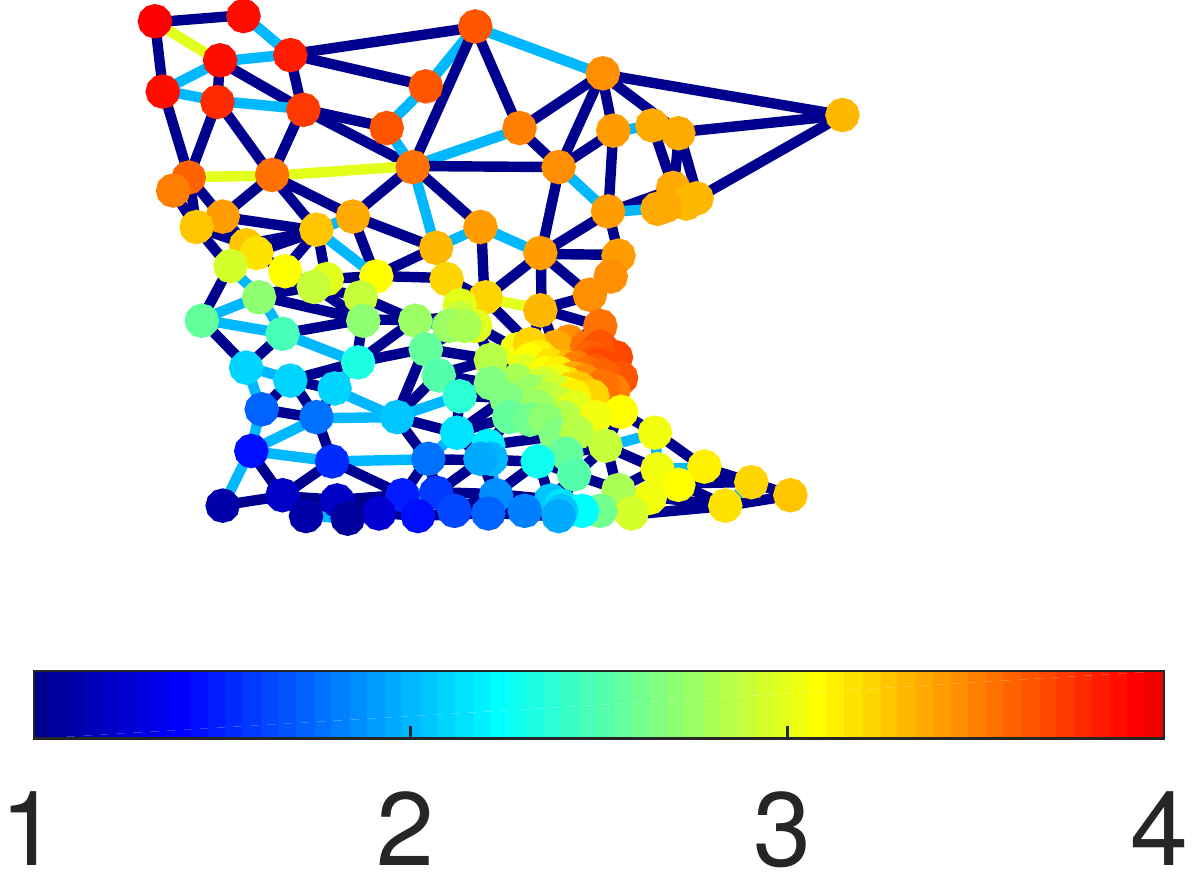}\\\vspace{0.2cm}
   \includegraphics[width=0.80\textwidth]{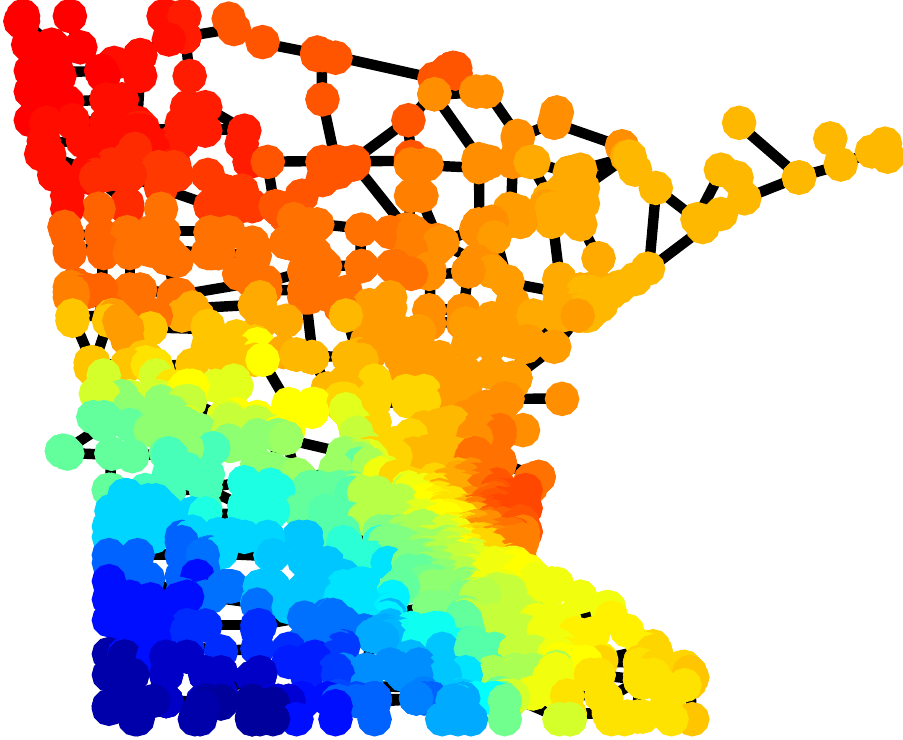}
\end{minipage}
\begin{minipage}{.15\linewidth}
\centering
   \includegraphics[width=\textwidth]{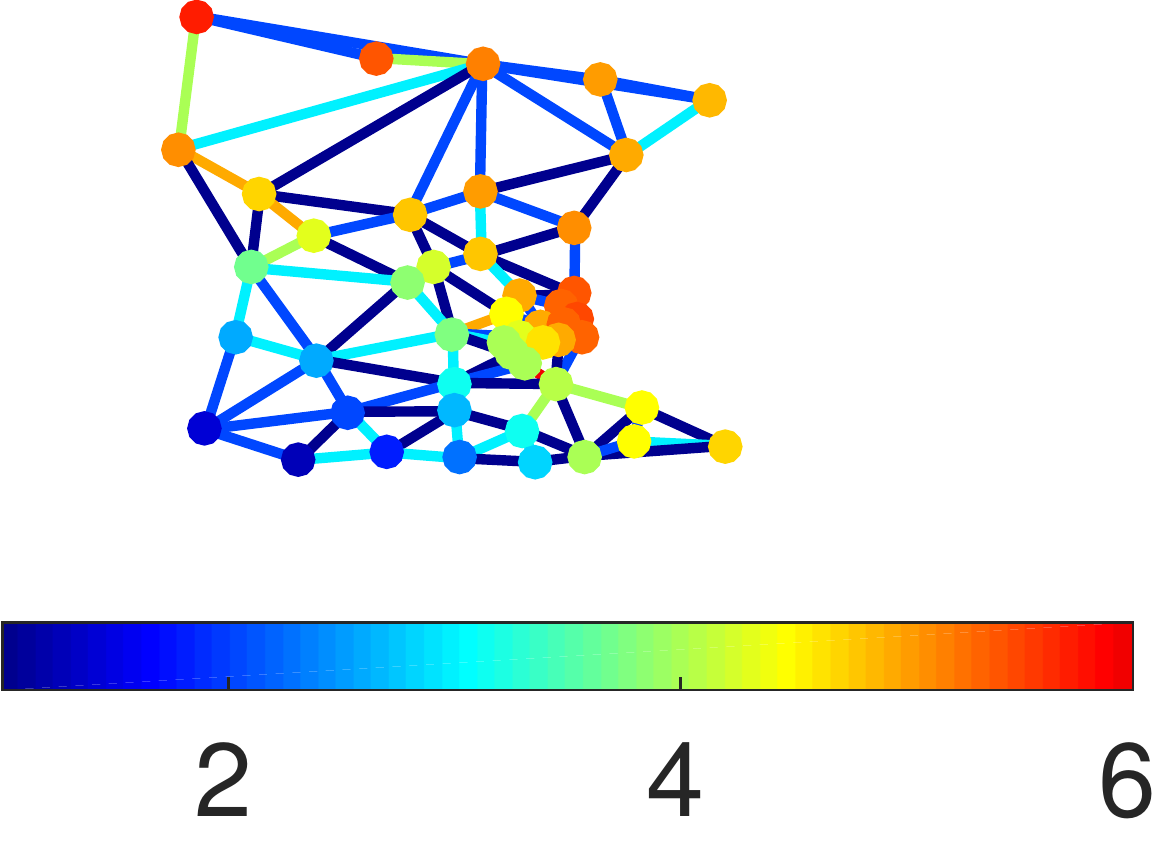}\\\vspace{0.2cm}
   \includegraphics[width=0.80\textwidth]{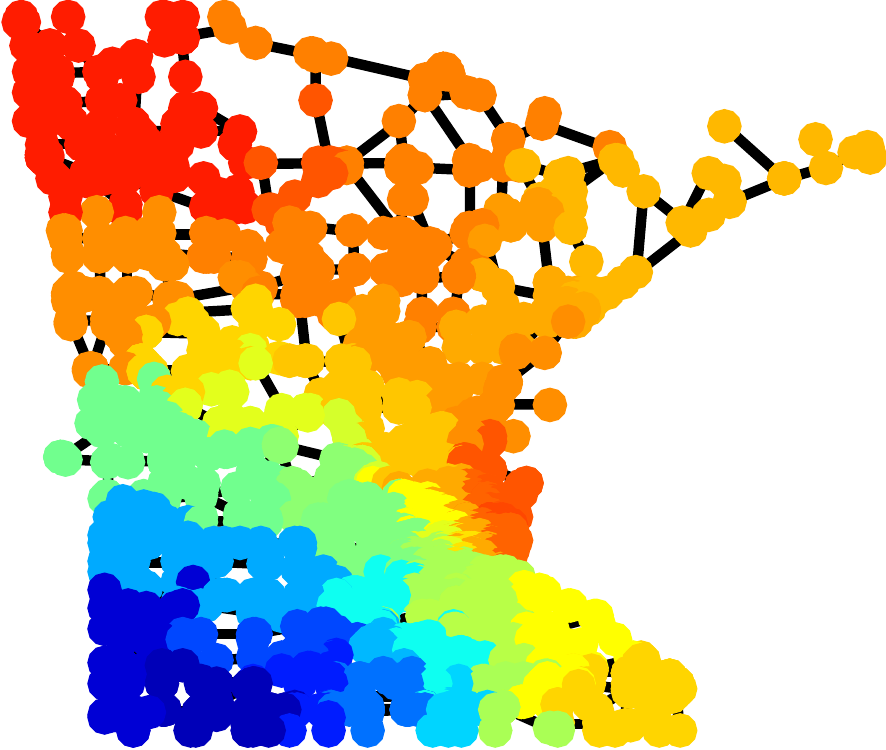}
\end{minipage}
\begin{minipage}{.15\linewidth}
\centering
   \includegraphics[width=\textwidth]{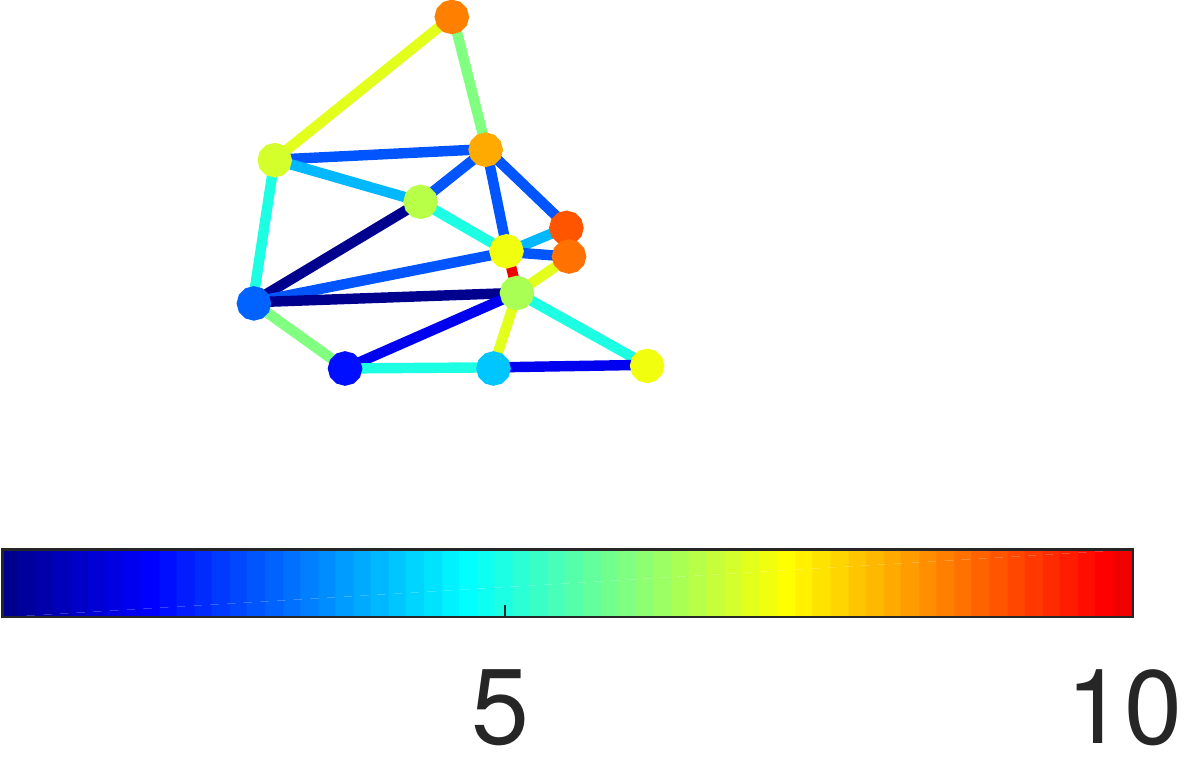}\\\vspace{0.2cm}
   \includegraphics[width=0.80\textwidth]{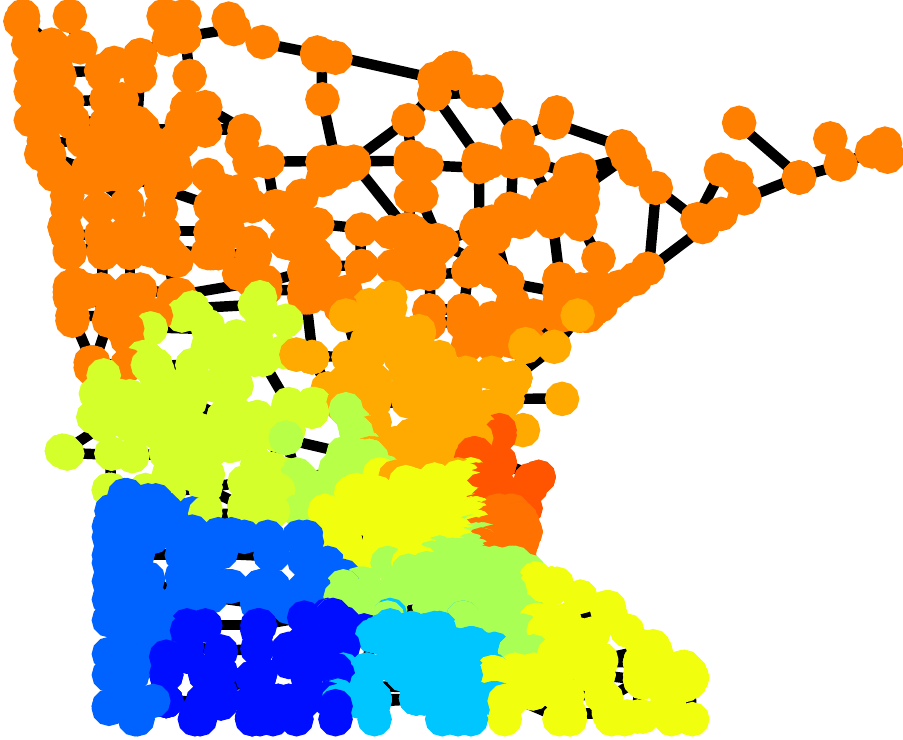}
\end{minipage}  
    \end{center}
    \caption{\ADDED{Left: original smooth graph signal (sum of the first five  
    global graph Fourier modes normalized by its maximum absolute value) defined on the Minnesota traffic graph. 
    The vertical colorbar of this figure is valid for all graph signals represented on this figure. 
    Top row:  signal approximation $\bm{x}_1^{(l)}$ on the graph approximation $\bm{A}_1^{(l)}$ after 
    each level $(l)$ of the analysis cascade (with the Cosub SC implementation), 
    \CHANGE{each supernode being placed at the average position of the nodes of its subgraph}. The horizontal colorbar 
    on the bottom of each figure  corresponds to the weights of the links of the corresponding coarsened graph. 
    Figures who do not have a bottom horizontal colorbar represent binary graphs. Lower row: for each of the 
    successive approximations, \CHANGE{we represent the upsampled reconstructed graph signal, obtained by setting the details to zero.}}}
    \label{fig:minnesota}
\end{figure*}

\subsection{\ADDED{Community detection procedure}}
Literature is abundant on community detection (see the 
survey~\cite{fortunato_PhyRep2010}). 
\MODIF{To detect non-overlapping communities, we use the greedy Louvain  method~\cite{blondel_Jstatmech2008}.
It maximizes (approximatively) over all the possible partitions $\mathbf{c}$, the so-called modularity (see \cite{fortunato_PhyRep2010}), 
a well-known objective  function that measures the quality of a partition in communities $\mathbf{c}$, defined as:
\begin{equation}
Q(\mathbf{c}) = \frac{1}{2m} \sum_{ij} \left( \mathbf{A}_{ij} - \frac{d_i d_j}{2m}\right) \delta(\mathbf{c}_i,\mathbf{c}_j)
\label{eq:modularity}
\end{equation}
where $d_i =  \sum_{j} \mathbf{A}_{ij}$ and $2m = \sum_{i} d_i$.
The Louvain method iteratively repeats two main phases, starting from an initial situation where each node 
is in its own community: 1) Select a node and group it with its adjacent node that causes the largest increase of 
modularity; do this sequentially with  all other nodes, until 
no individual move can improve the modularity;
2) Aggregate each community in a ``supernode'' and build a new adjacency matrix of this ``supernode'' graph.} 
Phase 1 is then applied to this 
new graph, and so on and so forth. The algorithm stops when \MODIF{phase 1 is not able to increase 
the modularity anymore.} 
\ADDED{We modify this algorithm and have two different implementations:}

\ADDED{\textbf{The SC (Small Communities) implementation.} It consists in performing phase 1 only 
once: this implementation ensures that the partition 
separates the graph in small communities (typically smaller than 10 nodes).} 

\MODIF{\textbf{The LC (Large Communities) implementation.} When performing 
the usual algorithm, a stopping criterion is added: 
the algorithm is stopped (if not already stopped thanks to the first criterion) 
before the size of the largest community  becomes larger than a given threshold $\tau$. 
 In fact, iterating both phases, communities become gradually larger; and recall 
 that our proposal relies on the diagonalisation of the local Laplacian matrices, which has 
 a cubic computation cost. In order to control computation time, we do not allow communities larger 
than the threshold, hereafter $\tau=1000$ nodes.} 

\ADDED{For comparison, in Section~\ref{subsubsec:NLA_graph_signals}, we will show some results obtained with another 
famous multiscale community detection algorithm, called Infomap~\cite{Rosvall_plos2011}. With this algorithm also, one may 
define analog SC and LC implementations.}

\ADDED{\noindent\textbf{Note on stochasticity:} The Louvain and the Infomap algorithms are stochastic: 
they do not necessarily output the same partition at every run on the same data. This implies 
that the output of the analysis cascade of our filterbank may differ from one realisation to another
(this is also the  case of other methods such as the filterbanks based on bipartite graphs).
%proposed by Narang and Ortega: there is not necessarily a 
%unique decomposition of a graph in a sum of bipartite ones and decomposition algorithms may output different 
%solutions. 
Stochasticity is not an issue as synthesis operators are built according to the solutions  
found during the analysis: reconstruction is always perfect. For the results in 
Table~\ref{table:reconstruction} and 
Figures~\ref{fig:perf_on_images}, \ref{fig:compression_on_graph} and \ref{fig:denoising_results}, 
we show the median computed over 10 realisations.
}

\subsection{\ADDED{Choice of adjacency matrix}}
\label{subsec:choice_adjacency}
\ADDED{
When performing  community detection, one may choose to use only the original adjacency matrix $\bm{A}$ as it is, or incorporate
some information about the graph signal $\bm{x}$ to follow more closely its evolution. 
We explore two choices:}

\textbf{CoSub}, short for Connected Subgraphs, is based on simply applying the Louvain algorithm on 
 the adjacency matrix $\bm{A}$;% to be able to write $\bm{A}=\bm{A_i}+\bm{A_e}$;
 
\textbf{EdAwCoSub}, short for Edge Aware\footnote{\ADDED{We use the term edge aware in relation to usage in the Signal processing community; the reader can think of it as ``signal-adapted'' if preferred.}} Connected Subgraphs, takes the signal $\bm{x}$ into account for subgraph partitioning 
 and modifies the adjacency matrix into 
 \begin{equation}
 \label{eq:Ax}
 \begin{aligned}
  \bm{A_x}(i,j)&=e^{-\frac{(\bm{x}(i)-\bm{x}(j))^2}{2\sigma_x^2}}
  \quad & \mbox{ if } \bm{A}(i,j)\neq 0\\  
  &=0  & \mbox{ if } \bm{A}(i,j) = 0
 \end{aligned}
 \end{equation}
where  $\sigma_x=\mbox{std}(\{|\bm{x}(i)-\bm{x}(j)|\}_{i\sim j})$ 
($i\sim j$ means $i$ neighbor to $j$ in $\mathcal{G}$). \ADDED{This choice of $\sigma_x$ is classical in the 
clustering literature~\cite{von2007tutorial}}. 
The Louvain algorithm is then applied on $\bm{A_x}$. 

The obtained partition enables us to write $\bm{A}=\bm{A}_{int}+\bm{A}_{ext}$ in both cases. 
Edge-awareness may also be implemented, as in~\cite{narang_SSP2012}, by adapting image segmentation methods 
to graph signals;  such an advanced comparison between edge-awareness methods is left for future work. 
\ADDED{In Section~\ref{sec:applications}, we compare the 4 implementations of the proposed filterbank: 
CoSub SC and LC, EdAwCoSub SC and LC; to methods from the literature. 
}

\subsection{\CHANGE{Complexity of the algorithm}}
\CHANGE{
At a given level of the analysis cascade, computing the analysis atoms requires:  
i)~to run the partitioning algorithm: the Louvain algorithm has a linear complexity $O(N)$~\cite{blondel_Jstatmech2008};
ii)~the diagonalisation of the Laplacian associated to $\bm{A}_{int}$, \textit{i.e.} a block diagonal matrix 
containing as many blocks as there are detected communities. Given that the diagonalization of a matrix of 
size $N$ costs $O(N^3)$, the diagonalization of a block diagonal matrix containing $K$ blocks of same size 
thus costs $O(N^3/K^2)$. 
 Overall, at each level of the cascade, computing the analysis atoms costs  
$O(N+N^3/K^2)$. %The more there are detected communities 
%the closer is our algorithm to a linear complexity. 
Typically, if $K=N/\alpha$ 
with $\alpha$ an average small number of nodes per community (see end of Sec.~\ref{subsec:images} for typical 
values of $\alpha$), the complexity turns out to be $O((\alpha^2+1)N)$. Cascading the analysis on all 
levels thus costs $O(\alpha^2N\log{N})$. 
This is to compare to the global graph 
Fourier analysis that costs $O(N^3)$. 
}

\section{Applications}
\label{sec:applications}

\ADDED{All the reported examples are computed using a developed Matlab toolbox that is 
available for download\footnote{URL: http://perso.ens-lyon.fr/pierre.borgnat/Codes/CoSubFBtoolbox.zip}. 
The comparisons with methods from the literature use the 
implementations from the original authors, when they are available.
}

\subsection{\CHANGE{Two} illustrative examples}
\label{subsec:applications_examples}

\subsubsection{An example of approximated graph signal}

\MODIF{Fig.~\ref{fig:minnesota} shows successive approximated signals $\{(\bm{x_1},\bm{A_1})^{(l)}\}_{l=1:5}$ 
of a smooth signal defined on the Minnesota traffic graph~\cite{narang_TSP2013}, 
using the CoSub SC implementation. Notice how the last level's approximated signal, even if small in 
size (12 nodes) still captures the original signal's information remarkably well.}

\subsubsection{\ADDED{A small image}}

\begin{figure}
\begin{center}
\begin{minipage}{.16\linewidth}
\begin{center}
\vspace{0.2cm}
$~$\\\vspace{1.2cm}
$\bm{\Phi}^{(1)}$\\
%$\bm{\Theta}_1^{(1)\top}$\\
\footnotesize{(315 atoms)}\\\vspace{1.2cm}
$\bm{\Psi}_2^{(1)}$\\
\footnotesize{(315 atoms)}\\\vspace{1.2cm}
$\bm{\Psi}_6^{(1)}$\\
\footnotesize{(35 atoms)}\\\vspace{1.2cm}
$\bm{\Phi}^{(2)}$\\
\footnotesize{(62 atoms)}\\\vspace{1.2cm}
$\bm{\Psi}_2^{(2)}$\\
\footnotesize{(62 atoms)}\\\vspace{1.2cm}
$\bm{\Psi}_6^{(2)}$\\
\footnotesize{(25 atoms)}\\\vspace{1.2cm}
$\bm{\Phi}^{(3)}$\\
\footnotesize{(13 atoms)}\\\vspace{1.2cm}
$\bm{\Psi}_2^{(3)}$\\
\footnotesize{(13 atoms)}\\\vspace{1.2cm}
$\bm{\Psi}_6^{(3)}$\\
\footnotesize{(2 atoms)}\\\vspace{0.2cm}
\end{center}
\end{minipage}
\hfill
\begin{minipage}{.82\linewidth}
\includegraphics[width=0.35\textwidth]{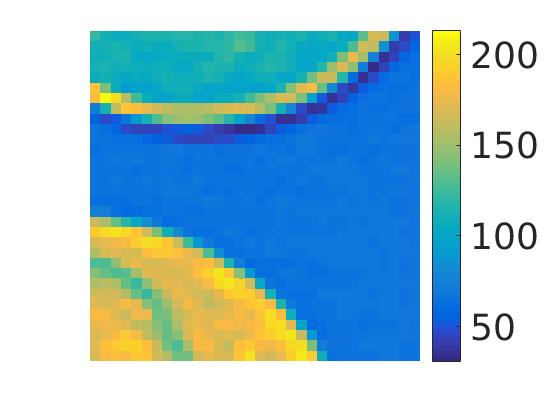}
\includegraphics[width=0.30\textwidth]{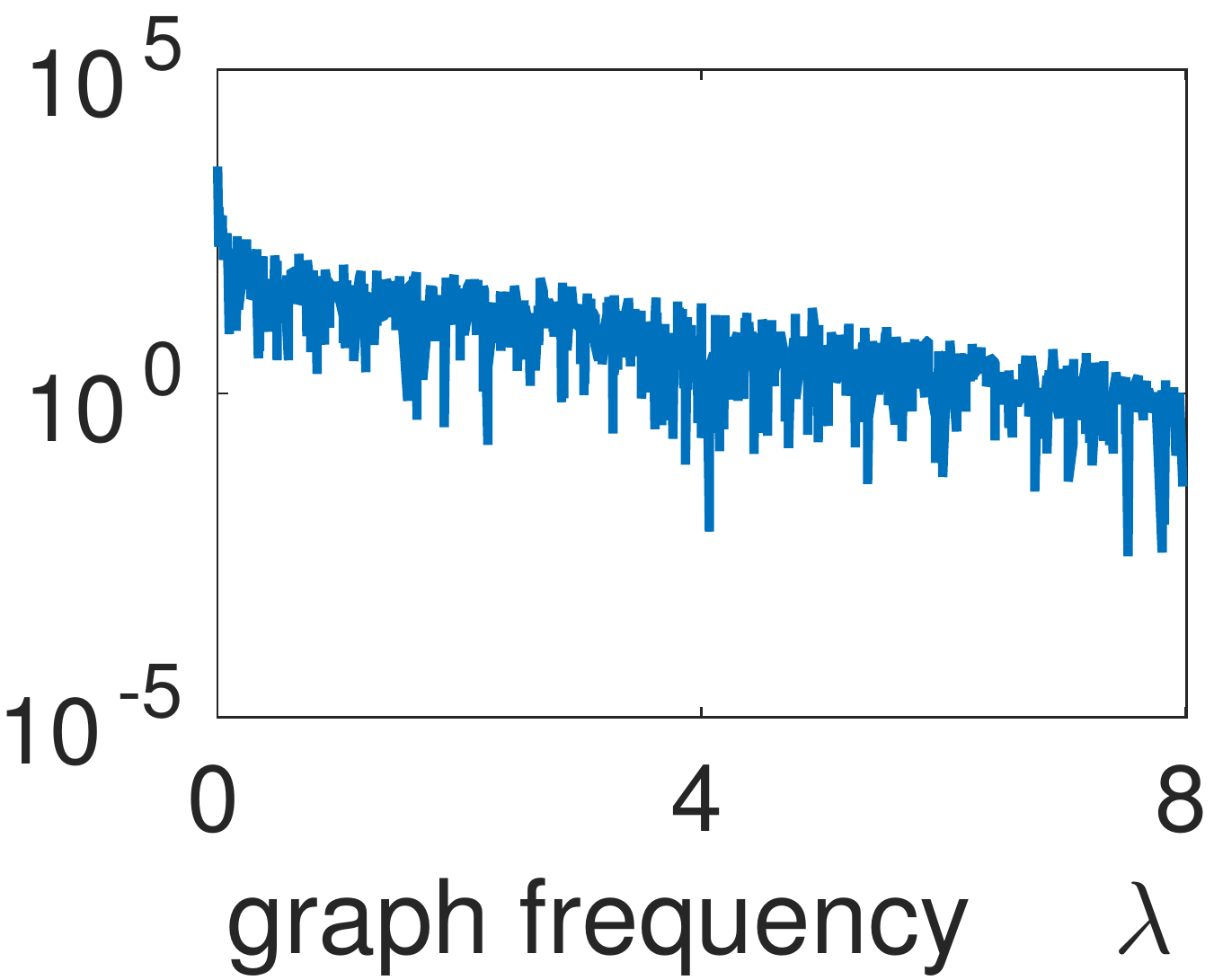}\\
\includegraphics[width=0.35\textwidth]{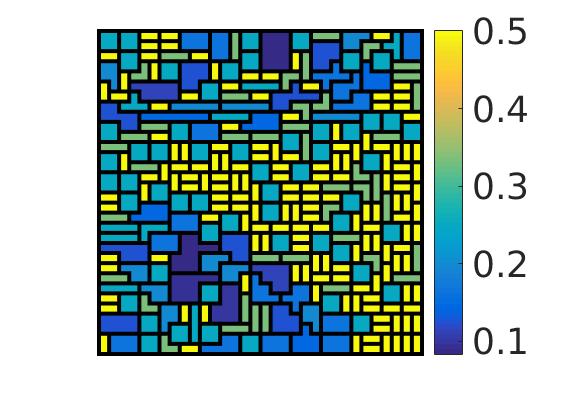}
   \includegraphics[width=0.32\textwidth]{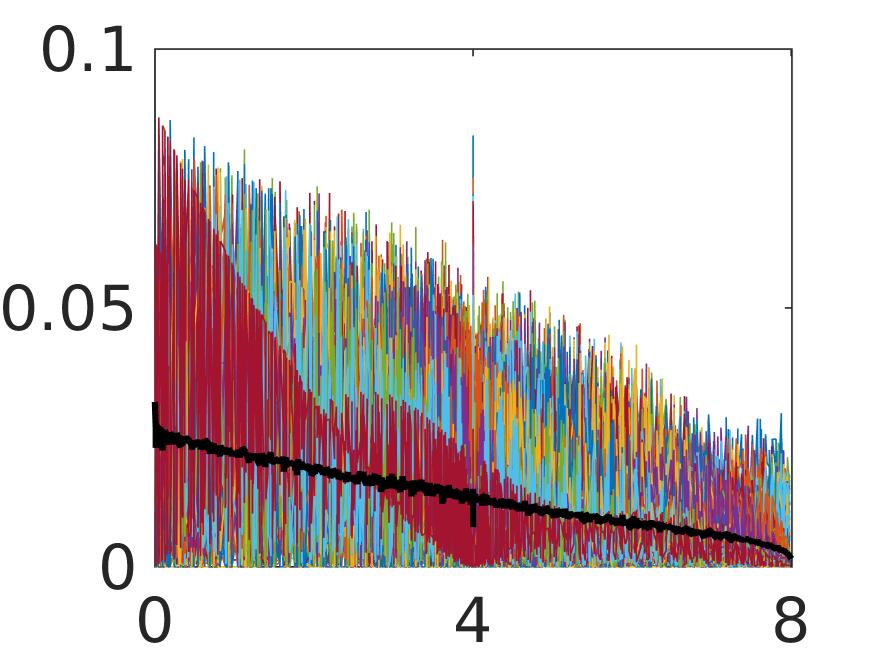}
   \includegraphics[width=0.30\textwidth]{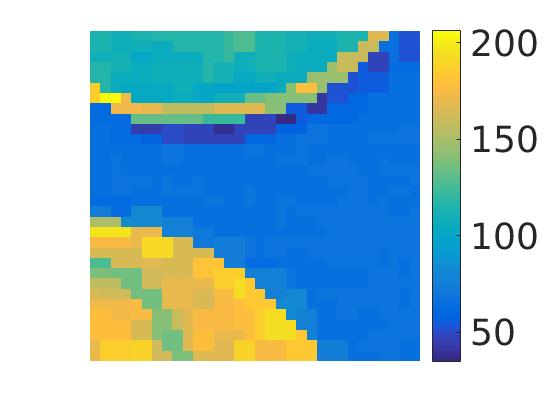}\\
   \includegraphics[width=0.35\textwidth]{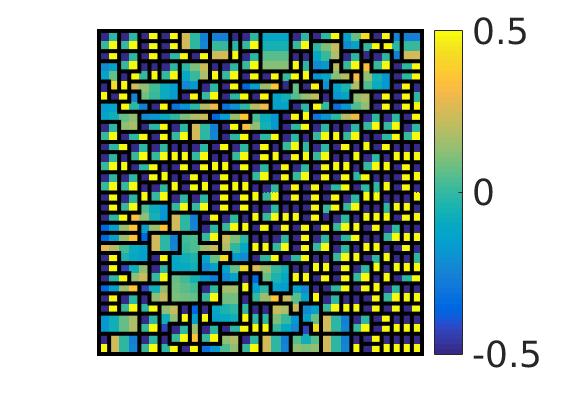}
   \includegraphics[width=0.32\textwidth]{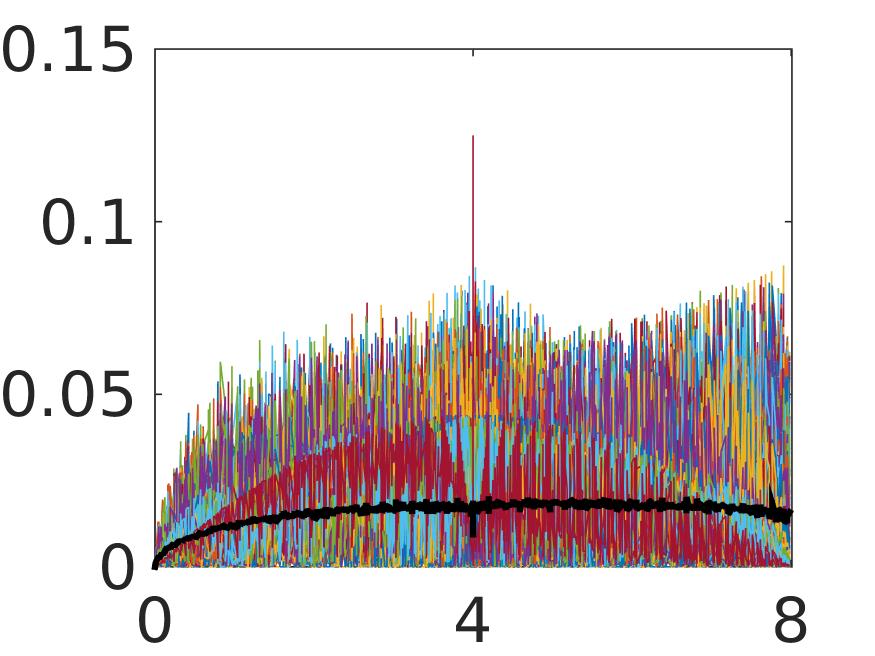}
   \includegraphics[width=0.30\textwidth]{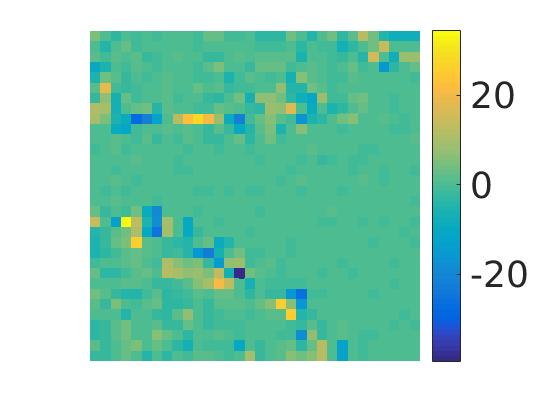}\\
         \includegraphics[width=0.35\textwidth]{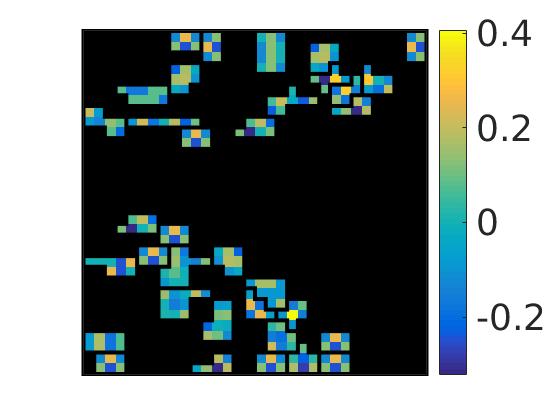}
   \includegraphics[width=0.32\textwidth]{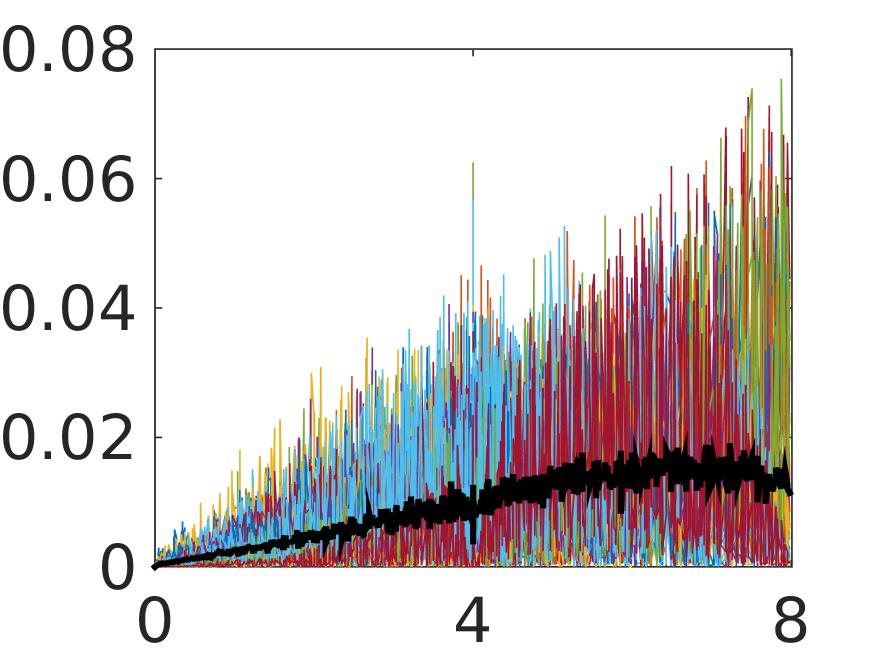}
   \includegraphics[width=0.30\textwidth]{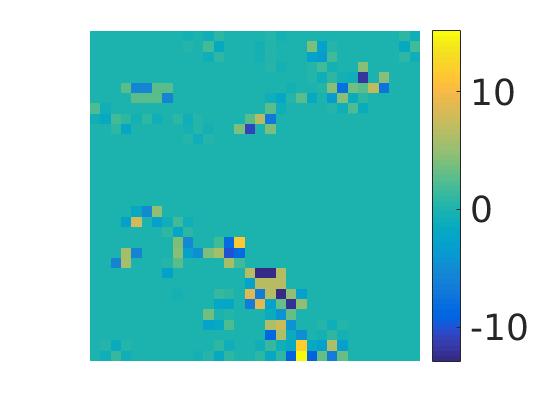}\\
            \includegraphics[width=0.35\textwidth]{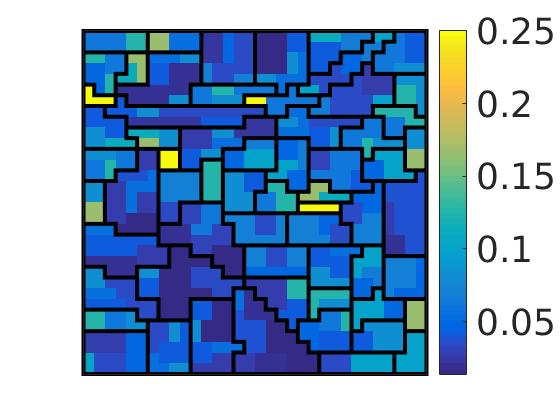}
   \includegraphics[width=0.32\textwidth]{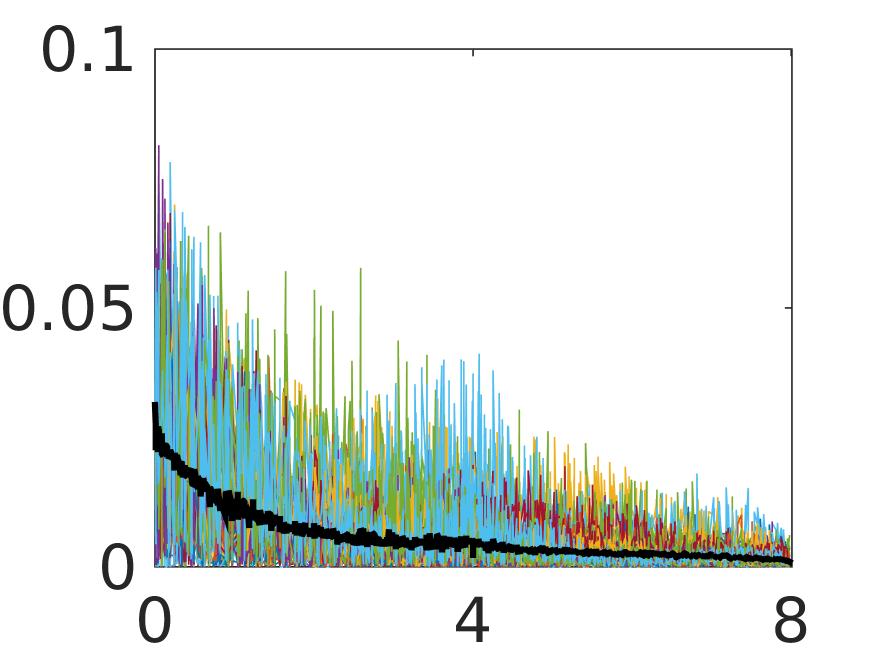}
   \includegraphics[width=0.30\textwidth]{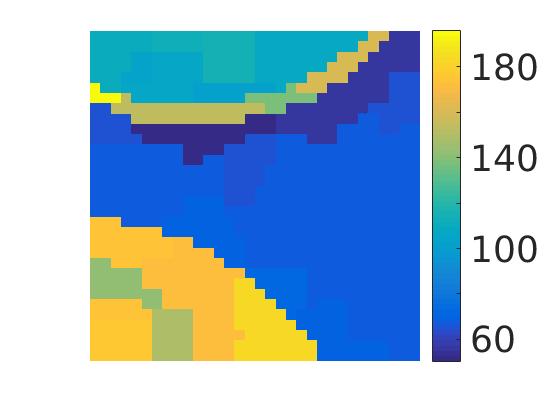}\\
           \includegraphics[width=0.35\textwidth]{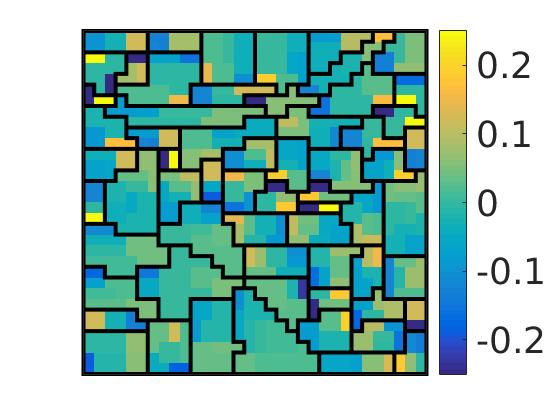}
   \includegraphics[width=0.32\textwidth]{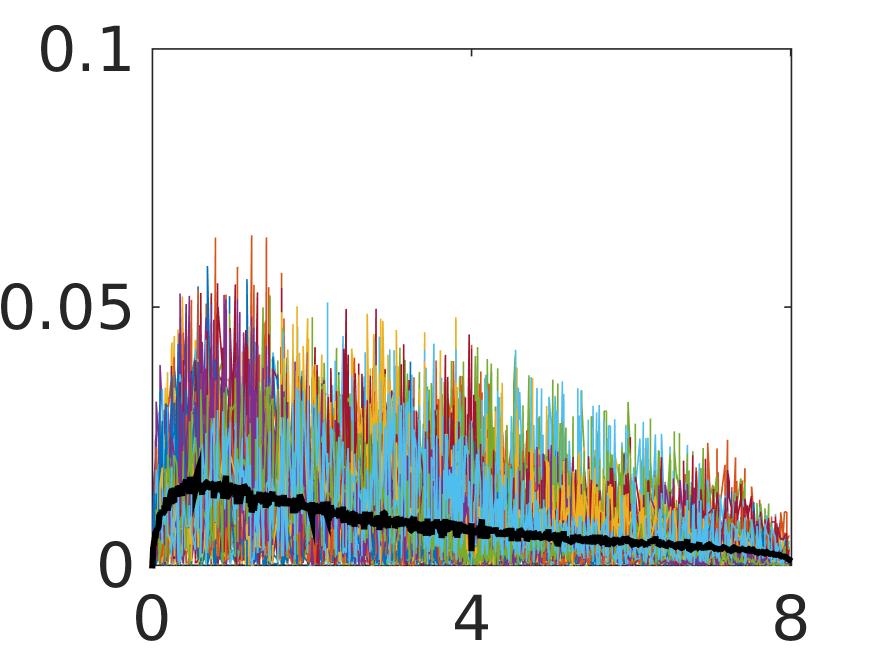}
   \includegraphics[width=0.30\textwidth]{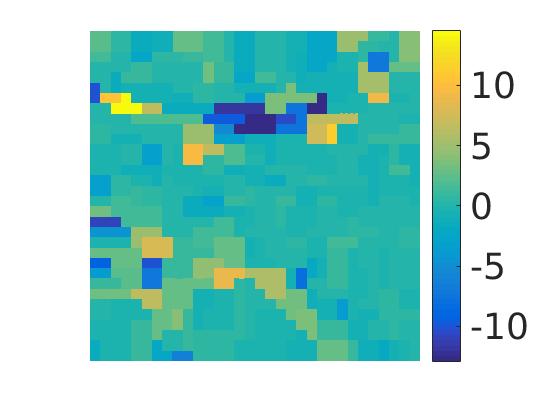}\\
   \includegraphics[width=0.35\textwidth]{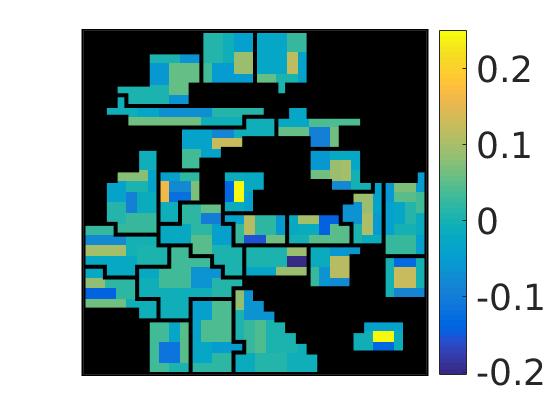}
   \includegraphics[width=0.32\textwidth]{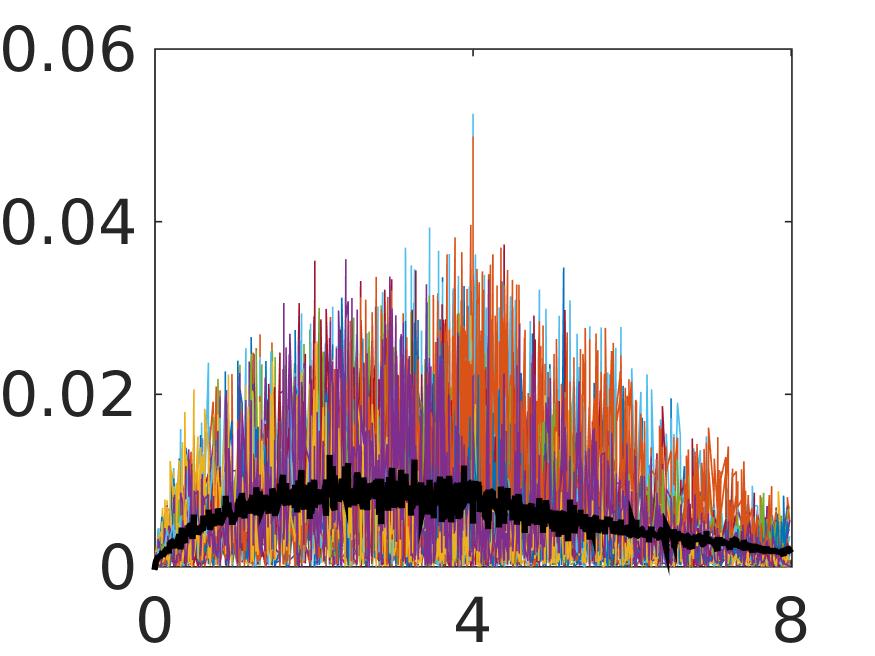}
   \includegraphics[width=0.30\textwidth]{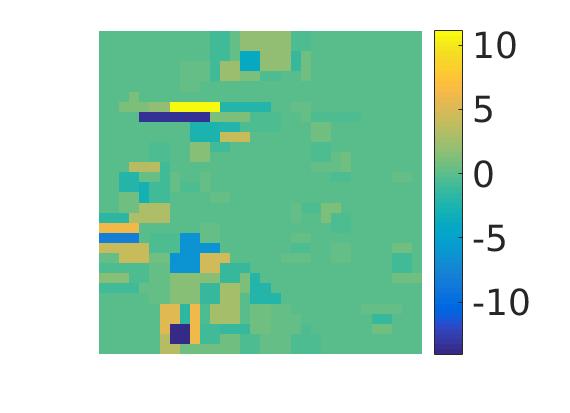}\\
   \includegraphics[width=0.35\textwidth]{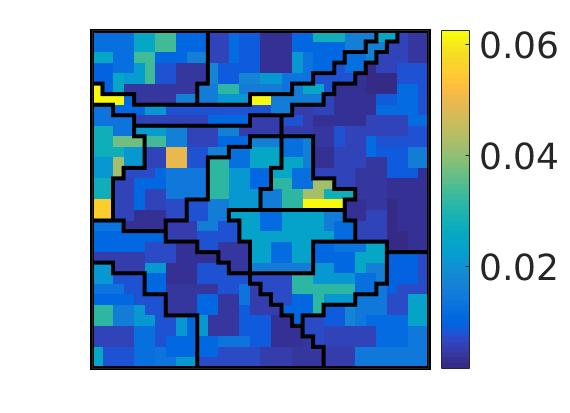}
   \includegraphics[width=0.32\textwidth]{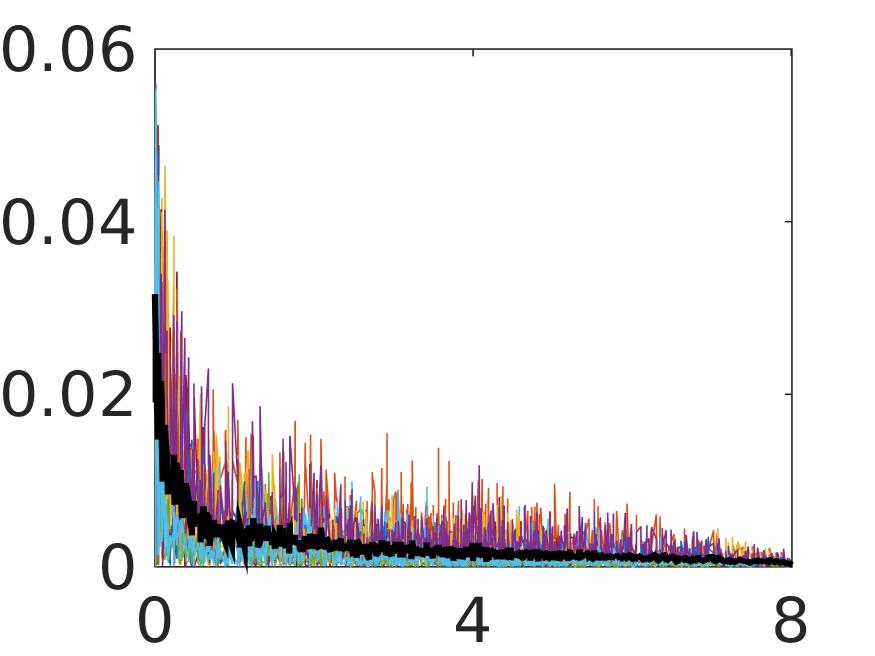}
   \includegraphics[width=0.30\textwidth]{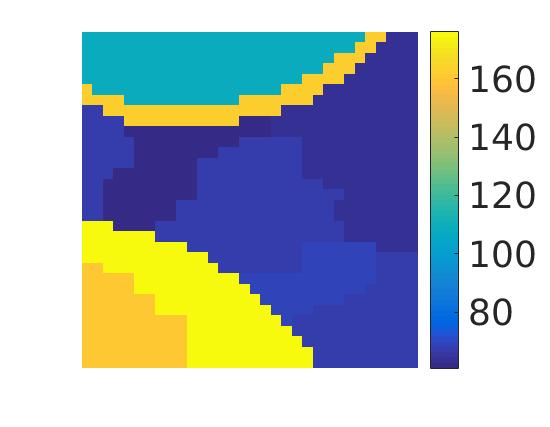}\\
           \includegraphics[width=0.35\textwidth]{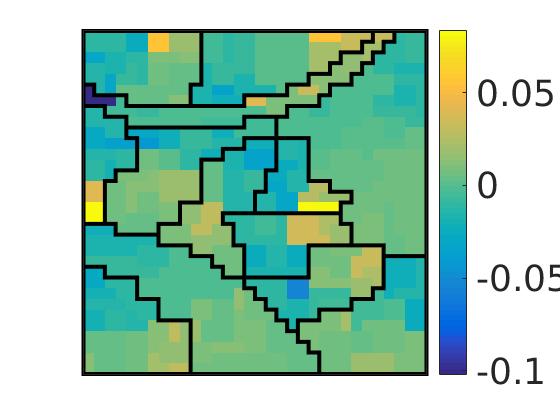}
   \includegraphics[width=0.32\textwidth]{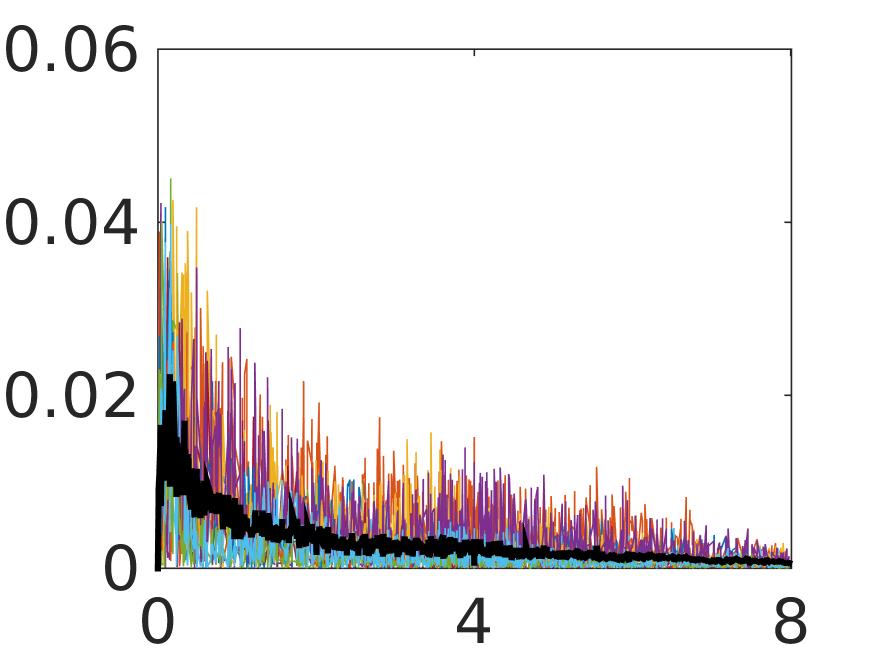}
   \includegraphics[width=0.30\textwidth]{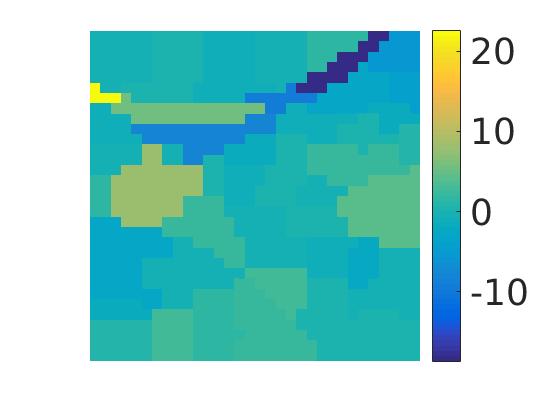}\\
   \includegraphics[width=0.35\textwidth]{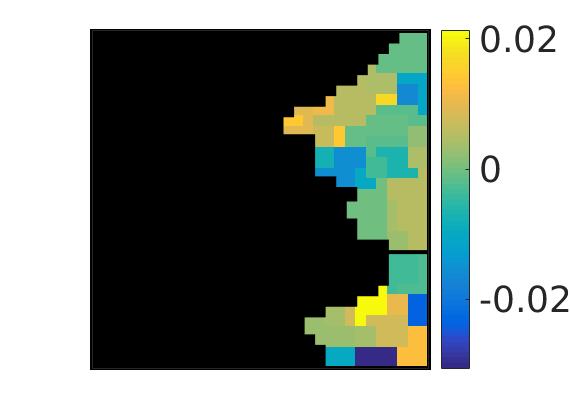}
   \includegraphics[width=0.32\textwidth]{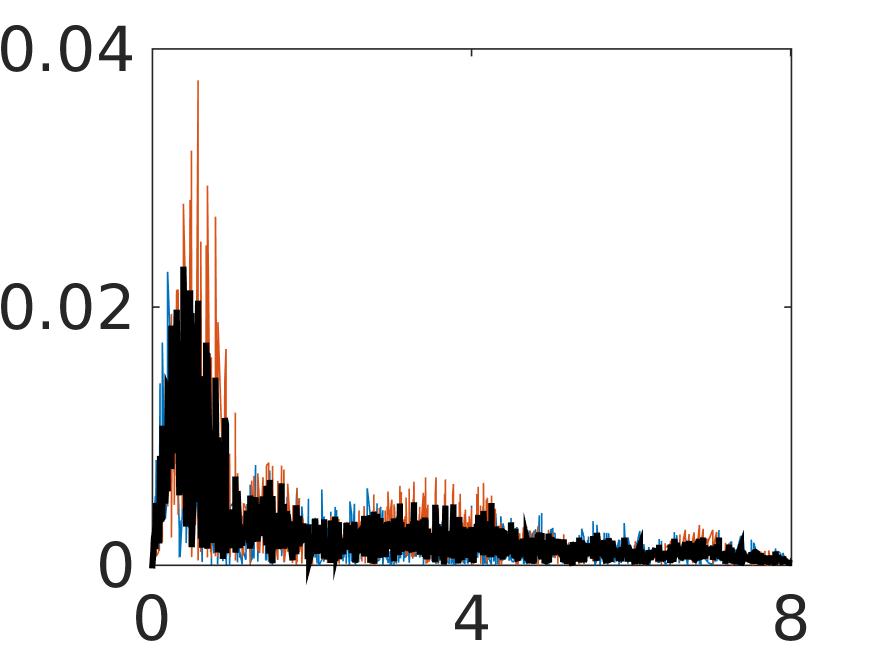}
   \includegraphics[width=0.30\textwidth]{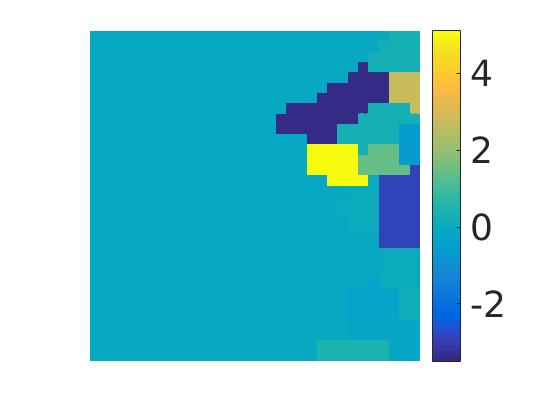}\\ 
   \end{minipage}
   \end{center}
    \caption{\ADDED{Selected analysis atoms of the filterbank applied to a $32\times 32$ regular 2 dimensional 
    grid. Top line of figures: the top middle figure represents the graph Fourier transform of the image 
    (top left) seen as a graph signal on the 2D grid. Other lines of figures: each line represents a 
    choosen set of analysis atoms, corresponding to a given channel of the filterbank. For instance, 
    the second line represents the atoms $\bm{\Phi}^{(1)}$, i.e., the atoms corresponding to the 
    first channel of the first level of the analysis cascade. The black lines on the first column's image 
    represents the partition of the graph in 315 subgraphs, which is, in this case, 
   the output of the Edge-Aware SC implementation of our filterbank. To each subgraph is associated a local atom, 
   by construction, such that we can represent all of them on the same global image. The second column's 
   figure represents all global Fourier transforms of all 315 atoms, and, superimposed with a thick black line, 
   is their average. The third column represents the reconstruction of the original image if 
   one keeps only the information that went through this analysis channel, and discards the rest.}}
    \label{fig:image_atoms}
\end{figure}

\MODIF{Images can be studied as graph signals defined on the two-dimensional regular 
grid (each pixel is a node, and each node has four neighbors), and may therefore 
be analyzed by the proposed graph-based filterbank.}
\ADDED{Consider the small $32\times32$ image of Fig.~\ref{fig:image_atoms}: 
it is a graph signal defined on a regular graph of size $1024$. Its graph Fourier transform 
is represented on the same figure. 
We analyze this image with the EdAwCoSub SC algorithm and the rest of 
Fig.~\ref{fig:image_atoms} represents a selection of atoms of the filterbank, shown both in the node 
and the global graph Fourier domain, and partially reconstructed images from the projection of the 
 original image on these atoms. 
\CHANGE{ For the interested reader, a dedicated PDF file in our Toolbox shows \textit{all} 1024 atoms}. 
 Note that the support of the subgraphs are clearly impacted by edge-awareness. 
We see that each atom is compactly supported (in the node domain) and only (very) approximately 
localized in the global graph Fourier domain. Moreover, we see how, within a given level $(j)$, 
the mean frequency of $\bm{\Psi}_l^{(j)}$ increases as $l$ increases. 
The cause of the frequency delocalisation is that  regular grids are not decomposable in a sum of almost disconnected subgraphs: the local Fourier 
 modes on which we base our design are therefore far from localized in the global Fourier domain. 
 In fact, regular grids are a typical structure for which our method (and the graph partition in communities) 
 is not very appropriate. Nevertheless, we still show results on images 
 for pedagogical purposes and in order to compare performance with other methods from the literature.}

\begin{figure}
    \begin{center}
 \hspace{-0.18cm}\includegraphics[width=0.15\textwidth]{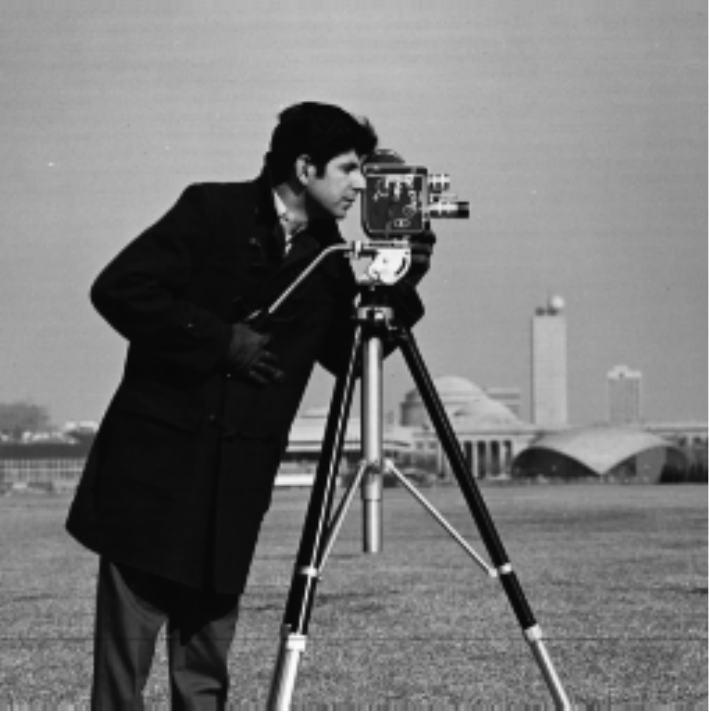}\hspace{-0.14cm}
   \includegraphics[width=0.15\textwidth]{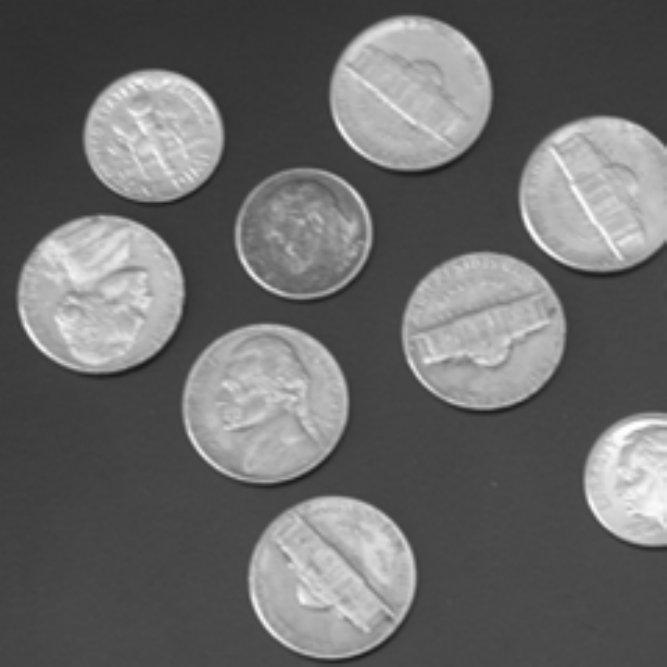}\hspace{-0.14cm}
   \includegraphics[width=0.15\textwidth]{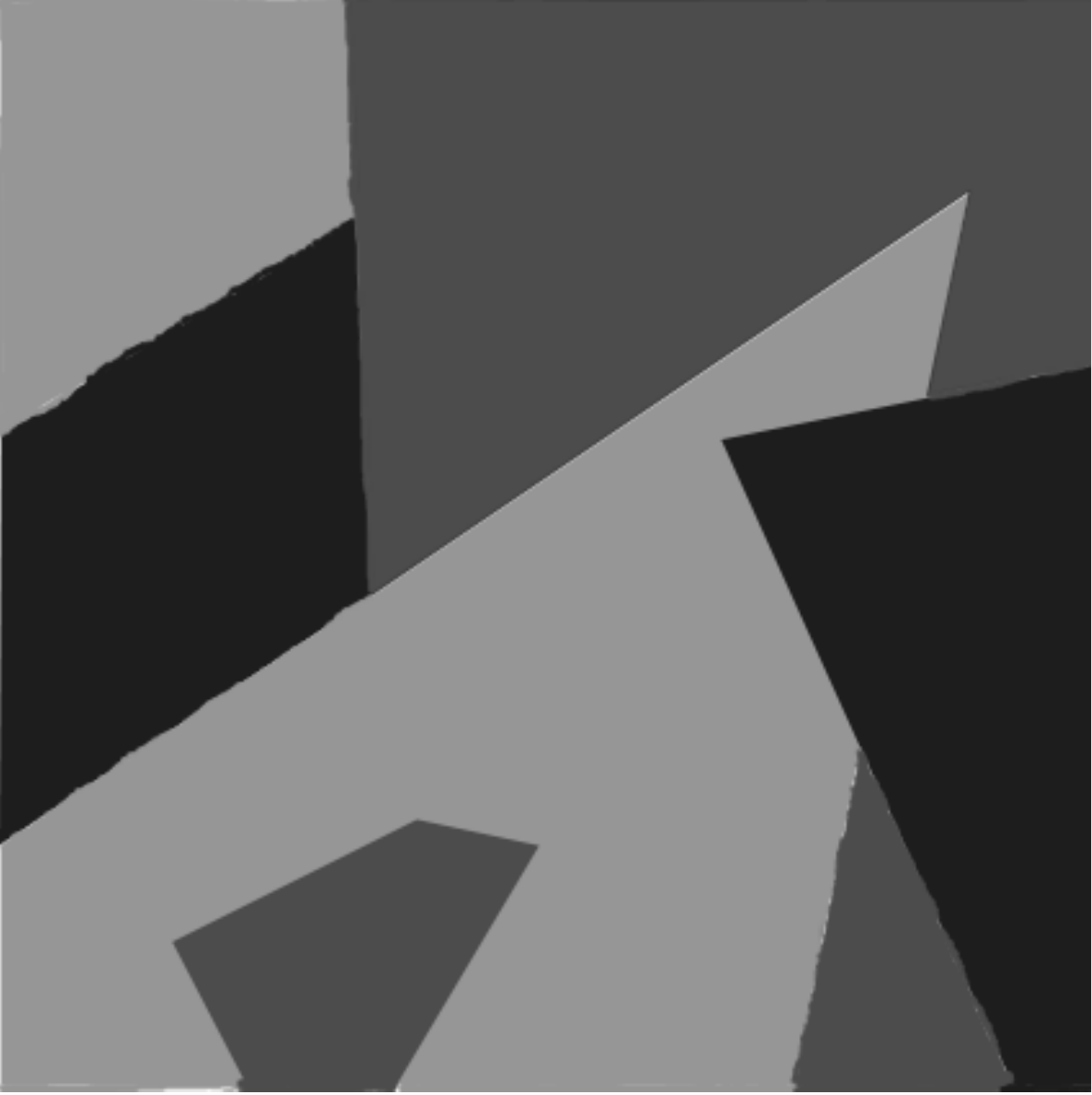}
      \end{center}
    \caption{\MODIF{Benchmark images. From left to right: \textit{cameraman, coins, synthetic}.}}
    \label{fig:cameraman}
\end{figure}

\begin{figure}
    \begin{center}
 \includegraphics[width=0.49\textwidth]{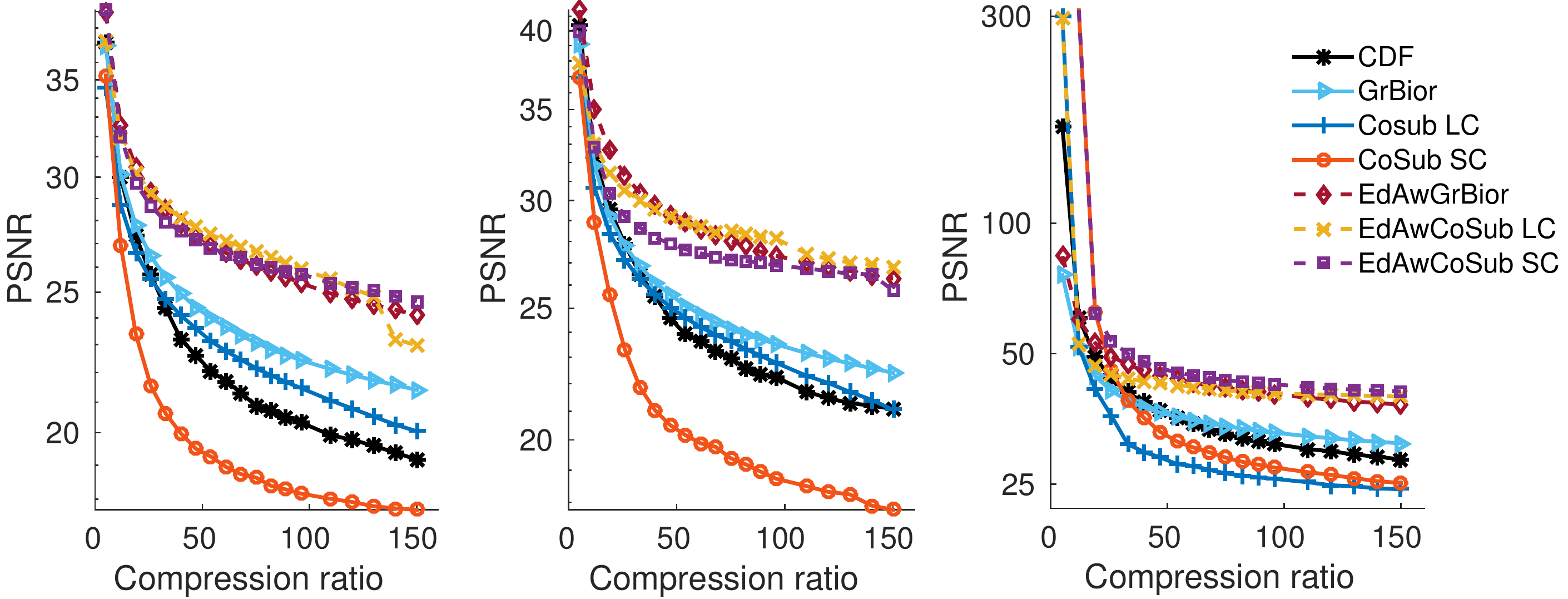}
      \end{center}
    \caption{\ADDED{Comparison of compression performance on the three benchmark images of 
    Fig.~\ref{fig:cameraman} (from left to right: \textit{cameraman, coins, synthetic}).}
    \CHANGE{Dotted lines represent edge-aware methods. As such, they can only be compared one to another
    and not to the non-adaptative methods represented as full lines. }}
    \label{fig:perf_on_images}
\end{figure}

\subsection{\MODIF{Graph signal reconstruction via non-linear approximation}}
\label{subsec:images}

\ADDED{One of the use of classical filterbanks is compression. 
The main idea relies on the fact that natural signals are approximately smooth at different scales of analysis, 
and have therefore a sparse representation on filterbanks' atoms. One may thus transform the signal 
with the filterbanks, keep the low-pass coefficients and a fraction of high-pass coefficients while 
setting the others to zero, and still obtain a decent reconstruction of the original signal. In the following, 
we apply this non-linear approximation (NLA) scheme on images and on the Minnesota traffic graph.}

\begin{table*}
%\centering
\scriptsize
 \begin{tabular}{c||c|c|c|c||c|c|c|c||c|c|c|c||c|c|c|c}
Compr. ratio & \multicolumn{4}{c||}{$5$} & \multicolumn{4}{|c||}{$26$} & \multicolumn{4}{c||}{$54$} & \multicolumn{4}{c}{$96$} \\\cline{2-17}
& lev & $\#$ LP & $\#$ HP & PSNR & lev & $\#$ LP & $\#$ HP & PSNR & lev & $\#$ LP & $\#$ HP & PSNR & lev & $\#$ LP & $\#$ HP & PSNR\\\hline\hline
 CDF 9/7    & 3 & 1024 & 12083 & $\bm{37.1}$ & 3 & 1024 & 1497 & 25.7 & 4 & 256 & 958 & 22.0 & 4 & 256 & 427 & 20.3\\\hline
 GrBior     & 6 & 16 & 13091 & 36.9 & 5 & 64 & 2457 & $\bm{26.5}$ & 6 & 16 & 1198 & $\bm{24.0}$ & 6 & 16 & 667 & $\bm{22.4}$\\\hline
 CoSub LC   & 1 & 282 & 12825 & 34.6 & 1 & 274 & 2247 & 25.5 & 1 & 273 & 941 & 23.1 & 1 & 279 & 404 & 21.5 \\\hline
 CoSub SC    & 3 & 2017 & 11090 & 35.3 & 4 & 282 & 2239 & 21.5 & 4 & 281 & 933 & 19.3 & 4 & 282 & 401 & 18.2 \\\hline\hline
 EdAwGrBior & 5 & 64 & 13043 & 39.0 & 6 & 16 & 2505 & $\bm{29.3}$ & 6 & 16 & 1198 & 27.0 & 6 & 16 & 667 & 25.4\\\hline
 EdAwCoSub LC & 1 & 478 & 12629 & 37.3 & 1 & 486 & 2035 & 29.3 & 1 & 486 & 728 & $\bm{27.4}$ & 1 & 486 & 197 & $\bm{25.9}$\\\hline
 EdAwCoSub SC & 2 & 2584 & 10523 & $\bm{39.1}$ & 3 & 430 & 2091 & 28.7 & 3 & 422 & 792 & 26.8 & 3 & 439 & 244 & 25.7\\\hline
 \end{tabular}\vspace{\baselineskip}
 \caption{\MODIF{Comparison of the compression performance with other methods on the benchmark image \textit{cameraman} shown 
 in Fig.~\ref{fig:cameraman}. }}
\label{table:reconstruction}
\end{table*}

\ADDED{Typical filterbank comparisons using NLA look at reconstruction results after three levels of analysis. 
In our case, as we do not know beforehand in how many subgraphs the partitioning algorithm will cut the graph, 
we cannot predict how many low-pass coefficients will be left after a given number of levels of the analysis 
cascade. Thus, for a comparison with other methods, and for a given compression ratio\footnote{\CHANGE{the compression ratio is 
defined as the ratio of the size of the original data over the size of the compressed data}}, 
we will compute all non-linear approximations corresponding to all different levels of the cascade, 
and keep the level for which the reconstruction result is the best.}

\subsubsection{\ADDED{Reconstruction of images}}

Consider for instance the benchmark image \textit{cameraman} shown in Fig.~\ref{fig:cameraman} 
(left). Its size is $256\times 256$ ($N=65536$ is the size of the associated graph signal). 
\MODIF{Table~\ref{table:reconstruction} 
compares the reconstruction details after NLA of CoSub LC and SC, 
EdAwCoSub LC and SC, to the classical image filterbank CDF 9/7,  
the Graph Bior filterbank~\cite{narang_TSP2013} with Zero DC, \textit{graphBior(6,6)} filters and 
Gain Compensation (GrBior); 
and the same Graph Bior filterbank but including edge-awareness~\cite{narang_SSP2012} (EdAwGrBior).} 
\ADDED{Also, Fig.~\ref{fig:perf_on_images} recaps the PSNR of reconstruction for each of the three benchmark images of 
Fig.~\ref{fig:cameraman}. 
We see that our method, without edge-awareness, does not perform as well as GrBior. This is due to the fact that 
our method is not best suitable to 2D grids as they do not have a natural structure in communities and, on 
the contrary, GrBior is best suitable to bipartite graphs, of which 2D grids are an example. On the other hand, 
when adding edge-awareness, the graph becomes more structured and we obtain results similar to EdAwGrBior. }
%than with EdAwGrBior. }

\ADDED{\textbf{Note on the LC implementation.} We see here that the best reconstruction for the LC 
implementation is always obtained after only one level of analysis. In this case, 
the filterbanks may hardly be seen as a multiscale analysis, but rather as a graph windowed Fourier transform, 
\CHANGE{where the window is simply an indicator function on each subgraph. Note that this window changes from
one subgraph to another and it is hence different from the proposition of \cite{shuman_SSP2012,Shuman_ACHA2016}
for windowed graph Fourier transform.}
To observe a multiscale analysis with the LC implementation, one needs to either decrease the threshold $\tau$ or increase 
the graph's size.}

\subsubsection{\ADDED{Reconstruction performance for graph signals}}
\label{subsubsec:NLA_graph_signals}

\begin{figure}
  \centering
\includegraphics[width=0.49\textwidth]{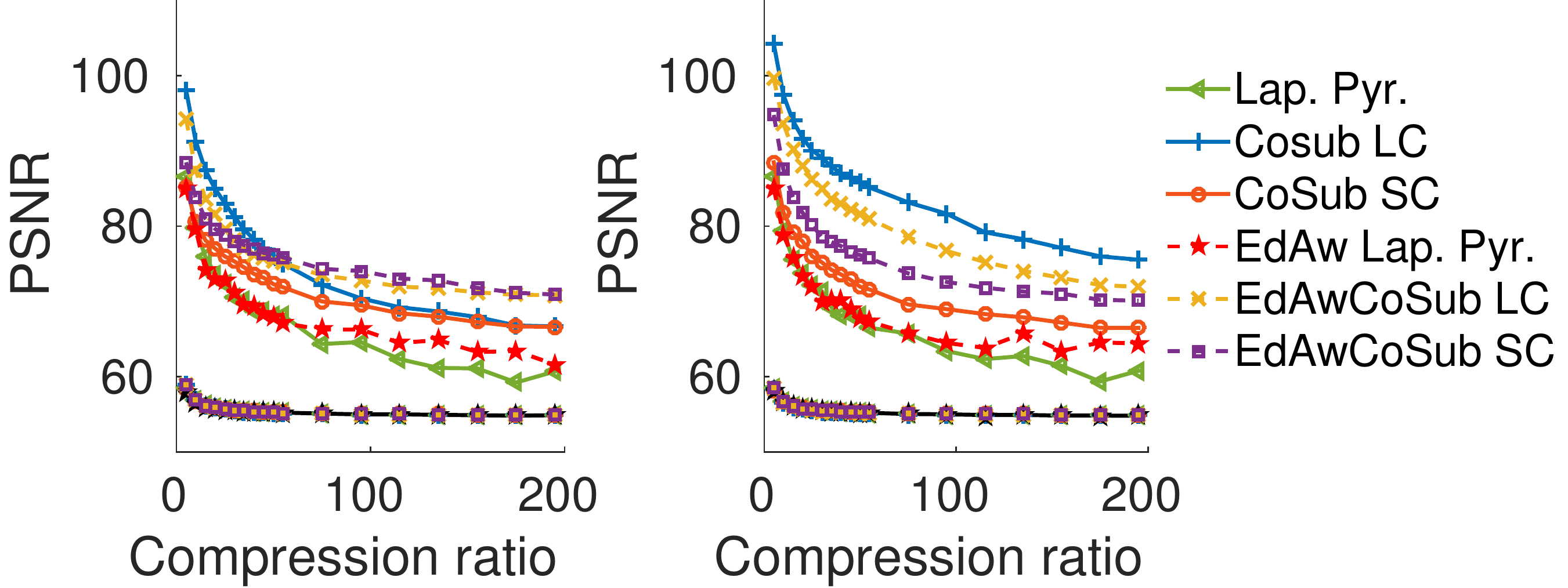}\\
    \caption{\ADDED{Left: compression results of the graph signal of Fig.~\ref{fig:minnesota}, and 
     (black line) a Gaussian random signal of same energy defined on the same graph. Right: 
     same comparison using the Infomap clustering algorithm rather than the Louvain algorithm 
     to find partitions in connected subgraphs (see Section~\ref{subsec:choice_adjacency}). Results 
     with the random signal are averaged over 10 realisations.}
     \CHANGE{Dotted lines represent edge-aware methods. As such, they can only be compared one to another
    and not to the non-adaptative methods represented as full lines. }}
    \label{fig:compression_on_graph}
\end{figure}

\ADDED{The graph signal model \CHANGE{underlying} the NLA scheme 
is that graph signals should be smooth with respect to the topology on which 
they are defined. For instance, let us consider the graph signal of Fig.~\ref{fig:minnesota} (left): it is  
by construction smooth with respect to the underlying graph as it is the sum of the 
first five eigenvectors of its Laplacian matrix. We compare in Fig.~\ref{fig:compression_on_graph} 
\CHANGE{(left)}  
the reconstruction performance for our filterbank implementations, to Shuman's Laplacian pyramid 
filterbanks~\cite{shuman_ARXIV2013}. 
\CHANGE{This method was not originally written with edge-awareness, but one can simply consider $\bm{A}_x$ (as in 
Eq.~\eqref{eq:Ax}) instead of $\bm{A}$ to make it edge-aware and define what we call the edge-aware Laplacian 
pyramid method (EdAw Lap. Pyr.). 
Also, up to our knowledge, GrBior filterbanks have only been implemented for one-level cascades on 
arbitrary graphs, which explains why we do not consider them here. }
% , with respect to the compression ratio. 
%Unfortunately, we cannot compare 
%to methods based on bipartite signals as they have not been fully implemented~\cite{OSGFB,GrBior}  to cascade on graph structures 
%different than the regular grid. 
\CHANGE{The full black line on the same Figure} 
 shows the performances for a random Gaussian signal of zero mean and variance 1, 
normalized to have the same energy as the smooth signal (\CHANGE{all methods collapse on the same 
black line}). As expected, random signals are dense on any 
analysis atoms, and reconstruction is comparatively poor. Moreover, we see that our proposed filterbanks really 
have an edge for signals who are smooth compared to the community structure of the underlying graph. 
On the right of Fig.~\ref{fig:compression_on_graph} are represented the performances obtained with the 
Infomap algorithm rather than the Louvain algorithm (see Section~\ref{subsec:choice_adjacency}). 
In this particular case, performances with Infomap are better. Empirically, we find that using the Louvain 
algorithm or the 
Infomap algorithm yields in general similar results.}

\ADDED{\textbf{Note on the typical size of communities:} In this Minnesota example (resp. \textit{cameraman} 
example), the typical community size of the first level of the cascade is 2 (resp. 2) for CoSub SC, 
5 (resp. 5) for EdAwCoSub SC, 80 (resp. 200) for CoSub LC, and 40 (resp. 100) for EdAwCoSub LC.}

%\ADDED{Let us build an Erdos-Renyi random graph of same number of nodes ($N=2640$) with a probability of link 
%existence of $p=N\log{N}=0.03$ (which implies it is connected with high probability). Let us create, again, 
%a ``smooth'' signal consisting of the sum of the first five graph Fourier modes; and a random graph 
%signal of same energy. We show corresponding results in Fig.~\ref{fig:compression_on_graph} (right). We see that the lack of structure 
%of the graph implies poorer reconstruction results. Also, in this case, edge-awareness makes a 
%significant difference: it gives coherent structure to an otherwise completely structureless graph. }

%\ADDED{Unfortunately, we cannot compare these results with state-of-the-art as available 
%toolboxes%~\cite{OSGFB,GrBior} 
%have 
%not implemented multi-level 
%{\Large{\textbf{***TODO***: add why we cannot compare to other methods}}}\\}

%\subsubsection{\ADDED{Reconstruction of a signal on the Minnesota graph}}

%\subsubsection{\ADDED{Reconstruction of a signal on a random graph}}

\subsection{Application in denoising, on the Minnesota traffic graph}
\label{subsec:denoising}

\begin{figure}
  \centering
  \begin{minipage}{.32\linewidth}
   \centerline{a) \includegraphics[width=\textwidth]{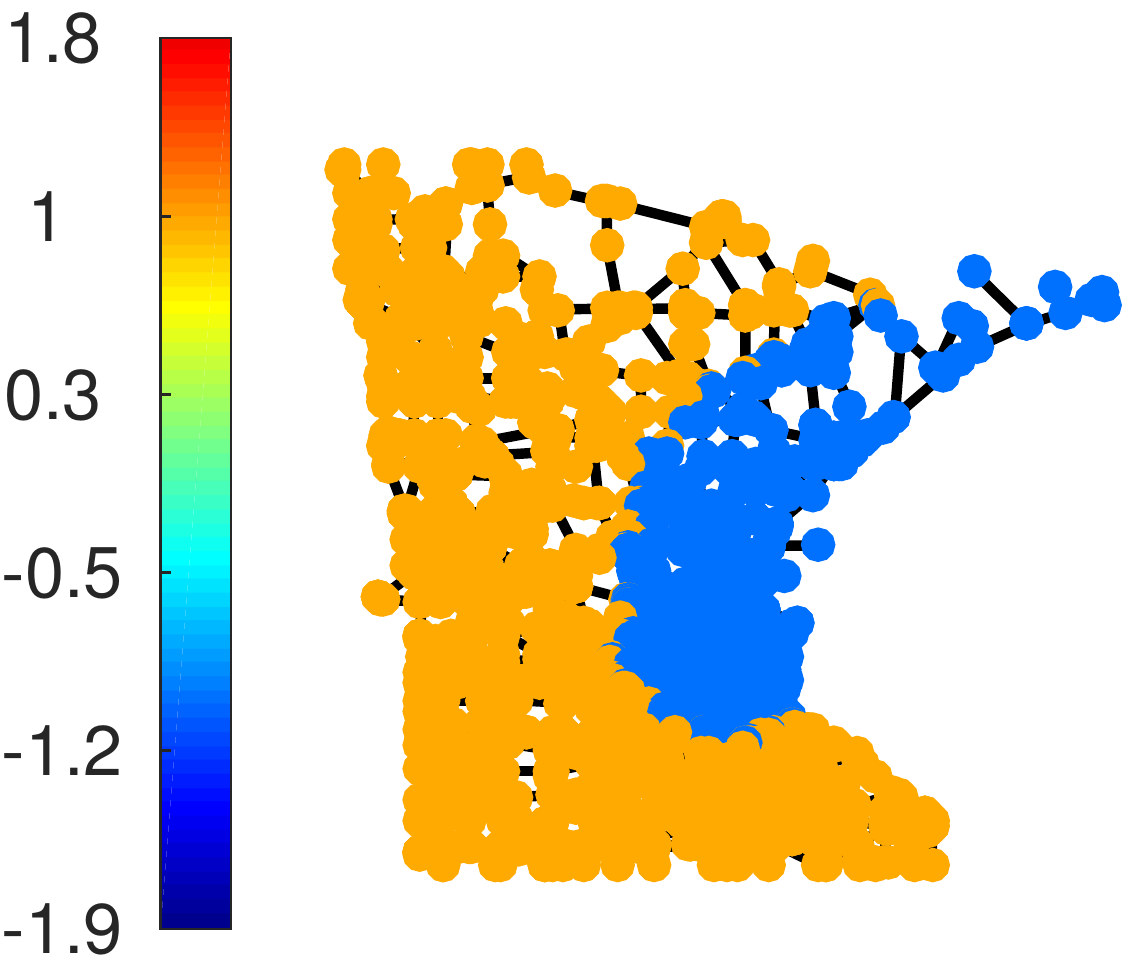}}
  \centerline{d) \includegraphics[width=\textwidth]{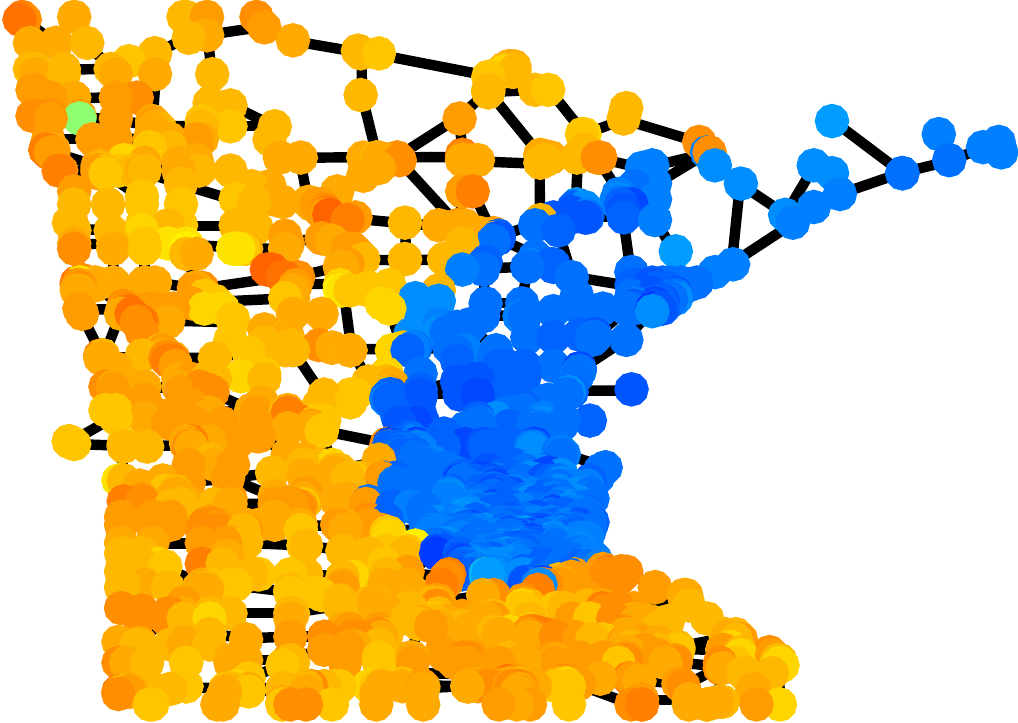}}
  \end{minipage}
  %\hfill
\begin{minipage}{.32\linewidth}
  \centerline{b) \includegraphics[width=\textwidth]{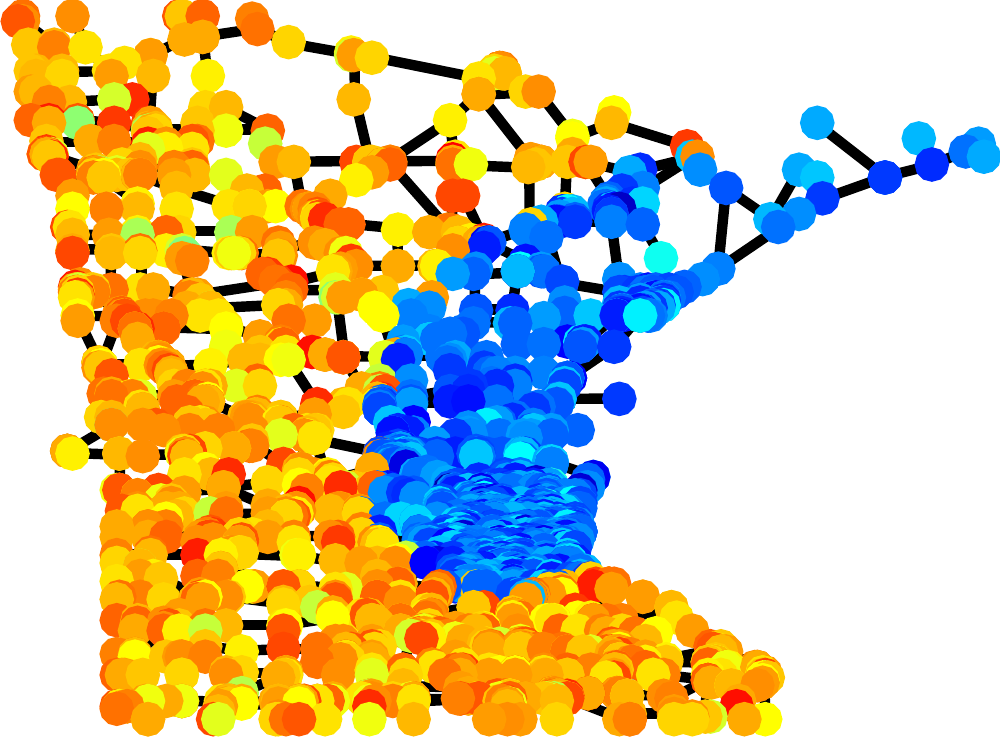}}
   \centerline{e) \includegraphics[width=\textwidth]{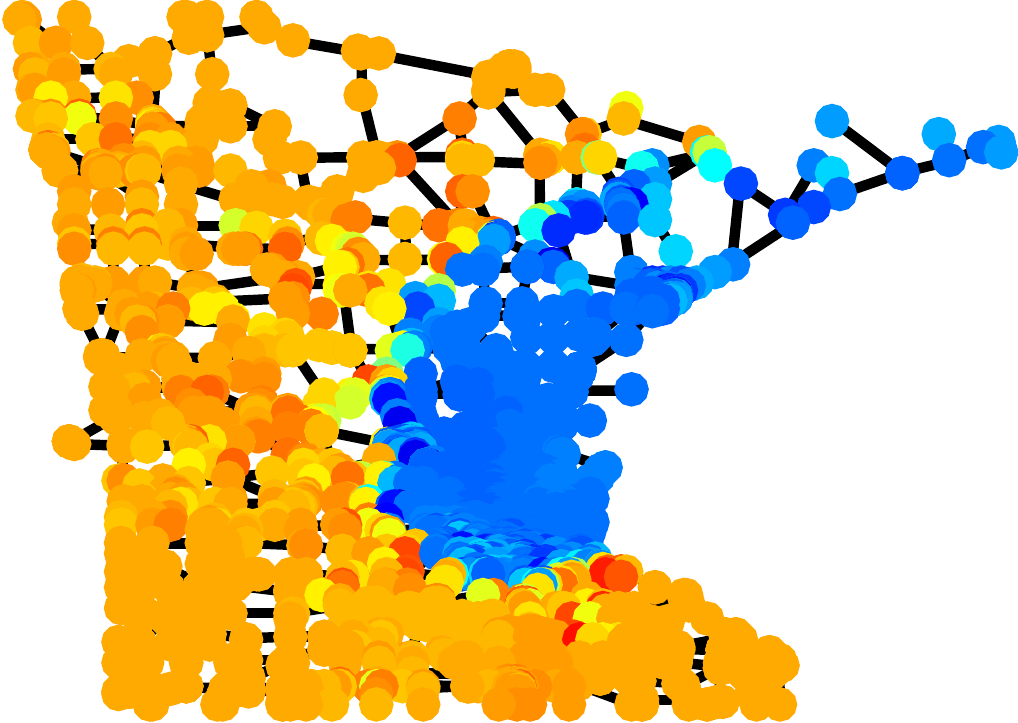}}    
  \end{minipage}
  \begin{minipage}{.32\linewidth}
   \centerline{c) \includegraphics[width=\textwidth]{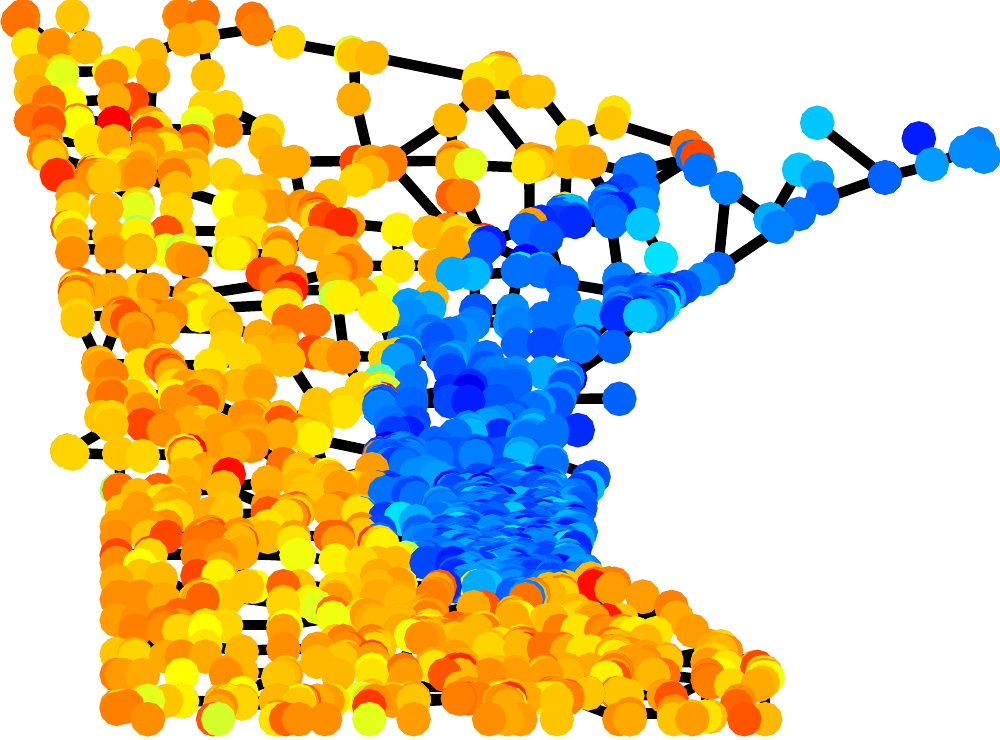}}
  \centerline{f) \includegraphics[width=\textwidth]{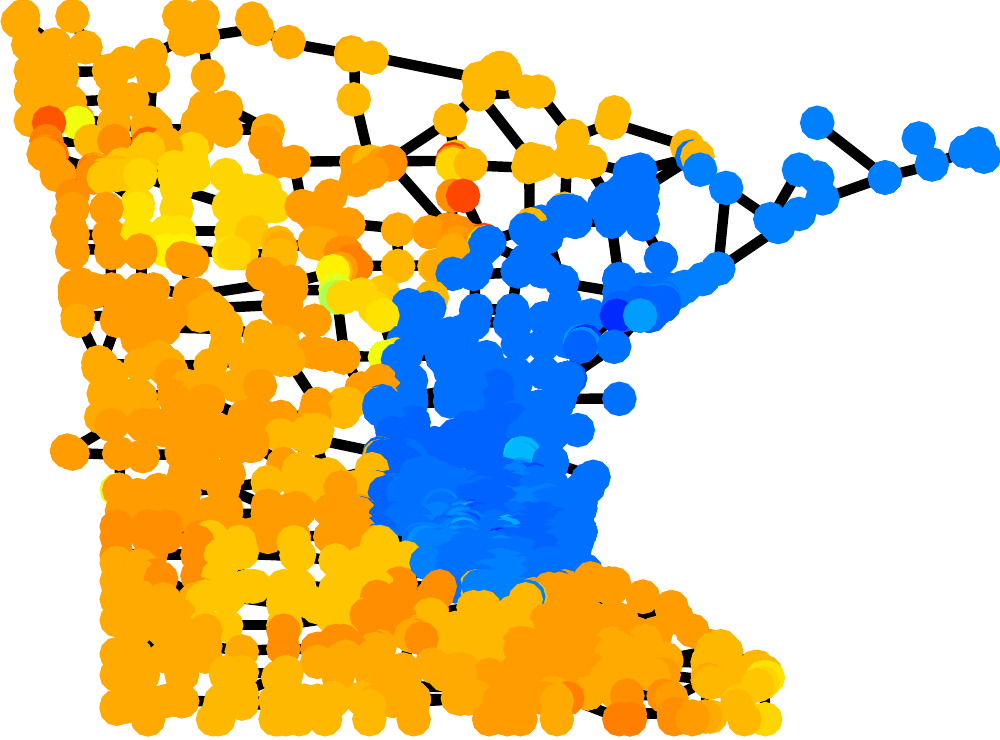}}
  \end{minipage}
  
 %~~~~~~~~~~~a)\hfill b)\hfill c)\hfill d)\hfill e)~~~~~~~~~~~~~~~~~~~~~~~~~
    \caption{a) Piece-wise constant signal on the Minnesota traffic graph \MODIF{(only 2 values: $\pm 1$)}, 
    and b) its corrupted version with an additive Gaussian noise of std $\sigma=1/4$. The 
    \MODIF{four} other figures are denoised signals after a one-level analysis and hard-thresholding 
    all high-pass coefficients with $T=3\sigma$. 
    c)~\CHANGE{EdAwGrBior}; d) \CHANGE{EdAw Lap. Pyr.}; e) CoSub LC; f) EdAwCoSub LC.}
    \label{fig:denoising}
\end{figure}

\begin{figure}
  \centering
  \begin{minipage}{.32\linewidth}
   \centerline{a) \includegraphics[width=\textwidth]{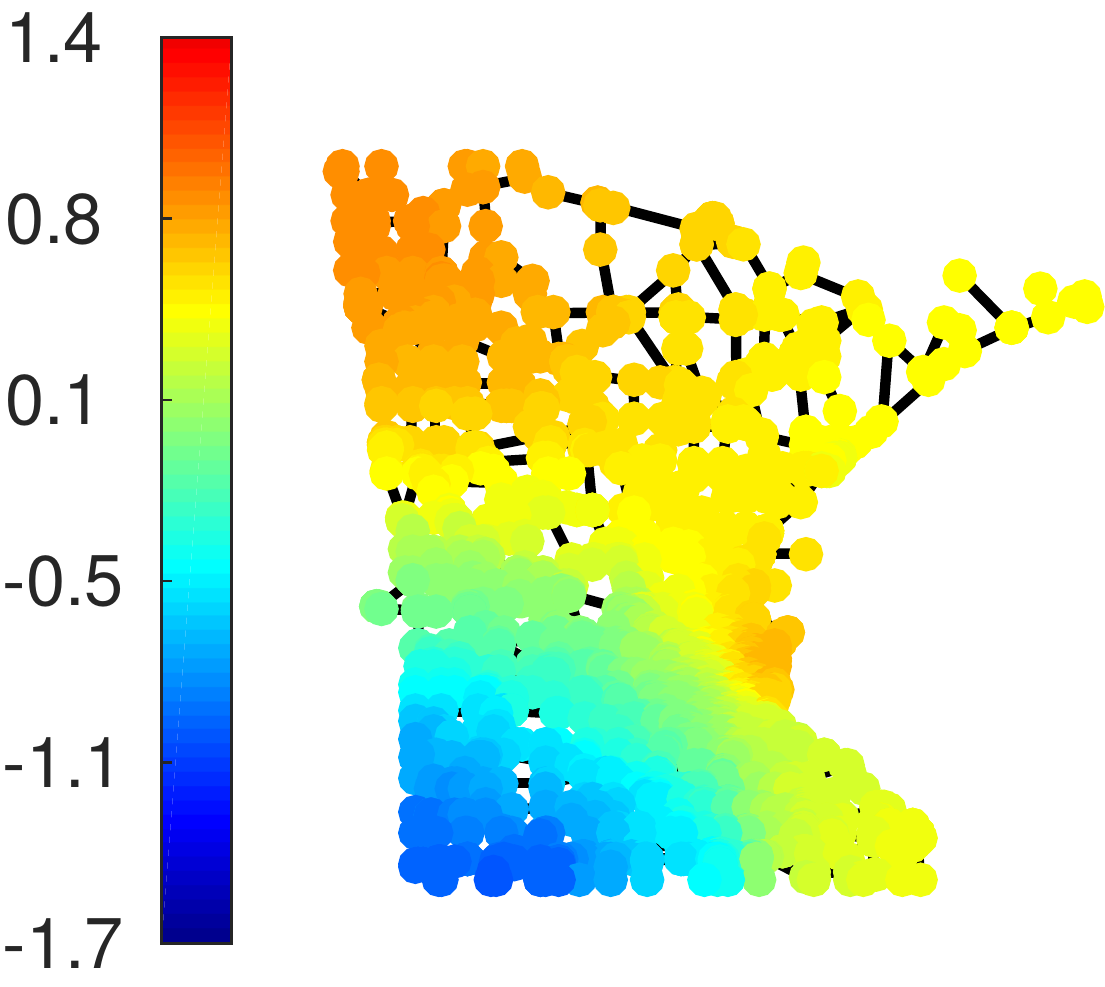}}
  \centerline{d) \includegraphics[width=\textwidth]{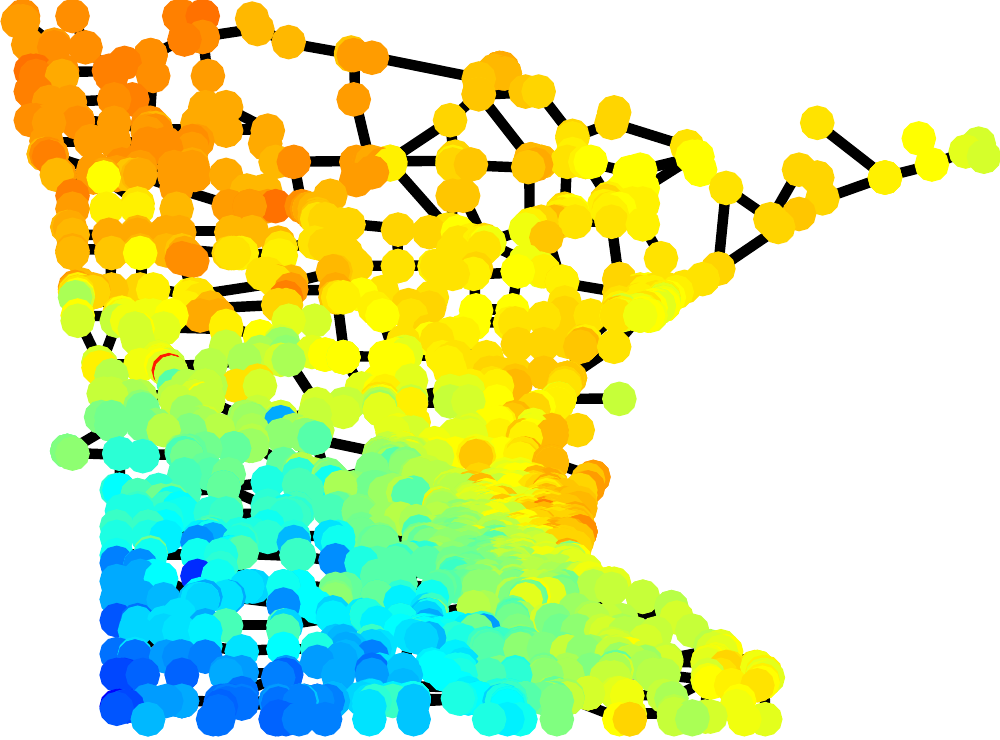}}
  \end{minipage}
  %\hfill
\begin{minipage}{.32\linewidth}
  \centerline{b) \includegraphics[width=\textwidth]{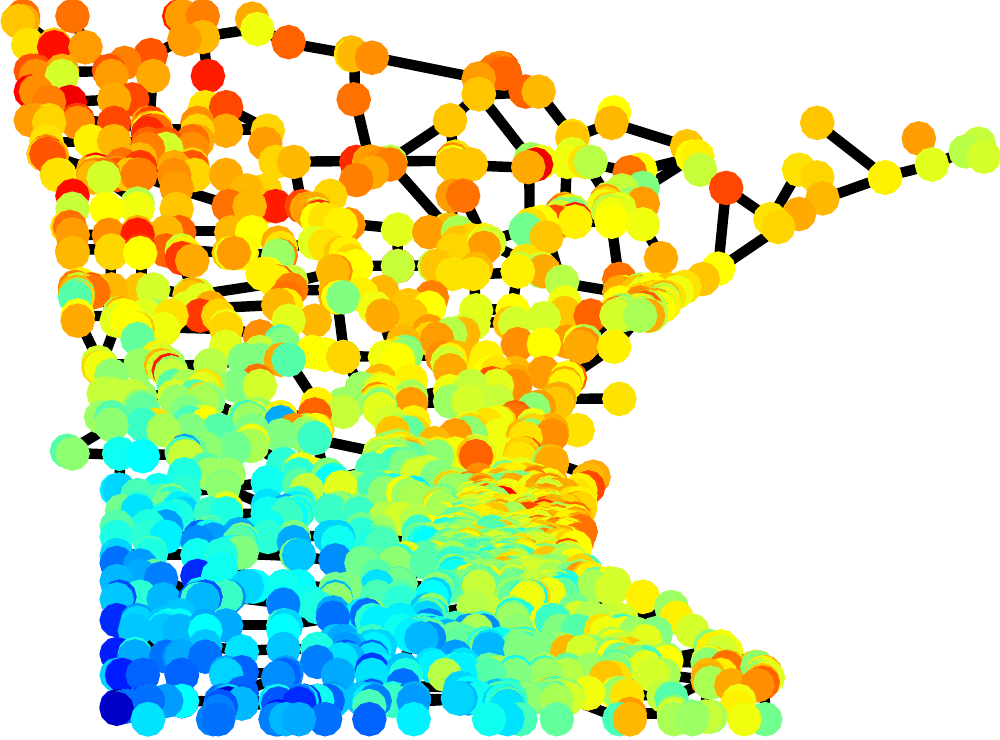}}
   \centerline{e) \includegraphics[width=\textwidth]{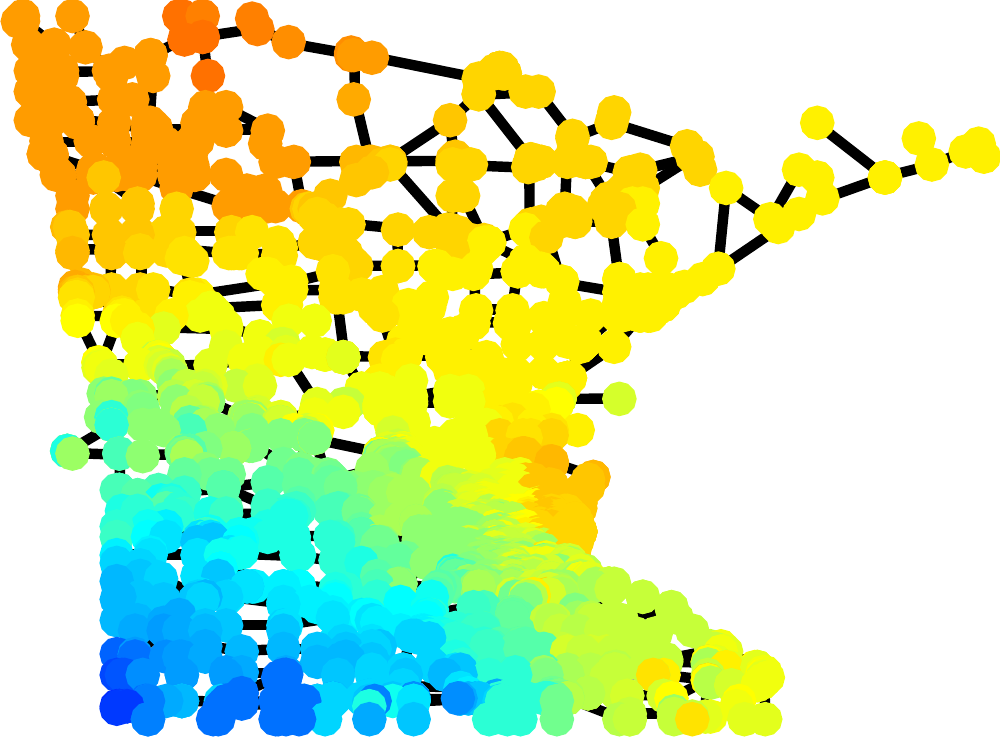}}    
  \end{minipage}
  \begin{minipage}{.32\linewidth}
   \centerline{c) \includegraphics[width=\textwidth]{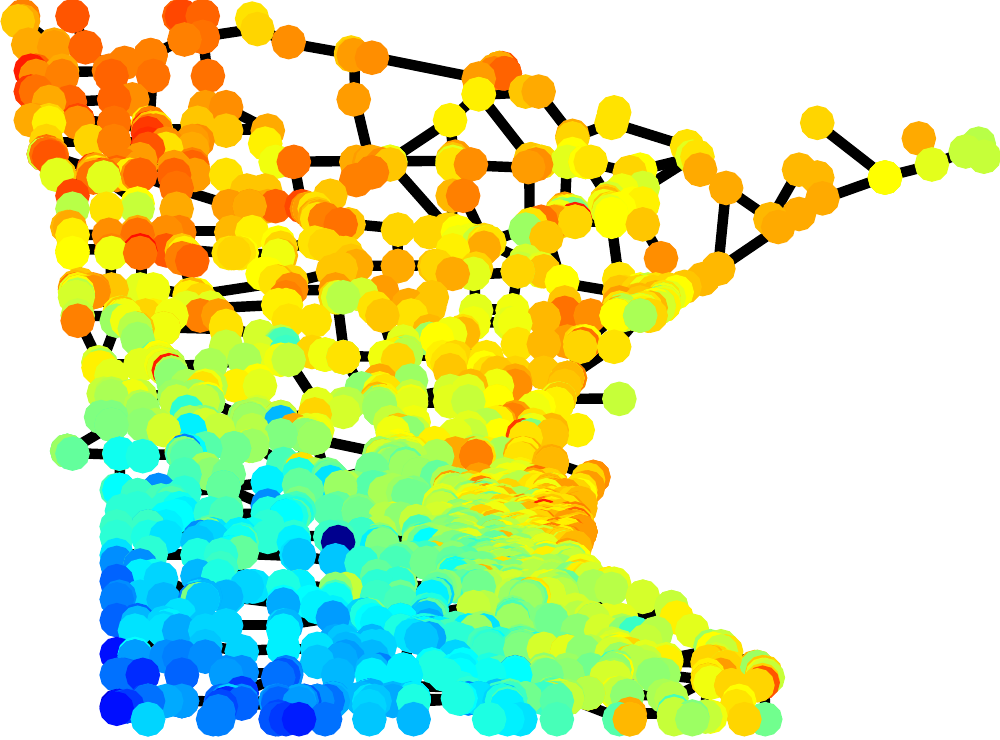}}
  \centerline{f) \includegraphics[width=\textwidth]{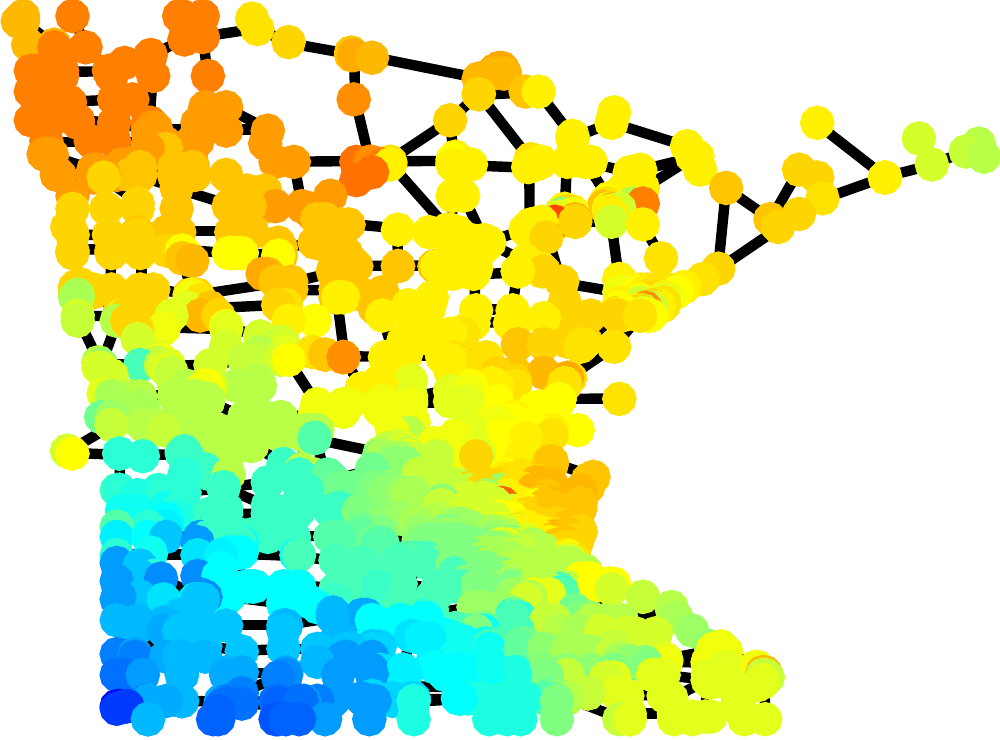}}
  \end{minipage}
  
 %~~~~~~~~~~~a)\hfill b)\hfill c)\hfill d)\hfill e)~~~~~~~~~~~~~~~~~~~~~~~~~
    \caption{\ADDED{a) Smooth signal on the Minnesota traffic graph (same as in 
    Fig.~\ref{fig:minnesota}), 
  and b) its corrupted version with an additive Gaussian noise of standard deviation $\sigma=1/4$. 
  The four other figures 
    are denoised signals after a one-level analysis and hard-thresholding all high-pass coefficients with $T=3\sigma$. 
    c) \CHANGE{EdAwGrBior}; d) \CHANGE{EdAw Lap. Pyr.}; e) CoSub LC; f) EdAwCoSub LC.}}
    \label{fig:denoising_15}
\end{figure}

\begin{figure}[t]
  \centering
  \includegraphics[width=0.49\textwidth]{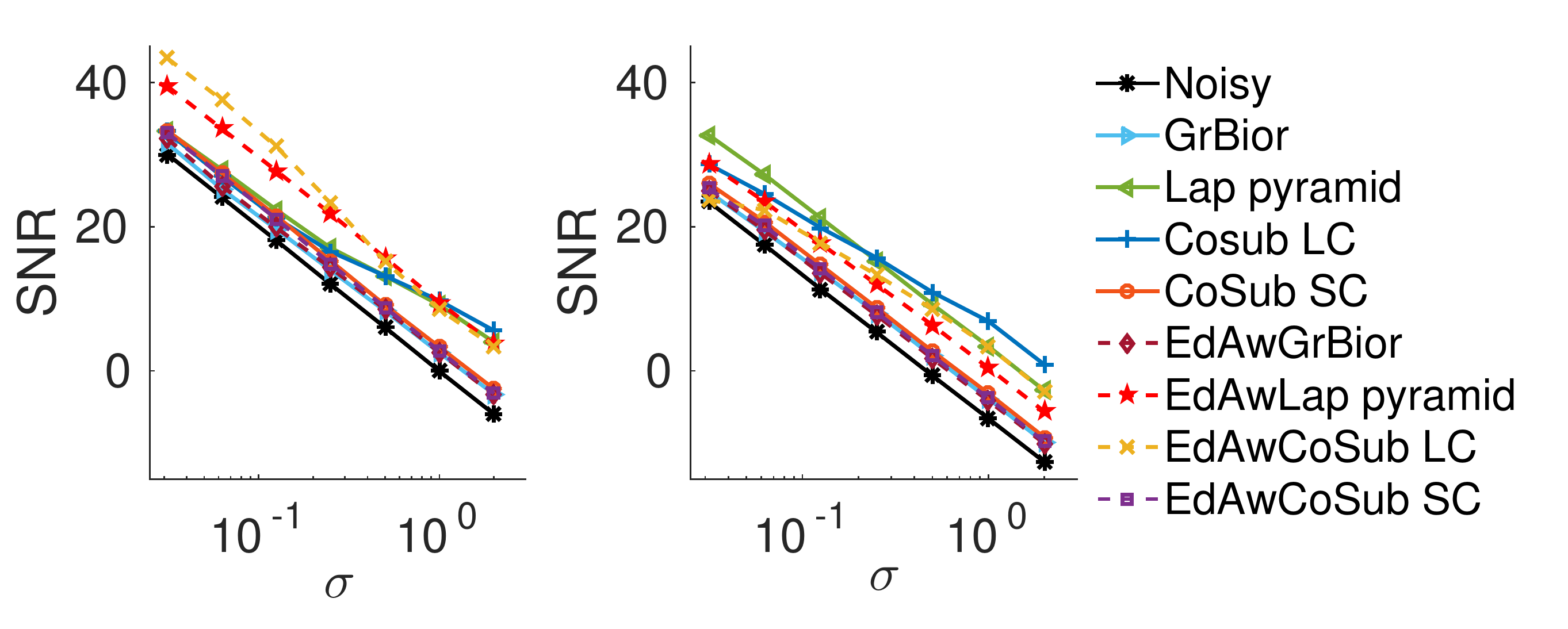}
    \caption{\ADDED{Comparison of the denoising performance for  
    (left) the piece-wise constant signal of Fig.~\ref{fig:denoising}a and 
    (right) the smooth signal of Fig.~\ref{fig:denoising_15}a; vs. the standard deviation of a Gaussian 
    corrupting noise. Both signals are normalized such that their maximum absolute value is one. Results 
    are averaged over ten realisations of the corrupting noise.}   \CHANGE{Dotted lines represent edge-aware methods. As such, they can only be compared one to another
    and not to the non-adaptative methods represented as full lines. }}
    \label{fig:denoising_results}
\end{figure}

\begin{comment}
\begin{table}
\centering
\footnotesize
 \begin{tabular}{c||c|c|c|c|c}
 $\sigma$&noisy&GrBior&CoSub SC&CoSub LC&EdAwCoSub LC\\\hline
 $1/32$ & $30.1$ & $31.9$ & 33.5 & $33.1$ & $\bm{44.4}$\\\hline
 $1/16$ & $24.2$ & $25.3$ & 27.5 & $27.0$ & $\bm{37.6}$ \\\hline
 $1/8$  & $18.2$ & $19.6$ & 21.4 & $21.2$ & $\bm{32.2}$ \\\hline
 $1/4$  & $12.0$ & $14.1$ & 15.6 & $16.5$ & $\bm{22.4}$ \\\hline
 $1/2$  & $6.0$ & $8.8$ & 9.4 & $13.3$ & $\bm{18.0}$ \\\hline
 $1$    & $-0.1$ & $2.5$ & 3.1 & $\bm{8.7}$ & $8.0$ \\\hline
 $2$    & $-6.0$ & $-3.5$ & -2.7 & $\bm{5.0}$ & $3.1$\\\hline
 \end{tabular}\vspace{0.15cm}
 \caption{SNR versus $\sigma$ in the denoising experiment detailed in Section~\ref{subsec:denoising}.}
\label{table:denoising}
\end{table}
\end{comment}

Another application of filterbanks is denoising. We consider first a piece-wise constant graph signal 
(that has only two possible values: +1 and -1) defined 
on the Minnesota traffic graph, as shown in Fig.~\ref{fig:denoising}a; \ADDED{so as to compare 
our proposition with previously published methods.} We corrupt this signal with an additive Gaussian 
noise of standard deviation $\sigma$. 
Fig.~\ref{fig:denoising}b shows such a corrupted signal with $\sigma=1/4$. We then 
 attempt to recover the original image by computing the first level of the analysis cascade,
 and reconstructing the signal from all low-pass coefficients and thresholded high-pass coefficients having absolute value 
higher than $T=3\sigma$. 
\ADDED{In order for such a thresholding scheme to be justified for denoising, 
the energy of coefficients associated to noise should have a constant variance in all the
details after analysis. For that, we use here a $L_2$ normalization of the local Fourier modes (rather than  $L_1$) 
for this denoising experiment (see the discussion in Section~\ref{subsec:Haar_particular}).} 

\MODIF{Fig.~\ref{fig:denoising} compares results obtained with EdAwGrBior and EdAw Lap. Pyr. filterbanks to our 
proposition, for $\sigma=1/4$. Fig.~\ref{fig:denoising_results}  (left) summarizes 
SNR results for different values of $\sigma$. These results may be compared to those 
obtained by Sakiyama et al. and summarized in Table 5 of~\cite{sakiyama_TSP2014}.}
\ADDED{We study also the denoising on the smooth signal of Fig.~\ref{fig:minnesota}. Results 
are shown in  Fig.~\ref{fig:denoising_15} for $\sigma=1/4$ and summarized in Fig.~\ref{fig:denoising_results} (right) 
for different values of $\sigma$. }

%the result obtained with Graph Bior filterbanks \ADDED{(to our knowledge, the Edge-Aware Graph Bior 
%algorithm has not been defined nor implemented for general graph signals by the authors).} 
%Fig~\ref{fig:denoising}d (resp. e) 
%shows the result obtained with the CoSub SC (resp. LC) implementation. Finally, Table~\ref{table:denoising} summarizes 
%SNR results for different values of $\sigma$. Once again, this Table may also be completed by results 
%obtained by Sakiyama et al. and summarized in Table 5 of~\cite{sakiyama_TSP2014}. 
\MODIF{All four of our implementations outperform GrBior. Moreover, CoSub LC and the Laplacian pyramid 
obtain similar results; our method performing slightly better at high noise level. Edge-awareness helps 
in the case of the piece-wise constant signal and not so much for the smooth signal.}

%EdAwCoSub has a very high performance at low $\sigma$ and 
%loses progressively its advantage as $\sigma$ increases: this is because edges in the signal become increasingly fuzzy 
%as $\sigma$ increases. Interestingly, CoSub performs better than GrBior for all $\sigma$, and better than Sakiyama et al.'s 
%method for the three largest values of $\sigma$.

\section{Conclusion}
\label{sec:conclusion}
While previous methods are based on global filters defined
in the global Fourier 
space of the graph, we defined local filters based on the local Fourier spaces of each connected subgraph. 
Thanks to this paradigm, a simple form of filterbanks is designed, that one could call Haar graph filterbank. %: 
%the $k$-th value of the downsampled approximation (resp. details) is the average (resp. differences) 
%of the signal on subgraph $k$. 

\MODIF{We first illustrated this for compression on images, mainly for pedagogical and state-of-the-art comparison purposes. 
In fact, without edge-awareness, our proposition is not really appropriate for such regular structures. Edge-awareness, 
on the other hand, by giving structure to the network raises performance to the state-of-the-art. 
The improvement over existing methods becomes truly apparent for irregular graphs for which a community 
structure exists. Existence of communities is a very common, if not universal, property of real-world graphs; 
and our filterbanks rely on this particular organization of complex networks. For such graphs, 
our proposition outperforms existing ones on non-linear approximation experiments and equals state-of-the art 
on denoising experiments.} 

Within this framework, future work will concentrate on extending the local filters to more sophisticated filters, 
and on finding ways to critically sample jointly the graph structure and the graph signal defined on it.

%This simple idea is well suited to study graph signals who vary slowly 
%according to the chosen subgraph structure. Many real-world graphs hqave community structure, and in this context, 
%a natural assumption is to suppose that graph signals do not vary too much within each community. This is the 
%underlying model we supposed in our NLA and denoising experiments. 

%at least state-of-the-art performances on usual benchmarks, 
%and outperforms  many methods in image compression at high compression rates, and in simple denoising experiments. 

%An original framework for graph filterbanks is proposed in this work. Whereas all existing methods are based on the 
%``one every two nodes'' paradigm either through the sign of the highest frequency graph Fourier mode, or through 
%bipartite graph decompositions, we coarsen the graph by defining supernodes representing 
%its decomposition in connected subgraphs. Also, 

\begin{appendix}

% \begin{comment}

\subsection{\CHANGE{Analysis, synthesis and group operators on a toy example}}
\label{subsubsec:example}

\CHANGE{Consider the trivial graph composed of five nodes shown in Fig.~\ref{fig:mini_toy_graph}. 
Three of them form a closed triangle. The other two are 
connected. The triangle and the pair are connected to each other with only one link. Its adjacency matrix 
$\bm{A}$ reads:
\begin{equation}
 \bm{A}=\left[\footnotesize{\begin{array}{ccccc}
0 & \bm{1} & \bm{1} & 0 & 0\\
\bm{1} & 0 & \bm{1} & 0 & 0\\
\bm{1} & \bm{1} & 0 & \bm{1} & 0\\
0 & 0 & \bm{1} & 0 & \bm{1}\\
0 & 0 & 0 & \bm{1} & 0\\
\end{array}}\right].
\end{equation}
In this example, we consider the partition that separates the triangle (subgraph $\mathcal{G}^{(1)}$) 
from the pair (subgraph $\mathcal{G}^{(2)}$):
\begin{equation}
 \bm{c}=(1,1,1,2,2).
\end{equation}
Therefore $\Gamma^{(1)}=(1,2,3)$ is the list of the nodes composing the triangle, 
and $\Gamma^{(2)}=(4,5)$ is the list of the nodes composing the pair.
The subsampling operators associated to $\mathcal{G}^{(1)}$ and $\mathcal{G}^{(2)}$ read:
\begin{equation}
  \bm{C}^{(1)}=\left[\footnotesize{\begin{array}{ccc}
\bm{1} & 0 & 0 \\
0 & \bm{1} & 0 \\
0 & 0 & \bm{1} \\
0 & 0 & 0 \\
0 & 0 & 0 \\
\end{array}}\right], \qquad
  \bm{C}^{(2)}=\left[\footnotesize{\begin{array}{cc}
0 & 0 \\
0 & 0 \\
0 & 0 \\
\bm{1} & 0 \\
0 & \bm{1} \\
\end{array}}\right].
\end{equation}
The intra- and inter- subgraph adjacency matrices read:
\begin{equation}
 \bm{A}_{int}=\left[\footnotesize{\begin{array}{ccccc}
0 & \bm{1} & \bm{1} & 0 & 0\\
\bm{1} & 0 & \bm{1} & 0 & 0\\
\bm{1} & \bm{1} & 0 & 0 & 0\\
0 & 0 & 0 & 0 & \bm{1}\\
0 & 0 & 0 & \bm{1} & 0\\
\end{array}}\right] 
 \bm{A}_{ext}=\left[\footnotesize{\begin{array}{ccccc}
0 & 0 & 0 & 0 & 0\\
0 & 0 & 0 & 0 & 0\\
0 & 0 & 0 & \bm{1} & 0\\
0 & 0 & \bm{1} & 0 & 0\\
0 & 0 & 0 & 0 & 0\\
\end{array}}\right].\nonumber
\end{equation}
The local Laplacian operators $\bm{\mathcal{L}_{int}}^{(1)}$ and $\bm{\mathcal{L}_{int}}^{(2)}$ read:
\begin{equation}
 \bm{\mathcal{L}_{int}}^{(1)}=\left[\footnotesize{\begin{array}{ccc}
2 & -1 & -1 \\
-1 & 2 & -1 \\
-1 & -1 & 2 \\
\end{array}}\right], ~~
 \bm{\mathcal{L}_{int}}^{(2)}=\left[\footnotesize{\begin{array}{cc}
1 & -1  \\
-1 & 1  
\end{array}}\right].
\end{equation}
In the following, we choose the $L_1$ normalisation for the $\bm{Q}^{(k)}$. 
One may diagonalize $\bm{\mathcal{L}_{int}}^{(1)}$ and obtain $\bm{Q}^{(1)}$ and 
$\bm{P}^{{(1)}}$:
\begin{equation}
 \bm{Q}^{(1)}=\left[\footnotesize{\begin{array}{ccc}
1/3 & 1/2 & 1/4 \\
1/3 & -1/2 & 1/4 \\
1/3 & 0 & -1/2 
\end{array}}\right],
 \bm{P}^{(1)}=\left[\footnotesize{\begin{array}{ccc}
1 & 1 & 2/3 \\
1 & -1 & 2/3   \\
1 & 0 & -4/3   
\end{array}}\right]\nonumber
\end{equation}
as well as $\bm{\Lambda}^{(1)}=\mbox{diag}(0,3,3)$. One may also diagonalize 
$\bm{\mathcal{L}_{int}}^{(2)}$ and obtain $\bm{\Lambda}^{(2)}=\mbox{diag}(0,2)$ as well as $\bm{Q}^{(2)}$ and 
$\bm{P}^{{(2)}}$:
\begin{equation}
 \bm{Q}^{(2)}=\left[\footnotesize{\begin{array}{ccc}
1/2 & 1/2 \\
1/2 & -1/2 
\end{array}}\right],
 \bm{P}^{(2)}=\left[\footnotesize{\begin{array}{ccc}
1 & 1 \\
1 & -1
\end{array}}\right].
\end{equation}
Moreover, $N_1=3$, $N_2=2$, therefore 
there are $\tilde{N}_1=3$ analysis, synthesis and group operators: 
\begin{equation}
  \bm{\Theta}_1=\left[\footnotesize{\begin{array}{cc}
1/3 & 0 \\
1/3 & 0 \\
1/3 & 0 \\
0 & 1/2 \\
0 & 1/2 
\end{array}}\right]
  \bm{\Theta}_2=\left[\footnotesize{\begin{array}{ccccc}
1/2 & 0 \\
-1/2 & 0 \\
0 & 0 \\
0 & 1/2 \\
0 & -1/2 
\end{array}}\right]
  \bm{\Theta}_3=\left[\footnotesize{\begin{array}{ccccc}
1/4 \\
1/4 \\
-1/2 \\
0 \\
0 
\end{array}}\right]\nonumber
\end{equation}
\begin{equation}
  \bm{\Pi}_1=\left[\footnotesize{\begin{array}{cc}
1 & 0 \\
1 & 0 \\
1 & 0 \\
0 & 1 \\
0 & 1 
\end{array}}\right]
  \bm{\Pi}_2=\left[\footnotesize{\begin{array}{ccccc}
1 & 0 \\
-1 & 0 \\
0 & 0 \\
0 & 1 \\
0 & -1 
\end{array}}\right]
  \bm{\Pi}_3=\left[\footnotesize{\begin{array}{ccccc}
2/3 \\
2/3 \\
-4/3 \\
0 \\
0 
\end{array}}\right]
\end{equation}}
\begin{equation}
\CHANGE{
  \bm{\Omega}_1=\left[\footnotesize{\begin{array}{cc}
1 & 0 \\
1 & 0 \\
1 & 0 \\
0 & 1 \\
0 & 1 
\end{array}}\right]
  \bm{\Omega}_2=\left[\footnotesize{\begin{array}{ccccc}
1 & 0 \\
1 & 0 \\
1 & 0 \\
0 & 1 \\
0 & 1 
\end{array}}\right]
  \bm{\Omega}_3=\left[\footnotesize{\begin{array}{ccccc}
1 \\
1 \\
1 \\
0 \\
0 
\end{array}}\right]}
\end{equation}
\CHANGE{Here, the approximated graph's adjacency matrix reads:}
\begin{equation}
\CHANGE{ \bm{A_1}^{(1)}=\bm{\Omega}_1^\top\bm{A}_{ext}\bm{\Omega}_1=\left[\footnotesize{\begin{array}{cc}
0 & 1 \\
1 & 0 \\
\end{array}}\right].}
\end{equation}
\CHANGE{Let us add a second level of analysis where $\bm{c}^{(2)}=(1,1)$: we group together the two nodes of the 
approximated graph. The second-level approximated graph is thereby reduced to one node, and the analysis 
operators are:}
\begin{equation}
\CHANGE{
  \bm{\Theta}_1^{(2)}=\left[\footnotesize{\begin{array}{cc}
1/2 &
1/2
\end{array}}\right]^\top \mbox{   and   }~~
  \bm{\Theta}_2^{(2)}=\left[\footnotesize{\begin{array}{cc}
1/2 &
-1/2
\end{array}}\right]^\top.}
\end{equation}
\CHANGE{Therefore, the 4 detail analysis atoms read:}
\begin{equation}\CHANGE{
\begin{aligned}
\bm{\Psi}_{2}^{(1)}&=\bm{\Theta_2^{(1)}}, \bm{\Psi}_{3}^{(1)}=\bm{\Theta_3^{(1)}}~~\mbox{  and   }\\
%
%
%|\bm{\Theta_3^{(1)}}\right]=\left[\footnotesize{\begin{array}{ccc}
%1/2 & 0 &0\\
%-1/4 & 0 &1/2\\
%-1/4 & 0 &-1/2\\
%0 & 1/2 &0\\
%0 & -1/2 &0
%\end{array}}\right]
\bm{\Psi}_{2}^{(2)}&=\bm{\Theta_1^{(1)}}\times\bm{\Theta_2^{(2)}}=\left[\footnotesize{\begin{array}{ccccc}
1/6&
1/6&
1/6&
-1/4&
-1/4
\end{array}}\right]^\top.
\end{aligned}}
\end{equation}
\CHANGE{The approximation analysis atom at level $(2)$ reads:}
\begin{equation}
\CHANGE{
\bm{\Phi}^{(2)}=\bm{\Theta_1^{(1)}}\times\bm{\Theta_1^{(2)\top}}=
\footnotesize{\frac{1}{6}}\left[\footnotesize{\begin{array}{ccccc}
1 & 1 & 1 & 1 & 1
\end{array}}\right]^\top}.
\end{equation}
\CHANGE{If using a $L_2$ normalization, $\bm{\Psi}_2$ would read:}
\begin{equation}
\CHANGE{
\bm{\Psi}_2^{(2)}=
\left[\footnotesize{\begin{array}{ccccc}
1/\sqrt{6} & 1/\sqrt{6} & 1/\sqrt{6} & -1/2 & -1/2
\end{array}}\right]^\top.}
\end{equation}

%\end{comment}

\begin{figure}
\centering
 \includegraphics[width=0.13\textwidth]{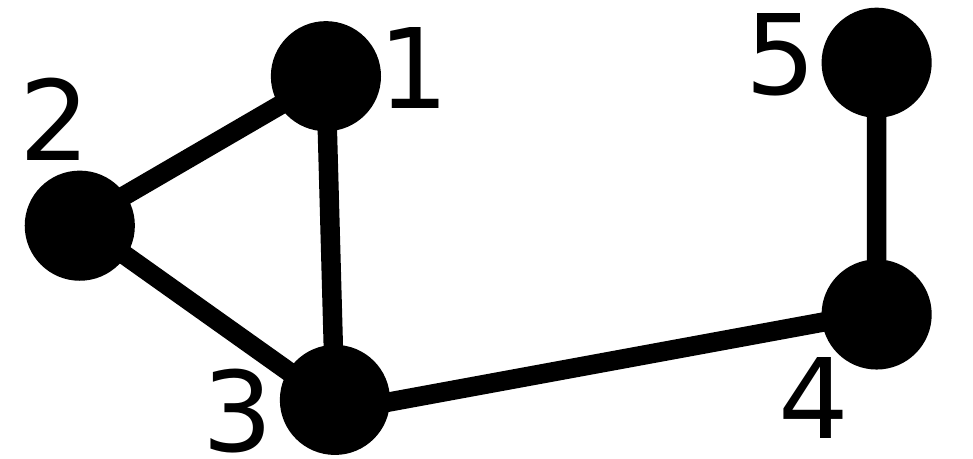}
 \caption{\CHANGE{A simple toy graph to illustrate our operators and notations.}} 
\label{fig:mini_toy_graph}
\end{figure}

\subsection{\ADDED{A proposition for uniqueness of the operators}} 
\label{sec:uniqueness}
\ADDED{
To enforce uniqueness of the graph Fourier basis in the case of eigenvalue $\lambda$ having
multiplicity $m>1$, a possible rule can be set as follows.
Consider the first  vector of this eigenspace. All we know is its orthogonality with vectors of all other eigenspaces, 
i.e. $N-m$ vectors. 
We decide to arbitrarily force its last $m$ coefficients to zero, and then find the unique
$N-m$ coefficients  that respects orthogonality with other known vectors and proper normalization. 
Note that, if at least one of the vectors of the other eigenspaces 
have non-zero values only on these last $m$ coefficients, we then look for the set 
of $m$ coefficients closest possible to the last one such that uniqueness is guaranteed.
For the second vector, it has to be orthogonal to the already decided $N-m+1$ vectors: 
we arbitrarily force its $m-1$ last coefficients to zero and 
find the unique set of its set of coefficients thanks to orthogonality. And so on and so forth up
to the multiplicity $m$.}

\ADDED{Note that, for practical implementations, classical functions for eigenvector computation
(for instance \texttt{eig} or \texttt{svd} in Matlab) empirically output the same choice of eigenvectors
when run on two \textit{exactly} identical inputs, even when there are eigenvalues with multiplicity.
}
\end{appendix}

{\footnotesize
\bibliographystyle{IEEEtran}
\bibliography{biblio.bib}}

% Generated by IEEEtran.bst, version: 1.13 (2008/09/30)
\begin{thebibliography}{10}
\providecommand{\url}[1]{#1}
\csname url@samestyle\endcsname
\providecommand{\newblock}{\relax}
\providecommand{\bibinfo}[2]{#2}
\providecommand{\BIBentrySTDinterwordspacing}{\spaceskip=0pt\relax}
\providecommand{\BIBentryALTinterwordstretchfactor}{4}
\providecommand{\BIBentryALTinterwordspacing}{\spaceskip=\fontdimen2\font plus
\BIBentryALTinterwordstretchfactor\fontdimen3\font minus
  \fontdimen4\font\relax}
\providecommand{\BIBforeignlanguage}[2]{{%
\expandafter\ifx\csname l@#1\endcsname\relax
\typeout{** WARNING: IEEEtran.bst: No hyphenation pattern has been}%
\typeout{** loaded for the language `#1'. Using the pattern for}%
\typeout{** the default language instead.}%
\else
\language=\csname l@#1\endcsname
\fi
#2}}
\providecommand{\BIBdecl}{\relax}
\BIBdecl

\bibitem{ASILOMAR2015_CoSub}
N.~Tremblay and P.~Borgnat, ``Joint filtering of graph and graph-signals,'' in
  \emph{proceedings of the Asilomar Conference on Signals, Systems, and
  Computers}, Nov 2015.

\bibitem{newman_book2010}
M.~Newman, \emph{Networks: an introduction}.\hskip 1em plus 0.5em minus
  0.4em\relax Oxford University Press, 2010.

\bibitem{shuman_SPMAG2013}
D.~Shuman, S.~Narang, P.~Frossard, A.~Ortega, and P.~Vandergheynst, ``The
  emerging field of signal processing on graphs: Extending high-dimensional
  data analysis to networks and other irregular domains,'' \emph{Signal
  Processing Magazine, IEEE}, vol.~30, no.~3, pp. 83--98, May 2013.

\bibitem{sandryhaila_SPMAG2014}
A.~Sandryhaila and J.~Moura, ``Big data analysis with signal processing on
  graphs: Representation and processing of massive data sets with irregular
  structure,'' \emph{Signal Processing Magazine, IEEE}, vol.~31, no.~5, pp.
  80--90, Sept 2014.

\bibitem{hammond_ACHA2011}
D.~Hammond, P.~Vandergheynst, and R.~Gribonval, ``Wavelets on graphs via
  spectral graph theory,'' \emph{Applied and Computational Harmonic Analysis},
  vol.~30, no.~2, pp. 129--150, 2011.

\bibitem{sandryhaila_TSP2013}
A.~Sandryhaila and J.~Moura, ``Discrete signal processing on graphs,''
  \emph{Signal Processing, IEEE Transactions on}, vol.~61, no.~7, pp.
  1644--1656, April 2013.

\bibitem{sandryhaila_ICASSP2013}
------, ``Discrete signal processing on graphs: Graph fourier transform,'' in
  \emph{Acoustics, Speech and Signal Processing (ICASSP), 2013 IEEE
  International Conference on}, May 2013, pp. 6167--6170.

\bibitem{anis_arxiv2015_long}
A.~Anis, A.~Gadde, and A.~Ortega, ``Efficient sampling set selection for
  bandlimited graph signals using graph spectral proxies,'' \emph{ArXiv CoRR},
  vol. abs/1510.00297, 2015.

\bibitem{chen_TSP2015}
S.~Chen, R.~Varma, A.~Sandryhaila, and J.~Kovacevic, ``Discrete signal
  processing on graphs: Sampling theory,'' \emph{Signal Processing, IEEE
  Transactions on}, vol.~63, no.~24, pp. 6510--6523, Dec 2015.

\bibitem{wang_TSP2015}
X.~Wang, P.~Liu, and Y.~Gu, ``Local-set-based graph signal reconstruction,''
  \emph{Signal Processing, IEEE Transactions on}, vol.~63, no.~9, pp.
  2432--2444, May 2015.

\bibitem{Puy_ARXIV2015}
G.~Puy, N.~Tremblay, R.~Gribonval, and P.~Vandergheynst, ``Random sampling of
  bandlimited signals on graphs,'' \emph{ArXiv CoRR}, vol. abs/1511.05118,
  2015.

\bibitem{Chen_arxiv2016}
S.~Chen, R.~Varma, A.~Singh, and J.~Kovacevic, ``Signal recovery on graphs:
  Fundamental limits of sampling strategies,'' \emph{arXiv}, vol.
  abs/1512.05405, 2015.

\bibitem{shuman_SSP2012}
D.~Shuman, B.~Ricaud, and P.~Vandergheynst, ``A windowed graph fourier
  transform,'' in \emph{Statistical Signal Processing Workshop (SSP), 2012
  IEEE}, Aug 2012, pp. 133--136.

\bibitem{Shuman_ACHA2016}
D.~I. Shuman, B.~Ricaud, and P.~Vandergheynst, ``Vertex-frequency analysis on
  graphs,'' \emph{Applied and Computational Harmonic Analysis}, vol.~40, no.~2,
  pp. 260 -- 291, Mar. 2016.

\bibitem{tremblay_EUSIPCO2013}
N.~Tremblay, P.~Borgnat, and P.~Flandrin, ``Graph empirical mode
  decomposition,'' in \emph{Signal Processing Conference (EUSIPCO), 2014
  Proceedings of the 22nd European}, Sept 2014, pp. 2350--2354.

\bibitem{shuman_TSP2015}
D.~Shuman, C.~Wiesmeyr, N.~Holighaus, and P.~Vandergheynst, ``Spectrum-adapted
  tight graph wavelet and vertex-frequency frames,'' \emph{Signal Processing,
  IEEE Transactions on}, vol.~63, no.~16, pp. 4223--4235, Aug 2015.

\bibitem{leonardi_TSP2013}
N.~Leonardi and D.~Van De~Ville, ``Tight wavelet frames on multislice graphs,''
  \emph{Signal Processing, IEEE Transactions on}, vol.~61, no.~13, pp.
  3357--3367, July 2013.

\bibitem{coifman_ACHA2006}
R.~Coifman and M.~Maggioni, ``Diffusion wavelets,'' \emph{Applied and
  Computational Harmonic Analysis}, vol.~21, no.~1, pp. 53--94, 2006.

\bibitem{narang_TSP2012}
S.~Narang and A.~Ortega, ``Perfect reconstruction two-channel wavelet filter
  banks for graph structured data,'' \emph{Signal Processing, IEEE Transactions
  on}, vol.~60, no.~6, pp. 2786--2799, June 2012.

\bibitem{sakiyama_TSP2014}
A.~Sakiyama and Y.~Tanaka, ``Oversampled graph {L}aplacian matrix for graph
  filter banks,'' \emph{Signal Processing, IEEE Transactions on}, vol.~62,
  no.~24, pp. 6425--6437, Dec 2014.

\bibitem{nguyen_TSP2015}
H.~Nguyen and M.~Do, ``Downsampling of signals on graphs via maximum spanning
  trees,'' \emph{Signal Processing, IEEE Transactions on}, vol.~63, no.~1, pp.
  182--191, Jan 2015.

\bibitem{ekambaram_GLOBALSIP2013}
V.~Ekambaram, G.~Fanti, B.~Ayazifar, and K.~Ramchandran, ``Critically-sampled
  perfect-reconstruction spline-wavelet filterbanks for graph signals,'' in
  \emph{Global Conference on Signal and Information Processing (GlobalSIP),
  2013 IEEE}, Dec 2013, pp. 475--478.

\bibitem{behjat_EMBC2014}
H.~Behjat, N.~Leonardi, L.~Sornmo, and D.~Van De~Ville, ``Canonical cerebellar
  graph wavelets and their application to fmri activation mapping,'' in
  \emph{Engineering in Medicine and Biology Society (EMBC), 36th Annual
  International Conference of the IEEE}, Aug 2014, pp. 1039--1042.

\bibitem{tremblay_TSP2014}
N.~Tremblay and P.~Borgnat, ``Graph wavelets for multiscale community mining,''
  \emph{Signal Processing, IEEE Transactions on}, vol.~62, no.~20, pp.
  5227--5239, Oct 2014.

\bibitem{narang_SSP2012}
S.~Narang, Y.~H. Chao, and A.~Ortega, ``Graph-wavelet filterbanks for
  edge-aware image processing,'' in \emph{Statistical Signal Processing
  Workshop (SSP), 2012 IEEE}, Aug 2012, pp. 141--144.

\bibitem{narang_TSP2013}
S.~Narang and A.~Ortega, ``Compact support biorthogonal wavelet filterbanks for
  arbitrary undirected graphs,'' \emph{Signal Processing, IEEE Transactions
  on}, vol.~61, no.~19, pp. 4673--4685, Oct 2013.

\bibitem{shuman_ARXIV2013}
D.~Shuman, M.~Faraji, and P.~Vandergheynst, ``A multiscale pyramid transform
  for graph signals,'' \emph{Signal Processing, IEEE Transactions on}, vol.~PP,
  no.~99, pp. 1--1, 2016.

\bibitem{strang_book1996}
G.~Strang and T.~Nguyen, \emph{Wavelets and filter banks}.\hskip 1em plus 0.5em
  minus 0.4em\relax SIAM, 1996.

\bibitem{fortunato_PhyRep2010}
S.~Fortunato, ``Community detection in graphs,'' \emph{Physics Reports}, vol.
  486, no. 3-5, pp. 75--174, 2010.

\bibitem{von2007tutorial}
U.~Von~Luxburg, ``A tutorial on spectral clustering,'' \emph{Statistics and
  computing}, vol.~17, no.~4, pp. 395--416, 2007.

\bibitem{gavish2010multiscale}
M.~Gavish, B.~Nadler, and R.~R. Coifman, ``Multiscale wavelets on trees, graphs
  and high dimensional data: Theory and applications to semi supervised
  learning,'' in \emph{Proceedings of the 27th International Conference on
  Machine Learning (ICML-10)}, 2010, pp. 367--374.

\bibitem{murtagh_2007}
F.~Murtagh, ``The {H}aar wavelet transform of a dendrogram,'' \emph{Journal of
  Classification}, vol.~24, no.~1, pp. 3--32, 2008.

\bibitem{Lee_2008}
L.~W. Ann B.~Lee, Boaz~Nadler, ``Treelets: An adaptive multi-scale basis for
  sparse unordered data,'' \emph{The Annals of Applied Statistics}, vol.~2,
  no.~2, pp. 435--471, 2008.

\bibitem{irion2015applied}
J.~Irion and N.~Saito, ``Applied and computational harmonic analysis on graphs
  and networks,'' in \emph{SPIE Optical Engineering+ Applications}.\hskip 1em
  plus 0.5em minus 0.4em\relax International Society for Optics and Photonics,
  2015.

\bibitem{chung_book1997}
F.~Chung, \emph{Spectral graph theory}.\hskip 1em plus 0.5em minus 0.4em\relax
  Amer Mathematical Society, 1997, no.~92.

\bibitem{aspvall_SIAM1984}
B.~Aspvall and J.~R. Gilbert, ``Graph coloring using eigenvalue
  decomposition,'' \emph{SIAM Journal on Algebraic Discrete Methods}, vol.~5,
  no.~4, pp. 526--538, 1984.

\bibitem{dorfler_TSPCS2013}
F.~Dorfler and F.~Bullo, ``Kron reduction of graphs with applications to
  electrical networks,'' \emph{Circuits and Systems I: Regular Papers, IEEE
  Transactions on}, vol.~60, no.~1, pp. 150--163, Jan 2013.

\bibitem{karypis_SIAM1998}
G.~Karypis and V.~Kumar, ``A fast and high quality multilevel scheme for
  partitioning irregular graphs,'' \emph{SIAM Journal on Scientific Computing},
  vol.~20, no.~1, pp. 359--392, 1998.

\bibitem{teng_bookchapter1999}
S.-H. Teng, ``Coarsening, sampling, and smoothing: Elements of the multilevel
  method,'' in \emph{Algorithms for Parallel Processing}, 1999, vol. 105, pp.
  247--276.

\bibitem{band2008}
R.~Band, I.~Oren, and U.~Smilansky, ``Nodal domains on graphs - how to count
  them and why?'' \emph{Analysis on Graphs and its applications Proc. Symp.
  Pure Math.}, 2008.

\bibitem{blondel_Jstatmech2008}
V.~Blondel, J.~Guillaume, R.~Lambiotte, and E.~Lefebvre, ``Fast unfolding of
  communities in large networks,'' \emph{Journal of Statistical Mechanics:
  Theory and Experiment}, vol. 2008, no.~10, p. P10008, 2008.

\bibitem{Rosvall_plos2011}
M.~Rosvall and C.~T. Bergstrom, ``Multilevel compression of random walks on
  networks reveals hierarchical organization in large integrated systems,''
  \emph{PLoS ONE}, vol.~6, no.~4, p. e18209, Apr. 2011.

\end{thebibliography}

%\bibliography{/home/ntrembla/Documents/postdoc_Rennes/new_biblio/new_biblio.bib}}
%/home/ntrembla/Documents/postdoc_Lyon/filterbanks/paper/biblio.bib}}

%\begin{biography}[{\includegraphics[width=1in,height=1.25in,clip,keepaspectratio]{borgnat10}}]{Pierre Borgnat} bio here.
%\end{biography}

\end{document}